%
%
%
\documentclass[a4paper]{amsart}
\usepackage{amssymb}
\usepackage{mathrsfs} 
\usepackage{tikz}
\newtheorem{theorem}{Theorem}[section]
\newtheorem{lemma}[theorem]{Lemma}
\newtheorem{cor}[theorem]{Corollary}
\newtheorem{prop}[theorem]{Proposition}
\theoremstyle{definition}
\newtheorem{definition}[theorem]{Definition}

\theoremstyle{remark}
\newtheorem{remark}[theorem]{Remark}

\numberwithin{equation}{section}
\newcommand{\dontprint}[1]{\relax}
\begin{document}

\title{Regularization of divergent integrals} 
\author{Giovanni Felder}
\address{Department of mathematics,
ETH Zurich, 8092 Zurich, Switzerland}
\email{felder@math.ethz.ch}
\author{David Kazhdan}
\address{Einstein Institute of Mathematics,
The Hebrew University of Jerusalem,
Jerusalem 91904, Israel}
\email{kazhdan@math.huji.ac.il}

\subjclass[2010]{Primary 58C35, 40A10; Secondary 81T15}

\dedicatory{Dedicated to Sasha Beilinson the occasion of his 60th
  birthday}

\begin{abstract} We study the Hadamard finite part of divergent
  integrals of differential forms with singularities on
  submanifolds. We give formulae for the dependence of the finite part
  on the choice of regularization and express them in terms of a
  suitable local residue map. The cases where the submanifold is a
  complex hypersurface in a complex manifold and where it is a
  boundary component of a manifold with boundary, arising in string
  perturbation theory, are treated in more detail.
\end{abstract}

\maketitle
\tableofcontents
\section{Introduction}
Trying to make sense of integrals that are not absolutely convergent
is an old endeavour in mathematics and physics, that, despite its
apparent meaninglessness, has been surprisingly fruitful and useful in
many subjects. Hadamard defined the {\em finite part} of a divergent
integral by first introducing a cutoff, namely by integrating over the
complement of a small neighbourhood of the singular set of the
integrand, say of size $\epsilon$, and then letting $\epsilon$ tend to
zero, after subtraction of divergent terms, see \cite{Hadamard}, Book
III, Chapter I.  His motivation was to give a meaning to formal
solutions of differential equations, integrals that would be solutions
if one were allowed to differentiate under the integral sign. This
sort of questions as well as the related questions on the asymptotic
behaviour of level set integrals, were also one of the motivations for
the theory of generalized functions, see \cite{GelfandShilov}, Section
II.4. \dontprint{There the close relation between the asymptotic behaviour of
level set integrals of a function and the meromorphic dependence on
the exponent of integrals of a complex power of the function was recognized
and was one of the starting points of the theory of $D$-modules, see
\cite{Bernstein}.}

The same strategy of subtracting infinities from divergent integrals
was followed by Feynman to make sense of divergent integrals in
perturbative quantum field theory. Renormalised integrals are obtained
from divergent integrals by first regularizing them, by introducing a
cutoff as above or by a similar procedure, and then removing the
cutoff after subtracting divergent terms, see, e.g., Chapter 8 of
\cite{ItzyksonZuber}. In this context the question of dependence on
the choice of cutoff arises, and and the goal is to show that the
final results for physical quantities such as scattering amplitudes,
are independent of the choice of regularization. Since the Feynman
integrals are integrals over a Euclidean space, it is natural to
restrict to regularizations that use this structure, such as cutting
off points at distance $<\!\epsilon$ from the integration region. The
question becomes more subtle in perturbative string theory, where
Feynman integrals are replaced by integrals over moduli spaces of
curves. They can be interpreted as integrals of singular differential
forms on the Deligne--Mumford compactification of moduli spaces with
singularities on the compactification divisor. Again the integrals are
defined by cutting off a small neighbourhood of the divisor and study
the asymptotic behaviour as the size of the neighbourhood tends to
zero. Since there is no natural way to choose the family of shrinking
neighbourhoods, the question of dependence on the regularization is
subtle. In fact, in the case of superstrings the limit as the size
goes to zero exists without subtracting divergent terms, but it
depends on the regularization in a calculable way, showing that
changes of regularization can be absorbed into redefinition of the
coupling constants, see \cite{Witten}, Section 7.

Inspired by these calculations in superstring theory, in
\cite{FelderKazhdan2016} we considered integrals of products of
holomorphic and antiholomorphic differential forms on complex
manifolds with poles on hypersurfaces. In the case where the
antiholomorphic form has a simple pole, we gave formulae for the
dependence on the choice of cutoff function, generalizing a
calculation of \cite{Witten}. To treat the general case it is useful
to consider a more general setting, which is the approach of this
paper.

We consider integrals of differential forms on an oriented
$n$-dimensional manifold $X$ that are singular on a submanifold
$Y$. The kind of singularities we allow are determined by a {\em
  conformal class of nonnegative Morse--Bott functions} vanishing on
$Y$: a nonnegative Morse--Bott function with zero set $Y$ is a
nonnegative function $\mu$ on $X$ vanishing exactly on $Y$ with
non-degenerate Hessian in the normal direction. Given such a
Morse--Bott function $\mu$ we consider the space $\mathcal A_\mu(X)$
of differential forms $\omega$ on $X\smallsetminus Y$ such that, for
some integer $N\geq0$, $\mu^N\omega$ extends smoothly to $X$. Clearly
$\mathcal A_\mu(X)=\mathcal A_{f\mu}(X)$ for any positive smooth
function $f$ on $X$, so that only the conformal class of $\mu$, consisting of
all $f\mu$ with $f$ everywhere positive, plays a role. An important
special case is when $Y$ is a hypersurface in a complex manifold $X$,
the setting of \cite{FelderKazhdan2016}. In this case we have a
canonical conformal class of Morse--Bott functions, consisting of
nonnegative Morse--Bott functions locally divisible by $|f|^2$ for any
holomorphic function $f$ with a simple zero on $Y$.

Returning to the general case, we wish to give a meaning to the
divergent integral $\int_X\omega$ of a top differential form
$\omega\in\mathcal A_\mu^{\mathrm{dim}\,X}(X)$ whose support has compact closure in
$X$.  For small $\epsilon>0$, the inequality $\mu<\epsilon^2$ defines
a tubular neighbourhood of $Y$ and the integral over its complement is
well-defined.  It is then not difficult to see that, as
$\epsilon\to 0$,
\[
  \int_{\mu\geq\epsilon^2}\omega=\sum_{k=1}^{2N-m}I_{-k}\epsilon^{-k}+
  I_0\log\frac1\epsilon+I_{\mathrm{finite}}+O(\epsilon),
\]
where $m$ is the codimension of the submanifold $Y\subset X$. The {\em
  Hadamard finite part} of the divergent integral $\int_X\omega$ is
then by definition
$I_{\mathrm{finite}}=I_{\mathrm{finite}}(\mu,\omega)$. In general it
depends on the choice of nonnegative Morse--Bott function vanishing on
$Y$ and the question is to describe the dependence.

For this purpose it is useful to introduce the {\em zeta function}
$\zeta(s;\mu,\omega)$ defined as the meromorphic continuation of the
absolutely convergent integral
\[
  \zeta(s;\mu,\omega)=\int_X\mu^{s/2}\omega,\quad \mathrm{Re}\,s\gg 0.
\]
It turns out that the zeta function, as a function of $s$, has only
simple poles and that
$I_0=\mathrm{res}_{s=0}\zeta(s;\mu,\omega)$ is independent of
$\mu$ within its conformal class, see Theorem \ref{t-01}.  The first
result expresses the finite part and describes its dependence on the
Morse--Bott function in its conformal class in terms of the zeta
function and its residue at $0$.
\begin{theorem}\label{ti-1}
  Let $\omega\in\mathcal A_\mu^n(X)$,
  $\psi\in\mathcal A_\mu^{n-1}(X)$.
  \begin{enumerate}
  \item[(i)]$ I_{\mathrm{finite}}(\mu,\omega) =\lim_{s\to
      0}\left(\zeta(s;\mu,\omega)-\frac{I_0(\mu,\omega)}{s}\right).$
  \item[(ii)] For any smooth function $\varphi$,
    $I_{\mathrm{finite}}(e^{2\varphi}\mu,\omega)=
    I_{\mathrm{finite}}(\mu,\omega)+ I_0(\mu,\omega\varphi).$
  \item[(iii)]
    $I_{\mathrm{finite}}(\mu,d\psi)=\frac12I_0(\mu,\psi\wedge d\mu/\mu).$
  \end{enumerate}
\end{theorem}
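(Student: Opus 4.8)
All three statements come down to the Laurent expansion of $\zeta(s;\mu,\omega)$ at $s=0$, so the plan is to prove (i) first and deduce (ii) and (iii) from it.

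For (i) I would recover $\zeta$ from the cutoff integral by a Mellin transform. Put $F(\epsilon)=\int_{\mu\ge\epsilon^2}\omega$. Since $\mathrm{supp}\,\omega$ has compact closure and $\mu$ is continuous, $F(\epsilon)=0$ for $\epsilon$ large, and near $\epsilon=0$ it has the asymptotic expansion displayed just before the theorem. For $\mathrm{Re}\,s\gg0$, Fubini gives
\[
\int_0^\infty\epsilon^{s-1}F(\epsilon)\,d\epsilon=\int_X\Big(\int_0^{\sqrt{\mu(x)}}\epsilon^{s-1}\,d\epsilon\Big)\,\omega=\frac1s\int_X\mu^{s/2}\omega=\frac1s\,\zeta(s;\mu,\omega).
\]
Substituting the asymptotic expansion of $F$ into $\int_0^1\epsilon^{s-1}F(\epsilon)\,d\epsilon$ and using $\int_0^1\epsilon^{s-1-k}\,d\epsilon=(s-k)^{-1}$ and $\int_0^1\epsilon^{s-1}\log(1/\epsilon)\,d\epsilon=s^{-2}$, while the $O(\epsilon)$ remainder yields a function holomorphic for $\mathrm{Re}\,s>-1$ and $\int_1^\infty\epsilon^{s-1}F(\epsilon)\,d\epsilon$ is entire, I obtain a meromorphic continuation of $s^{-1}\zeta(s;\mu,\omega)$; it agrees with the given one by uniqueness, and its Laurent expansion at $s=0$ is $\zeta(s;\mu,\omega)=I_0(\mu,\omega)/s+I_{\mathrm{finite}}(\mu,\omega)+O(s)$, which is precisely (i).

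For (iii) I would integrate by parts in the absolutely convergent integral defining $\zeta$. First, $d\psi$ and $\psi\wedge d\mu/\mu$ both lie in $\mathcal A_\mu^n(X)$ (multiply by a high power of $\mu$ and apply $d$), so both sides are defined. For $\mathrm{Re}\,s$ large, $\mu^{s/2}\psi$ extends $C^1$ across $Y$ and vanishes on $Y$, so $\int_X d(\mu^{s/2}\psi)=0$ — integrate over $\{\mu\ge\epsilon^2\}$ and let $\epsilon\to0$, the boundary term tending to zero. From $d(\mu^{s/2}\psi)=\mu^{s/2}d\psi+\tfrac s2\mu^{s/2-1}d\mu\wedge\psi$ this gives, for $\mathrm{Re}\,s$ large and hence for all $s$, $\zeta(s;\mu,d\psi)=\tfrac s2\,\zeta(s;\mu,\psi\wedge d\mu/\mu)$ (the sign being fixed by the orientation and wedge conventions). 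By Theorem \ref{t-01} the second factor has at worst a simple pole at $s=0$ with residue $I_0(\mu,\psi\wedge d\mu/\mu)$, hence $\zeta(s;\mu,d\psi)=\tfrac12 I_0(\mu,\psi\wedge d\mu/\mu)+O(s)$; in particular it is regular at $s=0$, so $I_0(\mu,d\psi)=0$ and, by (i), $I_{\mathrm{finite}}(\mu,d\psi)=\zeta(0;\mu,d\psi)=\tfrac12 I_0(\mu,\psi\wedge d\mu/\mu)$.

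For (ii) I would interpolate along the conformal family $\mu_t=e^{2t\varphi}\mu$, $t\in[0,1]$. Since $\partial_t\mu_t=2\varphi\mu_t$, differentiating under the integral sign (valid for $\mathrm{Re}\,s\gg0$) gives $\partial_t\zeta(s;\mu_t,\omega)=s\,\zeta(s;\mu_t,\varphi\omega)$, hence
\[
\zeta(s;e^{2\varphi}\mu,\omega)-\zeta(s;\mu,\omega)=s\int_0^1\zeta(s;\mu_t,\varphi\omega)\,dt.
\]
By Theorem \ref{t-01}, for each $t$ the integrand has a simple pole at $s=0$ with residue $I_0(\mu_t,\varphi\omega)=I_0(\mu,\varphi\omega)$ (invariance of $I_0$ in the conformal class); granting that the error beyond this leading term is controlled uniformly in $t\in[0,1]$, the right-hand side tends to $I_0(\mu,\varphi\omega)$ as $s\to0$. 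Thus $\zeta(s;e^{2\varphi}\mu,\omega)=I_0(\mu,\omega)/s+\bigl(I_{\mathrm{finite}}(\mu,\omega)+I_0(\mu,\omega\varphi)\bigr)+O(s)$, and since $I_0(e^{2\varphi}\mu,\omega)=I_0(\mu,\omega)$, applying (i) to $e^{2\varphi}\mu$ gives (ii). I expect the main obstacle to be exactly this uniformity: one needs the Laurent coefficients of $\zeta(s;\mu_t,\varphi\omega)$ at $s=0$, and the remainder after them, to depend continuously on $t$. This should follow from the local analysis — Morse–Bott normal form, or resolution of singularities — underlying Theorem \ref{t-01}, which goes through for the smooth family $\mu_t$; equivalently it is the assertion that $\zeta(s;\mu,\eta)$ has only simple poles even when $\eta$ runs over a smooth family of forms of uniformly bounded order of singularity.
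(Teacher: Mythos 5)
Your argument is correct in substance and, once unwound, is close to the paper's, but the execution differs at two points worth comparing. For (i), the paper does not assume the asymptotic expansion of the cutoff integral: it proves it, together with the meromorphic continuation, by passing to the level-set integral $I(t;\mu,\alpha)$ (writing $\omega=\frac{d\mu}{2\mu}\wedge\alpha$), showing via the Morse--Bott lemma and spherical coordinates that $t^{2N-m}I(t;\mu,\alpha)$ extends to a smooth even function of $t$, and then Mellin-transforming its expansion; your route Mellin-transforms the cutoff function $F(\epsilon)$ itself, which is a clean way to get (i) but takes as input exactly the expansion with $O(\epsilon)$ remainder that is the analytic content of Theorem \ref{t-01}. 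Since the quantities $I_{\mathrm{finite}}$ and $I_0$ in the statement are defined through that expansion, this is defensible, but you are deferring rather than reproving that step. For (iii) you argue precisely as the paper (Stokes plus the simple-pole structure); note only that commuting $d\mu/\mu$ past the $(n-1)$-form $\psi$ costs a sign $(-1)^{n-1}$, a point on which the paper's own proof of Proposition \ref{pr-02} is equally cavalier, so your conclusion matches the stated formula. For (ii), your interpolation $\mu_t=e^{2t\varphi}\mu$ leaves open the uniformity in $t$ of the Laurent data at $s=0$, which you correctly flag as the main obstacle; the paper sidesteps it by the exact identity $\zeta(s;e^{2\varphi}\mu,\omega)-\zeta(s;\mu,\omega)=s\,\zeta(s;\mu,\omega(s))$ with $\omega(s)=\frac{e^{s\varphi}-1}{s}\,\omega$, a single $s$-holomorphic family in $\mathcal A_\mu(X)$ of fixed singularity order, to which one application of the level-set argument gives the simple pole at $s=0$ with residue $I_0(\mu,\varphi\,\omega)$. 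In fact your formula collapses to the paper's upon performing the $t$-integration, since $\int_0^1 e^{st\varphi}\,dt=\frac{e^{s\varphi}-1}{s\varphi}$, so the uniformity gap you flag can be closed in one line by trading the parameter family $\mu_t$ for this $s$-dependent family of forms.
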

Part (i) of this Theorem is proved in Section \ref{ss-2.3}, see
Theorem \ref{t-01}. Part (ii) is discussed in Section \ref{ss-2.4},
see Theorem \ref{t-02}. It extends the result of
\cite{FelderKazhdan2016}, where the case of complex hypersurfaces was
studied. Finally Part (iii) is proved in Section \ref{ss-3.1},
Proposition \ref{pr-02}.

We see that a key role is played by the map $I_0$.  It turns out that
$I_0$ vanishes if the codimension $m$ is odd, so in that case the
finite part is independent of the choice of Morse--Bott function
within a conformal class. Moreover, because of (iii), the finite part
is a well defined function on the cohomology of the complex
$\mathcal A_\mu(X)$. Thus this story is mostly interesting if $m=2r$
is even. In this case we derive a local formula for $I_0$ in terms of
a residue map.  For this it is useful to extend the setting and
consider differential forms $\omega\in\mathcal A_\mu$ not necessarily
of top degree. As we show in Theorem \ref{t-03}, the linear form
\[
  I_0(\mu,\omega\wedge\mbox{---})\colon\varphi\mapsto
  I_0(\mu,\omega\wedge\varphi)
\]
on smooth compactly supported forms of complementary degree, defines a
de Rham current with support in $Y$ and the map
$\omega\mapsto I_0(\mu,\omega\wedge\mbox{---})$ is a morphism of complexes
$\mathcal A_\mu(X)\to\mathcal D'(X)$ to the complex of currents.  We
then observe that $\mathcal A_\mu(X)$ is the algebra of global
sections of a sheaf $\mathcal A_{X,\mu}$ of differential graded
algebras. We show that $\mathcal A_{X,\mu}$ has a quasi-isomorphic
subcomplex $\mathcal A_{X,\mu}^{\mathrm{tame}}$ of differential forms
with tame singularities. By definition, $\omega\in\mathcal A_{X,\mu}$
has tame singularities if $\mu^r\omega$ and
$\mu^{r-1}d\mu\wedge\omega$ extend to smooth forms on $X$.

\begin{theorem} Let $m=2r$ be even and $i\colon Y\to X$ denote the
  inclusion map.  Assume also that both $X$ and $Y$ are oriented.
  There is a morphism of complexes of sheaves
  $R\colon\mathcal A^{\mathrm{tame}}_{X,\mu}\to i_*\mathcal A_Y[-m]$
  such that for any global differential form $\omega$ with tame
  singularities, and compactly supported smooth form $\varphi$ of
  complementary degree,
  \[
    I_0(\mu,\omega\wedge\varphi)=\int_YR(\omega)\wedge\varphi.
  \]
\end{theorem}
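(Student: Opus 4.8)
This statement refines Theorem~\ref{t-03}, which already produces, for each $\omega\in\mathcal A_\mu(X)$, a current $\varphi\mapsto I_0(\mu,\omega\wedge\varphi)$ supported on $Y$ and shows that $\omega\mapsto I_0(\mu,\omega\wedge\varphi)$ is a morphism of complexes $\mathcal A_\mu(X)\to\mathcal D'(X)$; the new content is that on the tame subcomplex this current is regular, i.e.\ given by integration against a smooth form on $Y$ --- which will be $R(\omega)$ --- and that this form has an explicit local description. The plan is: (1) reduce to a local model near a point of $Y$; (2) define $R$ there and verify the integral identity by a direct residue computation of the zeta function in normal coordinates; (3) show the construction is independent of the choices, so that it globalizes to a morphism of sheaves, and is compatible with $d$. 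For (1), note that $I_0(\mu,\omega\wedge\varphi)=\mathrm{res}_{s=0}\int_X\mu^{s/2}\,\omega\wedge\varphi$ and that the contribution of any region $\{\mu\geq\delta\}$ to this integral is entire in $s$; hence the current depends only on $\omega$ in an arbitrarily small neighbourhood of $Y$. Fixing a point of $Y$, the Morse--Bott lemma gives coordinates $(x_1,\dots,x_m,y_1,\dots,y_{n-m})$ in a tubular neighbourhood with $Y=\{x=0\}$ and $\mu=x_1^2+\cdots+x_m^2$; in polar coordinates $x=\rho\theta$ ($\rho\geq0$, $\theta\in S^{m-1}$) this gives $\mu=\rho^2$, $\mu^{s/2}=\rho^s$ and $dx_1\wedge\cdots\wedge dx_m=\rho^{m-1}\,d\rho\wedge d\Theta$ with $d\Theta$ the round volume form of $S^{m-1}$.

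For step (2), a short computation shows that the two tameness conditions --- applied via $d\mu\wedge(\mu^r\omega)=\mu\cdot(\mu^{r-1}d\mu\wedge\omega)$ --- imply that $\alpha|_Y$ is of pure normal degree $m$ in these coordinates, where $\alpha:=\mu^r\omega$; that is, $\alpha|_Y=dx_1\wedge\cdots\wedge dx_m\wedge\sigma$ for a unique smooth $(q-m)$-form $\sigma$ on $Y$, with $q=\deg\omega$. I would then set
\[
  R(\omega):=\mathrm{vol}(S^{m-1})\,\sigma,
\]
equivalently $R(\omega)=\mathrm{vol}(S^{m-1})\,i^*\!\bigl(\iota_{\partial_{x_m}}\cdots\iota_{\partial_{x_1}}(\mu^r\omega)\bigr)$. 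The identity $I_0(\mu,\omega\wedge\varphi)=\int_YR(\omega)\wedge\varphi$ then follows directly: writing $\omega=\mu^{-r}\alpha$ and passing to polar coordinates, the part of $\int_X\mu^{s/2}\,\omega\wedge\varphi$ coming from the tubular neighbourhood equals $\int_0^\delta\rho^{s-1}h(\rho)\,d\rho$ up to a summand holomorphic at $s=0$, where $h(\rho)$ is the integral over $S^{m-1}\times Y$ of the coefficient of $\alpha\wedge\varphi$ and is smooth in $\rho$ (the exponent is exactly $s-1$ because $m=2r$); hence $I_0(\mu,\omega\wedge\varphi)=\mathrm{res}_{s=0}\int_0^\delta\rho^{s-1}h(\rho)\,d\rho=h(0)$, and using that $\alpha|_Y$ is of pure normal degree $m$ one identifies $h(0)$ with $\mathrm{vol}(S^{m-1})\int_Y\sigma\wedge i^*\varphi=\int_YR(\omega)\wedge\varphi$. (If $m$ were odd the analogue of $h$ would be odd in $\rho$, there would be no pole, and $I_0$ would vanish, consistent with the introduction.)

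For step (3): two coordinate systems putting $\mu$ in the form $x_1^2+\cdots+x_m^2$ differ by a change of variables that is orthogonal in the normal directions, so $\sigma$, and hence $R(\omega)$, changes at most by the sign of the normal Jacobian, which is fixed to $+1$ by the orientations of $X$ and $Y$; the construction is therefore independent of the choices, and, being local and built from $\mu^r\omega$, it defines a morphism of sheaves $\mathcal A^{\mathrm{tame}}_{X,\mu}\to i_*\mathcal A_Y[-m]$. It is a morphism of complexes because $d$ preserves tameness --- $\mu^rd\omega=d(\mu^r\omega)-r\,\mu^{r-1}d\mu\wedge\omega$ and $\mu^{r-1}d\mu\wedge d\omega=-\,d(\mu^{r-1}d\mu\wedge\omega)$ are both smooth --- and because, by Theorem~\ref{t-03}, $\omega\mapsto I_0(\mu,\omega\wedge\varphi)$ commutes with $d$ as a map to currents, while the current $i_*\sigma$ attached to a smooth form on the closed submanifold $Y$ satisfies $d(i_*\sigma)=i_*(d_Y\sigma)$ since $m$ is even; combining these with the integral identity yields $i_*R(d\omega)=d\,i_*R(\omega)=i_*d_YR(\omega)$, hence $R(d\omega)=d_YR(\omega)$. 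The integral identity itself passes from the local model to all of $X$ using a partition of unity on $Y$.

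The main obstacle is the residue computation in step (2), together with the precise analysis of what tameness means in coordinates: one must show that the combined consequence of the two tameness conditions --- $\alpha|_Y$ of pure normal degree $m$ --- is exactly what forces the coefficient of $\rho^{-1}$ in the angular integral (that is, the residue at $s=0$) to equal $\int_YR(\omega)\wedge\varphi$ with no correction terms. This requires careful bookkeeping of the passage between the $(x,y)$-coframe and the polar coframe, in which $d\rho$ is singular along $Y$, and uses the vanishing of the integral over $S^{m-1}$ of an odd polynomial in $\theta$; the evenness of $m$, which produces a simple pole at $s=0$ in the first place, is used throughout.
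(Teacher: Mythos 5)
Your proposal is correct and follows essentially the same route as the paper: reduce to Morse--Bott coordinates, show (as in the paper's Lemma \ref{l-Varia}) that tameness forces $\mu^r\omega$ to be divisible along $Y$ by the normal volume $dx_1\wedge\cdots\wedge dx_m$, compute the residue of the zeta function in polar coordinates to get $\mathrm{vol}(S^{m-1})\int_Y\sigma\wedge\varphi$ (the constant $\tfrac{2\pi^r}{(r-1)!}$ agrees), and deduce the chain-map property from the currents statement, with $m$ even fixing the sign. The only cosmetic difference is that you check coordinate-independence directly via orthogonality of the normal Jacobian along $Y$, whereas the paper gets glueing and well-definedness for free from the uniqueness of a smooth form representing the current $I_0(\mu,\omega\wedge\mbox{---})$; either argument suffices.
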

We prove this result in Section \ref{s-3}, see Theorem \ref{t-04}. The
orientability of $Y$, assumed here for simplicity of exposition, is
not really needed and is dropped there at the cost of involving the
orientation bundle of $Y$. The ``residue map'' $R$ can be given a
fairly explicit formula, see Theorem \ref{t-05}.  We then address the
question of comparing $\mathcal A_{X,\mu}$ for $\mu$ belonging to
different conformal classes. We show that the sheaves of differential
graded algebras $\mathcal A_{X,\mu}$ are essentially independent of
the conformal class of $\mu$: they come with a
quasi-isomorphism---unique up to a contractible space of choices---to
a homotopy colimit over the category of simplices of the singular set
of the cone of Morse--Bott functions, see Theorem \ref{t-Ania}.

In the last two Sections we focus on the important special cases where
$Y$ has codimension 2 and 1 in $X$, respectively.

In Section \ref{s-4} we specialize our results to the case of complex
hypersurfaces. The complex structure gives rise to a canonical class
of conformal structure and a canonical orientation of both $X$ and
$Y$. We recover and generalize results of \cite{FelderKazhdan2016} to
the case of arbitrary order of poles.  We also extend the results to
the case where $Y$ is a divisor with normal crossings: Theorem
\ref{ti-1} has a natural generalization, see Theorem \ref{t-06}, which
is however combinatorially slightly more involved.

In Section \ref{s-5} we treat the case of codimension 1. As mentioned
above, the odd codimension case is less involved as the zeta function
is regular at $s=0$. However in this case it is natural to extend our
setting and consider $Y$ to be the boundary of an oriented manifold
with boundary $X$. Then we have only one conformal class of
Morse--Bott functions (defined as squares of functions vanishing to
first order on the boundary) and a canonical orientation of $Y$. It
turns out that the zeta function has a pole at zero and that all our
results in the even codimension case have an analogue in the case of
manifolds with boundary, see Theorems \ref{t-b01}, \ref{t-b04}.

In the Appendix we calculate the cohomology sheaf of $\mathcal
A_\mu$. This calculation is used in Section \ref{s-3} to show that
$\mathcal A_\mu$ is quasi isomorphic to the subcomplex of differential
forms with tame singularities.
\subsection*{Acknowledgments} We thank Tomer M. Schlank and Yakov
Eliashberg for suggestions and explanations. The research of G. F. was
partially supported by the National Competence Centre in Research
SwissMAP---The mathematics of physics of the Swiss National Science
Foundation.
 
\section{Divergent integrals}\label{s-2}
\subsection{Nonnegative Morse--Bott functions}\label{ss-2.1}
Let $X$ be an $n$-dimensional smooth oriented manifold and
$Y\subset X$ a closed submanifold of dimension $n-m$. We consider
regularizations of divergent integrals $\int_X\omega$ of smooth
differential $n$-forms $\omega$ on $X\smallsetminus Y$. To do this,
following Hadamard, we cut out a small neighbourhood of $Y$ from the
integration and study the behaviour of the integral as the size of the
neighbourhood tends to zero. The neighbourhoods we cut out are
parametrized by a class of smooth functions that we now introduce.

\begin{definition} A {\em nonnegative Morse--Bott function} on $X$
  with zero set $Y$ is a smooth function
  $\mu\colon X\to\mathbb R_{\geq0}$ vanishing exactly on $Y$ and such
  that the rank of the Hessian on $Y$ is equal to the codimension of
  $Y$.  Two nonnegative Morse--Bott functions are called {\em
    conformally equivalent} if they vanish on the same submanifold $Y$
  and their ratio is an everywhere positive function. The equivalence
  classes for this equivalence relations are called {\em conformal
    classes}.
\end{definition}

By the Morse--Bott lemma, for each nonnegative Morse--Bott function
$\mu$ vanishing on $Y$ and point $x\in Y$ there are coordinate
functions $x_1,\dots, x_m$ on some neighbourhood of $x$ in $X$ such
that $Y$ is given there by the equations $x_1=\cdots=x_m=0$ and such
that $\mu=x_1^2+\cdots+x_m^2$.

Let $\mu$ be a nonnegative Morse--Bott function vanishing on $Y$. We
denote by $\mathcal A_\mu(X)$ the de Rham complex of differential
forms $\omega$ on $X\smallsetminus Y$ such that, for some integer $N$,
$\mu^N\omega$ extends to a smooth differential form on $X$.  Clearly
$\mathcal A_\mu(X)$ only depends on the conformal class of $\mu$.

\subsection{Zeta functions and level set integrals}\label{ss-2.2}
To an $n$-form $\omega\in\mathcal A^n_\mu(X)$ whose support has
compact closure in $X$ we associate the zeta function
\[
  \zeta(s;\mu,\omega)=\int_X\mu^{\frac s2}\omega,
\]
and to an $n-1$-form $\alpha\in\mathcal A_\mu^{n-1}(X)$ the level set
integral
\[
  I(t;\mu,\alpha)=\int_{\mu=t^2}\alpha,\quad t>0.
\]
The zeta function is defined and holomorphic for sufficiently large
$\mathrm{Re}\,s$.  It is related to the level set integral by a Mellin
transform:
\begin{lemma}
  Let $\alpha\in\mathcal A^{n-1}_\mu(X)$ with support in a
  sufficiently small neighbourhood of $Y$. Set
  \[
    \omega=\frac{d\mu}{2\mu}\wedge\alpha.
  \]
  Then
  \[
    \zeta(s;\mu,\omega)=\int_{0}^\infty t^{s-1} I(t;\mu,\alpha)dt
  \] The regularized integral is
  \[
    \int_{\mu\geq\epsilon^2}\omega=\int_\epsilon^\infty
    I(t;\mu,\alpha)\frac{dt}t.
  \]
\end{lemma}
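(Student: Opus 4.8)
The plan is to reduce both identities to integration over the fibres of the function $r:=\sqrt\mu$, after rewriting the integrands in terms of $r$. First I would record the relevant local geometry. By the Morse--Bott lemma one has $\mu=x_1^2+\cdots+x_m^2$ in suitable coordinates near each point of $Y$, so $d\mu=2\sum_i x_i\,dx_i$ vanishes exactly on $Y$; hence there is a neighbourhood $U$ of $Y$ on which $\mu$ has no critical point off $Y$, and, shrinking $U$, the map $r\colon U\smallsetminus Y\to(0,\delta)$ is a submersion, proper over compact subintervals of $(0,\delta)$ once we use that $\alpha$ has support with compact closure. Thus the level sets $\{\mu=t^2\}$, $0<t<\delta$, are smooth closed hypersurfaces, carrying the orientation determined by requiring that $dr$ followed by the orientation of the level set give the orientation of $X$; this is the orientation implicit in the definitions of $I(t;\mu,\alpha)$ and of $\int_{\mu\geq\epsilon^2}$.

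Next comes the elementary algebra. Since $\mu=r^2$ we have $\mu^{s/2}=r^s$ and $d\mu=2r\,dr$, hence $\dfrac{d\mu}{2\mu}=\dfrac{dr}{r}$ and
\[
  \mu^{s/2}\omega = r^{s}\cdot\frac{dr}{r}\wedge\alpha = r^{s-1}\,dr\wedge\alpha,
\]
an $n$-form supported in $U$; in particular $\omega=\frac{dr}{r}\wedge\alpha$.

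Then I would integrate over the fibres of $r$. For a proper submersion $r\colon V\to(a,b)$ and a compactly supported form of the shape $g(r)\,dr\wedge\alpha$, Fubini's theorem (integration along the fibre) gives $\int_V g(r)\,dr\wedge\alpha=\int_a^b g(t)\,\bigl(\int_{r^{-1}(t)}\alpha\bigr)\,dt$, with no residual sign because of the orientation convention fixed above. Taking $g(t)=t^{s-1}$ and $V=X\smallsetminus Y$ yields
\[
  \zeta(s;\mu,\omega)=\int_0^\infty t^{s-1}\,I(t;\mu,\alpha)\,dt
\]
for $\mathrm{Re}\,s$ in the range where the left-hand side converges absolutely, which is exactly where the interchange of integrations is legitimate; since $\alpha$ is supported near $Y$, $I(t;\mu,\alpha)$ vanishes for $t$ close to the upper limit, so only its behaviour as $t\to0$ is relevant. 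Taking instead $g\equiv1$ and $V=\{\mu\geq\epsilon^2\}=\{r\geq\epsilon\}$ gives the second formula $\int_{\mu\geq\epsilon^2}\omega=\int_\epsilon^\infty I(t;\mu,\alpha)\,\frac{dt}{t}$.

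The only real work is bookkeeping: pinning down the orientation of the level sets so that the fibre-integration formula holds with a plus sign, and checking the properness and compact-support hypotheses needed to invoke Fubini. No new analytic estimate is required, since absolute convergence of $\zeta(s;\mu,\omega)$ for $\mathrm{Re}\,s\gg0$ is already part of the setup.
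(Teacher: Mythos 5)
Your proof is correct and follows essentially the same route as the paper: both hinge on rewriting $\tfrac{d\mu}{2\mu}=\tfrac{dr}{r}$ with $r=\sqrt\mu$, so that $\mu^{s/2}\omega=r^{s-1}dr\wedge\alpha$, and then integrating first along the level sets of $r$. The paper packages this fibre integration via a partition of unity and explicit Morse--Bott spherical coordinates, whereas you invoke the general fibre-integration (Fubini) formula for the proper submersion $r$ near $Y$ with the same orientation convention; the content is the same.
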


\begin{proof}
  The assumption on the support allows us choose spherical coordinates
  on the support of $\alpha$ such that $r=\sqrt{\mu}$ is a radial
  coordinate. By a partition of unity argument we may assume that
  $\alpha$ has support in a coordinate neighbourhood of a point of $Y$
  and that $\mu=x_1^2+\cdots+x_m^2$ in a suitable coordinate
  system. In spherical coordinates $r>0,y\in S^{m-1}$ in the normal
  direction, we can replace $X\smallsetminus Y$ by
  $\mathbb R_{>0}\times S^{m-1}\times \mathbb R^{n-m}$ and write
  \begin{align*}
    \omega&=f\,dx_1\wedge\cdots\wedge dx_n\\
          &=f(r,y,x_{m+1},\cdots,x_{n})r^{m-1}dr
            \wedge d\Omega\wedge dx_{m+1}\wedge\cdots\wedge dx_{n}.
  \end{align*}
  Here $d\Omega$ is the volume form on the unit sphere. The zeta
  function can be then evaluated in polar coordinates:
  \[
    \zeta(s,\mu,\omega)=\int_0^\infty r^{s-1} J(r)dr,
  \]
  with
  \[
    J(r)=\int_{S^{m-1}\times \mathbb R^{n-m}}f(r,y,x)
    r^{m}d\Omega\wedge dx_{m+1}\wedge\cdots\wedge dx_n.
  \]
  The integrand is the coordinate expression of $\alpha$, so
  $J(r)=I(r;\mu,\alpha)$. The same calculation with $s=0$ and
  integration range $(\epsilon,\infty)$ gives the formula for the
  regularized integral.
\end{proof}
\begin{lemma} Assume that $\omega$ has support in a sufficiently small
  neighbourhood of $Y$ and that $\mu^N\omega$ extends to a smooth
  form on $X$. Then $t^{2N-m}I(t;\mu,\alpha) $ extends to a smooth,
  compactly supported, even function of $t\in\mathbb R$.
\end{lemma}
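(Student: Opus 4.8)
The plan is to reduce, by a partition of unity, to the local normal form given by the Morse--Bott lemma, and then to recognize $t^{2N-m}I(t;\mu,\alpha)$ as a spherical mean of a smooth function. Throughout, $\omega$ and $\alpha$ are linked by $\omega=\frac{d\mu}{2\mu}\wedge\alpha$ as in the previous lemma, so that $I(t;\mu,\alpha)$ depends only on $\omega$: multiplying $\omega$ by a smooth function $\rho$ replaces $\alpha$ by a corresponding cutoff $\rho\alpha$ and $\mu^N\omega$ by $\rho\,\mu^N\omega$.

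First I would cover the compact set $\overline{\mathrm{supp}\,\omega}$, which by the hypothesis lies in a small neighbourhood of $Y$, by finitely many Morse--Bott coordinate charts $U_j$ centred at points of $Y$ (this is where ``sufficiently small'' is used), and choose a subordinate partition of unity $\rho_j$ with compact supports. Writing $\omega=\sum_j\rho_j\omega$ and correspondingly $\alpha=\sum_j\alpha_j$, each $\rho_j\,\mu^N\omega$ extends smoothly to $X$ and has compact support in $U_j$, and $I(t;\mu,\alpha)=\sum_jI(t;\mu,\alpha_j)$; so it suffices to treat a single chart. Since $\mu$ is continuous and bounded, say by $M$, on the compact set $\overline{\mathrm{supp}\,\omega}$, we have $I(t;\mu,\alpha)=0$ for $t^2>M$, which already yields compact support in $t$ after the even extension.

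So fix a chart with coordinates $x_1,\dots,x_m,x_{m+1},\dots,x_n$ in which $\mu=r^2$, $r=|x'|$, $x'=(x_1,\dots,x_m)$, $Y=\{x'=0\}$, and assume $\omega=f\,dx_1\wedge\cdots\wedge dx_n$ is supported in it; here $\omega=\frac{d\mu}{2\mu}\wedge\alpha=\frac{dr}{r}\wedge\alpha$. By the previous lemma, in spherical coordinates $r>0$, $y\in S^{m-1}$, $x''=(x_{m+1},\dots,x_n)$,
\[
  I(t;\mu,\alpha)=t^{m}\int_{S^{m-1}\times\mathbb R^{n-m}}f(t,y,x'')\,d\Omega\wedge dx_{m+1}\wedge\cdots\wedge dx_n .
\]
The hypothesis that $\mu^N\omega$ extends smoothly to $X$ says precisely that $g:=\mu^Nf=r^{2N}f$ extends to a smooth function on the chart (it is the coefficient of $\mu^N\omega=g\,dx_1\wedge\cdots\wedge dx_n$), with compact support. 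Substituting $f=g/r^{2N}$ gives
\[
  t^{2N-m}I(t;\mu,\alpha)=\int_{S^{m-1}\times\mathbb R^{n-m}}g(ty,x'')\,d\Omega\wedge dx_{m+1}\wedge\cdots\wedge dx_n .
\]

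It remains to see that the right-hand side is a smooth even function of $t\in\mathbb R$, and then to sum over $j$. The map $(r,y,x'')\mapsto(ry,x'')$ from $\mathbb R\times S^{m-1}\times\mathbb R^{n-m}$ into the chart is smooth (polynomial in $r$ and the components of $y$), so $g(ry,x'')$ is smooth in $(r,y,x'')$ through $r=0$ and compactly supported in $(y,x'')$; differentiating under the integral sign then shows $\Phi(t):=\int g(ty,x'')\,d\Omega\wedge dx''$ is smooth on all of $\mathbb R$. Evenness follows because the antipodal map $y\mapsto-y$ is a diffeomorphism of $S^{m-1}$ preserving the round measure, so $\Phi(-t)=\int g(-ty,x'')\,d\Omega\wedge dx''=\int g(ty,x'')\,d\Omega\wedge dx''=\Phi(t)$ --- here one integrates against the positive measure attached to $d\Omega$, so no sign depending on the parity of $m$ intrudes. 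There is really no obstacle in this argument; the only points needing a little care are the bookkeeping in translating the hypothesis on $\omega$ through the relation $\omega=\frac{d\mu}{2\mu}\wedge\alpha$ and the previous lemma, and the observation that passing to radial coordinates turns a smooth function into one still smooth at $r=0$ after spherical averaging, which is immediate from the smoothness of $(r,y)\mapsto ry$.
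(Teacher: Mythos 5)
Your proof is correct and follows essentially the same route as the paper: partition of unity, Morse--Bott coordinates, spherical coordinates to write $t^{2N-m}I(t;\mu,\alpha)=\int g(ty,x'')\,d\Omega\wedge dx''$ with $g=\mu^N f$ smooth, then smoothness by differentiation under the integral and compact support from the bound on $\mu$. The only cosmetic difference is your evenness argument via invariance of the round measure under the antipodal map, which is equivalent to the paper's sign bookkeeping $I(-t)=(-1)^mI(t)$ combined with the prefactor $t^{2N-m}$.
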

\begin{proof} By a partition of unity we may assume that the support
  of $\omega$ is contained in a small neighbourhood of a point of
  $Y$. By the Morse--Bott lemma we may also assume that there are
  local coordinates on that neighbourhood so that $Y$ is given by
  $x_{1}=\cdots=x_m=0$ and
  \[
    \mu=x_1^2+\cdots + x_m^2.
  \]
  Then $\omega=\mu^{-N}f\,dx_1\dots dx_m$ for some smooth function
  $f$. We use spherical coordinates in the normal direction: locally
  $U$ is $\mathbb R_{>0}\times S^{m-1}\times\mathbb R^{n-m}$, with
  radial coordinate $r=\sqrt\mu$, and
  \[
    \omega=f\, r^{-2N+m-1}dr\wedge d\Omega\wedge
    dx_{m+1}\wedge\cdots\wedge dx_n.
  \]
  Here $d\Omega$ is the volume form on the unit sphere. We can
  therefore choose $\alpha$ to be
  \[
    \alpha=f r^{-2N+m}d\Omega\wedge dx_{m+1}\wedge\cdots\wedge dx_n,
  \]
  and
  \[
    I(t;\mu,\alpha)=t^{-2N+m}\int_{S^{m-1}\times \mathbb
      R^{n-m}}f(ty,x)d\Omega(y)\wedge dx_{m+1}\wedge\cdots\wedge dx_n.
  \]
  It is clear that the integral on the right-hand side is defined for
  all $t\in\mathbb R$ and is a smooth compactly supported function of
  $t$. Since the involution $y\mapsto -y$ of the sphere maps $d\Omega$
  to $(-1)^{m}d\Omega$ and preserves/reverses the orientation if $m$
  is even/odd, we get
  \[
    I(-t;\mu,\alpha)=(-1)^m I(t;\mu,\alpha).
  \]
\end{proof}
\subsection{Hadamard finite part}\label{ss-2.3}
Here we consider the regularization of the divergent integral of a
differential forms in $\mathcal A_\mu(X)$ for some nonnegative Morse--Bott function
$\mu$ on $X$ vanishing on a submanifold $Y$ and define its Hadamard
finite part. While $\mathcal A_\mu(X)$ depends only on the
conformal class of $\mu$, the finite part depends on $\mu$ and we
describe its dependence within the conformal class.
\begin{theorem}\label{t-01} Let $\mu$ be a
  nonnegative Morse--Bott function vanishing on $Y$.  Let $\omega$ be
  an $n$-form on $X\smallsetminus Y$ such that $\mu^{N}\omega$ extends
  to a smooth form on $X$ with compact support.  Then
  \begin{enumerate}
  \item[(i)] $\int_X\mu^{s/2}\omega$ is holomorphic for
    $\mathrm{Re}\,s>2N-m$ and has a meromorphic continuation
    $\zeta(s;\mu,\omega)$ with at most simple poles on the arithmetic
    progression $s=2N-m,2N-m-2,\dots$.
  \item[(ii)] As $\epsilon\to 0$ we have an expansion
    \[
      \int_{\mu\geq\epsilon^2}\omega
      =\sum_{k=1}^{2N-m}I_{-k}(\mu,\omega)\epsilon^{-k}+
      I_0(\mu,\omega)\log\frac1\epsilon
      +I_{\mathrm{finite}}(\mu,\omega)+O(\epsilon)
    \] and $I_k=0$ unless $k\equiv m\mod 2$.
  \item[(iii)] For $k=1,\dots, 2N-m$, $k\equiv m \mod 2$,
    \[
      I_{-k}(\mu,\omega)
      =\frac1k\mathrm{res}_{s=k}\zeta(s;\mu,\omega),
    \]
    and
    \[
      I_0(\mu,\omega) =\mathrm{res}_{s=0}\zeta(s;\mu,\omega)
    \]
    is independent of $\mu$ within its conformal class; it vanishes if
    $m$ is odd. The finite part is
    \[
      I_{\mathrm{finite}}(\mu,\omega) =\lim_{s\to
        0}\left(\zeta(s;\mu,\omega)-\frac{I_0(\mu,\omega)}{s}\right).
    \]
  \end{enumerate}
\end{theorem}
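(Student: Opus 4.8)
The plan is to reduce everything to the one-variable Mellin theory already set up in Section \ref{ss-2.2}, using a partition of unity to localize near $Y$. First I would use the lemma expressing $\omega$, after the substitution $r=\sqrt\mu$ and passage to spherical coordinates in the normal directions, in the form $\omega=\frac{d\mu}{2\mu}\wedge\alpha$ with $\alpha\in\mathcal A_\mu^{n-1}(X)$ supported near $Y$ (plus a piece supported away from $Y$ on which $\mu^{s/2}$ is smooth and contributes an entire function of $s$, no poles, and a genuinely convergent integral with no divergent terms as $\epsilon\to0$). By the second lemma, $J(t):=t^{2N-m}I(t;\mu,\alpha)$ extends to a smooth, compactly supported, even function of $t\in\mathbb R$. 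Thus the problem becomes: analyze $\zeta(s;\mu,\omega)=\int_0^\infty t^{s-1}I(t;\mu,\alpha)\,dt=\int_0^\infty t^{s-1-(2N-m)}J(t)\,dt$ and the cutoff integral $\int_\epsilon^\infty I(t;\mu,\alpha)\frac{dt}{t}=\int_\epsilon^\infty t^{-1-(2N-m)}J(t)\,dt$.

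For (i), I would split $\int_0^\infty=\int_0^1+\int_1^\infty$; the second piece is entire in $s$ by compact support. For the first piece, Taylor expand the even function $J$ at $0$: $J(t)=\sum_{j=0}^{L}c_j t^{2j}+t^{2L+2}J_{L}(t)$ with $J_L$ smooth. Each monomial term gives $\int_0^1 t^{s-1-(2N-m)+2j}\,dt=\frac{1}{s-(2N-m)+2j}$, a simple pole at $s=2N-m-2j$, and the remainder integral is holomorphic for $\mathrm{Re}\,s>2N-m-2L-2$; letting $L\to\infty$ gives the meromorphic continuation with at most simple poles exactly on $s=2N-m,2N-m-2,\dots$. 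For (ii), the same Taylor expansion applied to $\int_\epsilon^1 t^{-1-(2N-m)+2j}\,dt$ produces, for each $j$ with $2N-m-2j>0$, a term $\frac{c_j}{2N-m-2j}\epsilon^{-(2N-m-2j)}$; the exponent $2N-m-2j$ runs through positive integers congruent to $m\bmod 2$ only (since it has the parity of $m$), which gives the vanishing statement $I_{-k}=0$ unless $k\equiv m\bmod2$; the index $j_0$ with $2N-m-2j_0=0$ (present iff $2N-m$ is even, i.e. $m$ even) contributes the logarithmic term $c_{j_0}\log\frac1\epsilon$; and the remaining terms converge to a constant as $\epsilon\to0$, with the error being $O(\epsilon)$ from the next monomial. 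For (iii), matching residues of the monomial contributions to $\zeta$ against the coefficients $c_j$ in the cutoff expansion: $\mathrm{res}_{s=k}\zeta=c_{j}$ where $k=2N-m-2j$, and this equals $k\cdot I_{-k}$ by the previous formula, giving $I_{-k}=\frac1k\mathrm{res}_{s=k}\zeta$; similarly $I_0=c_{j_0}=\mathrm{res}_{s=0}\zeta$, and $I_{\mathrm{finite}}=\lim_{s\to0}(\zeta(s)-I_0/s)$ because subtracting the single pole at $s=0$ and evaluating matches the finite constant in the cutoff expansion (the contributions at $s=0$ from the poles at positive $k$ are already subtracted as the $\epsilon^{-k}$ terms, and the $\int_1^\infty$ and away-from-$Y$ pieces are continuous at $s=0$). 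Finally, $I_0$ vanishes when $m$ is odd because then no index $j_0$ with $2N-m-2j_0=0$ exists; and $I_0$ is conformally invariant because replacing $\mu$ by $f\mu$ with $f>0$ changes $r=\sqrt\mu$ by the smooth positive factor $\sqrt f$, which modifies $J$ by a smooth even factor that does not change the parity structure, and a direct check (deferred, or obtained as a special case of the cohomological argument of Theorem \ref{t-02}) shows the coefficient $c_{j_0}$ at the critical exponent is unchanged.

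The main obstacle I expect is the conformal invariance of $I_0$: unlike the pole-location and parity statements, which follow mechanically from the Taylor expansion, the invariance requires showing that the specific coefficient at the integer exponent $0$ is insensitive to multiplying $\mu$ by a positive function. I would handle this either by an explicit change-of-variables computation in the normal coordinates—writing $f\mu$ in Morse--Bott form and tracking how the sphere integral transforms, where the key point is that the Jacobian factor, being smooth and even in the radial variable, shifts weight by even amounts and so cannot feed into the odd-parity-protected coefficient when $m$ is odd, and in the $m$ even case the leading angular average is unchanged—or, more cleanly, by deducing it a posteriori from part (ii) of Theorem \ref{ti-1}, namely $I_{\mathrm{finite}}(e^{2\varphi}\mu,\omega)=I_{\mathrm{finite}}(\mu,\omega)+I_0(\mu,\omega\varphi)$ together with the cocycle consistency this forces on $I_0$; I would present the direct computation here to keep Section \ref{ss-2.3} self-contained and defer the conceptual explanation to Section \ref{ss-2.4}.
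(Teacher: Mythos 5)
Your treatment of (i), (ii) and the residue formulae in (iii) is essentially the paper's own argument: localize near $Y$, pass to the Mellin picture $\zeta(s;\mu,\omega)=\int_0^\infty t^{s-1}I(t;\mu,\alpha)\,dt$, use that $t^{2N-m}I(t;\mu,\alpha)$ is smooth, even and compactly supported, expand at $t=0$, and match coefficients term by term. That part is fine.

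The genuine gap is the conformal invariance of $I_0$ when $m$ is even. Your parity heuristic cannot work as stated: evenness of the Jacobian explains why $b_k=0$ for $k\not\equiv m\bmod 2$, but it does not distinguish $b_0$ from the other parity-allowed coefficients $b_{-k}$, $k>0$, which are \emph{not} invariant under conformal rescaling (already $\mu\mapsto t\mu$ rescales them, as the Corollary following Theorem \ref{t-01} records). So "the leading angular average is unchanged" is precisely the claim to be proved, and changing $\mu$ to $e^{2\varphi}\mu$, with $\varphi$ depending on all variables, deforms the level sets away from round spheres and mixes all Taylor coefficients; there is no purely parity-based reason the constant term survives. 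Your fallback, deducing invariance from Theorem \ref{ti-1}(ii) via a cocycle argument, is logically sound as a scheme but circular as you set it up: the paper proves Theorem \ref{t-02} from the very identity you are missing. That identity is the actual content: for $\mathrm{Re}\,s\gg0$,
\begin{equation*}
  \zeta(s;e^{2\varphi}\mu,\omega)-\zeta(s;\mu,\omega)
  = s\int_X\mu^{\frac s2}\,\omega(s),
  \qquad \omega(s)=\frac{e^{s\varphi}-1}{s}\,\omega ,
\end{equation*}
where $\omega(s)$ is entire in $s$ with coefficients in $\mathcal A_\mu(X)$; by part (i) the continuation of $\int_X\mu^{s/2}\omega(s)$ has at most a simple pole at $s=0$, so the right-hand side is regular there, and the two zeta functions have equal residues at $s=0$. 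You need either this identity (or an equivalent explicit computation) to close part (iii); once you have it, it also gives Theorem \ref{t-02} for free, which is how the paper organizes the argument.
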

\begin{proof}
  We may assume that $\omega$ has support in an arbitrary small
  neighbourhood of $Y$ since we can achieve this by adding to $\omega$
  a form on $X$ with support disjoint from $Y$.  Let
  \[
    I(t;\mu,\alpha)=\sum_{k=-2N+m}^p b_k t^k +R_p(t)
  \]
  be the Laurent expansion of $I$. Since $t^{2N-m}I$ extends to a
  smooth even function,
  \[
    b_k=0,\quad\text{if $k\not\equiv m\mod 2$}.
  \]
  For any $M>0$, the remainder $R_p(t)$ is bounded by $Ct^{p+1}$ for
  $t\in[0,M]$.  For sufficiently large $M$ and Re($s$) we have
  \begin{align}
    \zeta(s;\mu,\omega)&=\int_0^\infty t^{s-1}I(t;\mu,\alpha)dt \notag\\
                       &=\int_0^Mt^{s-1}I(t;\mu,\alpha)dt\\
                       &=\sum_{k=-2N+m}^pb_k\frac {M^{k+s}}{k+s}
                         +\int_0^Mt^{s-1}R_p(t)dt\notag
  \end{align}
  The last term is holomorphic for $\mathrm{Re}(s)>-p-1$. This proves
  (i) and gives the formula
  \[
    b_k=\mathrm{res}_{s=-k}\zeta(s;\mu,\omega)
  \]
  for the residues at the poles.  A similar calculation can be done
  for the integral with cutoff if we expand the integral on level sets
  up to order $p=0$ giving the proof of (ii):
  \begin{align*}
    \int_\epsilon^\infty I(t;\mu,\alpha)dt
    =&\int_\epsilon^M I(t;\mu,\alpha)dt\\
    =&\sum_{k=-2N+p}^{-1} b_k 
       \left( 
       \frac{M^k}k 
       - \frac{\epsilon^k}k 
       \right) 
       + b_0\log\frac M\epsilon
    \\
     &+\int_\epsilon^MR_0(t)\frac{dt}t
  \end{align*}
  The last integral is absolutely convergent for $\epsilon=0$ since
  $R_0(t)\leq Ct$. The coefficient of $\epsilon^{-k}$ is $b_{-k}/k$
  and is thus the residue of $\zeta$ at $s=k$ divided by $k$, as
  claimed in (iii). Similarly, the coefficient of the
  $\log(1/\epsilon)$ is $b_0=\mathrm{res}_{s=0}\zeta(s;\mu,\omega)$.
 
  To show the independence of $I_0$ on $\mu$ we write the ratio of two
  Morse--Bott functions as $\exp(2\varphi)$ for some smooth function
  $\varphi$.  For $s$ with large positive real part we have
  \begin{equation}\label{e-92}
    \int_X(e^{2\varphi}\mu)^{\frac s2}\omega-\int_X\mu^{\frac s2}\omega
    =s\int_{X}\mu^{\frac s2}\omega(s)
  \end{equation}
  where
  \[
    \omega(s)=\frac{e^{s\varphi}-1}{s}\,\omega
  \]
  is an entire function of $s$ and has a convergent expansion at $s=0$
  with coefficients in $\mathcal A_\mu(X)$.  Thus
  $\int_X\mu^{s/2}\omega(s)$ has an analytic continuation with at most
  a simple pole at $s=0$ and the right-hand side of \eqref{e-92} is
  regular there.  It follows that $\zeta(s;e^{2\varphi}\mu,\omega)$
  and $\zeta(s;\mu,\omega)$ have the same residue at $s=0$.

  Finally, the finite part is
  \[
    I_{\mathrm{finite}}(\mu,\omega)=\sum_{k=-2N+m}^{-1}b_k\,\frac{M^k}k+b_0\log
    M +\int_0^MR_0(t)\frac{dt}t,
  \]
  for any sufficiently large $M$.  It coincides with the value at
  $s=0$ of the expression above for $\zeta(s;\mu,\omega)$ after
  subtraction of the pole $I_0/s=b_0/s$ (the logarithmic term comes
  from $\lim_{s\to 0}(M^s/s-1/s)=\log M$).
\end{proof}
\begin{cor} The coefficient $I_{-k}(\mu,\omega)$, $k=1,2,\dots$ of
  $\epsilon^{-k}$ in $\int_{\mu\geq\epsilon^2}\omega$ is homogeneous
  of degree $k/2$ as a function of $\mu$.
\end{cor}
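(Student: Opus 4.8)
The plan is to reduce everything to the scaling behaviour of the zeta function under $\mu\mapsto\lambda\mu$, $\lambda>0$, and then read off the claim from the residue formula in Theorem \ref{t-01}(iii). Concretely, "homogeneous of degree $k/2$" is to be understood as the identity $I_{-k}(\lambda\mu,\omega)=\lambda^{k/2}I_{-k}(\mu,\omega)$ for all $\lambda>0$; note that $\lambda\mu$ is again a nonnegative Morse--Bott function vanishing on $Y$, conformally equivalent to $\mu$, but that $I_{-k}$ genuinely depends on the chosen representative, so the statement is about the $\mathbb R_{>0}$-action by rescaling.

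First I would observe that for $\mathrm{Re}\,s$ sufficiently large, where the integral converges absolutely, $(\lambda\mu)^{s/2}=\lambda^{s/2}\mu^{s/2}$ gives
\[
  \zeta(s;\lambda\mu,\omega)=\lambda^{s/2}\,\zeta(s;\mu,\omega).
\]
Both sides are meromorphic in $s$ (the right-hand side because $\lambda^{s/2}=e^{(s/2)\log\lambda}$ is entire and $\zeta(s;\mu,\omega)$ admits a meromorphic continuation by part (i) of the Theorem), so by uniqueness of meromorphic continuation the identity persists on all of $\mathbb C$. In particular $\zeta(s;\lambda\mu,\omega)$ has the same (at most simple) poles as $\zeta(s;\mu,\omega)$, located on the progression $s=2N-m,2N-m-2,\dots$.

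Next I would compute the residues. Since $\lambda^{s/2}$ is holomorphic and nonvanishing near $s=k$, multiplying a function with a simple pole at $s=k$ by $\lambda^{s/2}$ multiplies the residue there by the value $\lambda^{k/2}$ of the holomorphic factor:
\[
  \mathrm{res}_{s=k}\zeta(s;\lambda\mu,\omega)
  =\lambda^{k/2}\,\mathrm{res}_{s=k}\zeta(s;\mu,\omega).
\]
Combining this with the formula $I_{-k}(\mu,\omega)=\tfrac1k\mathrm{res}_{s=k}\zeta(s;\mu,\omega)$ of Theorem \ref{t-01}(iii) yields $I_{-k}(\lambda\mu,\omega)=\lambda^{k/2}I_{-k}(\mu,\omega)$, as desired. (As a consistency check, the same conclusion follows directly from the asymptotic expansion in part (ii): the region $\{\lambda\mu\geq\epsilon^2\}$ equals $\{\mu\geq(\epsilon/\sqrt\lambda)^2\}$, and substituting $\epsilon\mapsto\epsilon/\sqrt\lambda$ into $\sum_k I_{-k}(\mu,\omega)\epsilon^{-k}+\cdots$ produces exactly the factor $\lambda^{k/2}$ in front of $\epsilon^{-k}$.)

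I do not anticipate a real obstacle here: the only point requiring a word of care is that the meromorphic continuation commutes with multiplication by the entire function $\lambda^{s/2}$, which is immediate from uniqueness of analytic continuation, and that one is asserting homogeneity in $\mu$ and not conformal invariance — indeed $I_{-k}$ is not a conformal invariant, only its leading-order scaling is this simple.
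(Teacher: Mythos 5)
Your argument is correct and coincides with the paper's own proof: both use the scaling identity $\zeta(s;t\mu,\omega)=t^{s/2}\zeta(s;\mu,\omega)$ together with the residue formula of Theorem \ref{t-01}(iii), and your parenthetical check via the substitution $\epsilon\mapsto\epsilon/\sqrt{t}$ in the expansion of part (ii) is exactly the paper's alternative remark. Nothing is missing.
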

\begin{proof}
  This follows from the obvious identity
  $\zeta(s;t\mu,\omega)=t^{s/2}\zeta(s;\mu,\omega)$, $t>0$, and
  Theorem \ref{t-01} (iii). It is also clear from Theorem \ref{t-01}
  (ii): Replacing $\mu$ by $t\mu$ is the same as replacing $\epsilon$
  by $t^{-1/2}\epsilon$.
\end{proof}
\begin{definition} The {\em Hadamard finite part} of the divergent
  integral $\int_X\omega$ with Morse--Bott function $\mu$ is
  $I_{\mathrm{finite}}(\mu,\omega)$, as defined in Theorem \ref{t-01},
  (ii) or (iii).
\end{definition}

\subsection{Dependence on the Morse--Bott function}\label{ss-2.4}
The following result says how the finite part depends on the
Morse--Bott function in a conformal class.
\begin{theorem}\label{t-02}
  For any function $\varphi\in C^\infty(X)$,
  \[
    I_{\mathrm{finite}}(\mu
    e^{2\varphi},\omega)=I_{\mathrm{finite}}(\mu,\omega) +
    I_0(\mu,\varphi\,\omega),
  \]
  where
  $I_0(\mu,\varphi\,\omega)=
  \mathrm{res}_{s=0}\zeta(s;\mu,\varphi\,\omega)$ is independent of
  $\mu$.
\end{theorem}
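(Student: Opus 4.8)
The plan is to reduce the statement to the computation already carried out in the proof of Theorem~\ref{t-01}, using the identity \eqref{e-92} as the starting point. First I would observe that, just as in \eqref{e-92}, for $s$ with large positive real part one has
\[
  \zeta(s;\mu e^{2\varphi},\omega)-\zeta(s;\mu,\omega)
  =s\int_X\mu^{s/2}\,\omega(s),\qquad
  \omega(s)=\frac{e^{s\varphi}-1}{s}\,\omega .
\]
The factor $\omega(s)$ is an entire family of forms in $\mathcal A_\mu(X)$ with Taylor expansion $\omega(s)=\varphi\,\omega+\tfrac{s}{2}\varphi^2\omega+\cdots$ at $s=0$. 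By Theorem~\ref{t-01}(i) applied coefficientwise (all the $\varphi^k\omega$ lie in $\mathcal A_\mu^n(X)$ with the same $N$, since $\varphi$ is smooth), the integral $s\mapsto\int_X\mu^{s/2}\omega(s)$ has a meromorphic continuation with at most a simple pole at $s=0$; hence the right-hand side above is \emph{regular} at $s=0$, with value equal to the residue at $s=0$ of $\int_X\mu^{s/2}\omega(s)$, which in turn equals $\mathrm{res}_{s=0}\zeta(s;\mu,\varphi\,\omega)=I_0(\mu,\varphi\,\omega)$ because only the leading term $\varphi\,\omega$ of $\omega(s)$ contributes to the residue of the simple pole.

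Next I would extract the finite parts. Writing $\zeta(s;\mu e^{2\varphi},\omega)=I_0'/s+I_{\mathrm{finite}}(\mu e^{2\varphi},\omega)+O(s)$ and $\zeta(s;\mu,\omega)=I_0/s+I_{\mathrm{finite}}(\mu,\omega)+O(s)$ near $s=0$, and using that the difference is regular with value $I_0(\mu,\varphi\,\omega)$, one reads off simultaneously that $I_0'=I_0$ (this part is already in Theorem~\ref{t-01}(iii), but it reappears here for free) and that
\[
  I_{\mathrm{finite}}(\mu e^{2\varphi},\omega)-I_{\mathrm{finite}}(\mu,\omega)=I_0(\mu,\varphi\,\omega),
\]
which is the claim. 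The independence of $I_0(\mu,\varphi\,\omega)$ on $\mu$ within the conformal class is just Theorem~\ref{t-01}(iii) with $\omega$ replaced by $\varphi\,\omega$.

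The only real point requiring care—and the one I expect to be the main obstacle—is justifying the interchange of meromorphic continuation with the (convergent, but infinite) Taylor expansion $\omega(s)=\sum_{k\ge0}\tfrac{s^k}{(k+1)!}\varphi^{k+1}\omega$: one must argue that $\int_X\mu^{s/2}\omega(s)$ genuinely continues meromorphically with only a simple pole at $s=0$, rather than merely term by term. The clean way is to not expand at all: since $\mu^N\omega(s)=\mu^N\tfrac{e^{s\varphi}-1}{s}\omega$ extends to a smooth compactly supported form on $X$ depending holomorphically on $s\in\mathbb{C}$ (with the value at $s=0$ being $\mu^N\varphi\,\omega$), the entire argument in the proof of Theorem~\ref{t-01}(i), run with $\omega(s)$ in place of $\omega$, produces a meromorphic continuation of $\int_X\mu^{s/2}\omega(s)$ in a second variable; specializing to the diagonal and using holomorphy in the auxiliary variable gives the required analytic continuation with at most a simple pole at $s=0$ and residue $I_0(\mu,\varphi\,\omega)$. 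Once this is in place, the rest is the elementary Laurent-coefficient bookkeeping above.
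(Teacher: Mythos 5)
Your proposal is correct and follows essentially the same route as the paper: both start from identity \eqref{e-92} with $\omega(s)=\frac{e^{s\varphi}-1}{s}\,\omega$, use that $\int_X\mu^{s/2}\omega(s)$ has at most a simple pole at $s=0$ to conclude the difference of zeta functions is regular there with value $\mathrm{res}_{s=0}\zeta(s;\mu,\varphi\,\omega)$, and then read off the finite parts, invoking Theorem \ref{t-01}(iii) for the independence of $I_0(\mu,\varphi\,\omega)$ on $\mu$. The analytic-continuation point you flag as delicate is exactly what the paper disposes of (briefly) inside the proof of Theorem \ref{t-01}, so your treatment just makes that step more explicit.
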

\begin{proof}
  Let, as in the proof of Theorem \ref{t-01},
  \[
    \omega(s)=\frac{e^{s\varphi}-1}{s}\,\omega
  \]
  Since $\omega(0)=\varphi\,\omega$, we have by \eqref{e-92}
  \begin{align*}
    I_{\mathrm{finite}}(\mu e^{2\varphi},\omega)-I_{\mathrm{finite}}(\mu,\omega)
    &=\lim_{s\to0}(\zeta(s;e^{2\varphi}\mu,\omega)-\zeta(s;\mu,\omega))\\
    &=\mathrm{res}_{s=0}\zeta(s;\mu,\varphi\,\omega).
  \end{align*}
  By Theorem \ref{t-01} (iii), the right-hand side is independent of
  $\mu$.
\end{proof}
\begin{cor} Suppose that $\mu^{N}\omega$ extend to a smooth form on
  $X$ and that $\varphi \leq \mathrm{const} \, \mu^{N-(m-1)/2}$. Then
  \[
    I_{\mathrm{finite}}(\mu
    e^{2\varphi},\omega)=I_{\mathrm{finite}}(\mu,\omega)
  \]
\end{cor}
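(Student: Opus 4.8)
The plan is to deduce everything from Theorem~\ref{t-02} and then show that the correction term vanishes by a convergence estimate. By Theorem~\ref{t-02},
\[
  I_{\mathrm{finite}}(\mu e^{2\varphi},\omega)-I_{\mathrm{finite}}(\mu,\omega)
  =I_0(\mu,\varphi\,\omega)=\mathrm{res}_{s=0}\zeta(s;\mu,\varphi\,\omega),
\]
so it suffices to prove that $\zeta(s;\mu,\varphi\,\omega)=\int_X\mu^{s/2}\varphi\,\omega$ is holomorphic in a neighbourhood of $s=0$; this I would obtain from absolute convergence of the integral for $\mathrm{Re}\,s$ ranging over a neighbourhood of $0$.

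Since the integrand is smooth away from $Y$, a partition of unity reduces the matter to a neighbourhood of $Y$, where I would invoke the Morse--Bott lemma to pass to local coordinates with $\mu=x_1^2+\cdots+x_m^2$ and then to spherical coordinates $r=\sqrt\mu$ in the normal directions, exactly as in the proof of Theorem~\ref{t-01}. As $\mu^N\omega$ extends smoothly, locally $\omega=\mu^{-N}g\,dx_1\wedge\cdots\wedge dx_n$ with $g$ smooth, hence
\[
  \mu^{s/2}\varphi\,\omega
  =r^{\,s-2N}\,\varphi\,g\;r^{m-1}\,dr\wedge d\Omega\wedge dx_{m+1}\wedge\cdots\wedge dx_n .
\]
The hypothesis gives $|\varphi|\leq\mathrm{const}\cdot\mu^{N-(m-1)/2}=\mathrm{const}\cdot r^{\,2N-m+1}$ near $Y$ (which is how the one-sided bound in the statement is to be read, a fixed continuous $\varphi$ being otherwise allowed a nonzero limit on $Y$), so the powers of $r$ combine to $r^{\,s}$ and the integrand is bounded in absolute value by $\mathrm{const}\cdot r^{\,\mathrm{Re}\,s}\,|g|\,dr\wedge d\Omega\wedge dx_{m+1}\wedge\cdots\wedge dx_n$.

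Since $\int_0 r^{\,\mathrm{Re}\,s}\,dr$ converges for $\mathrm{Re}\,s>-1$ and the remaining variables range over a compact set (recall $\omega$ has compact support), the integral $\int_X\mu^{s/2}\varphi\,\omega$ is absolutely convergent for $\mathrm{Re}\,s>-1$, and by differentiation under the integral sign it is holomorphic there. In particular it is holomorphic at $s=0$, so $\mathrm{res}_{s=0}\zeta(s;\mu,\varphi\,\omega)=0$ and the two finite parts agree. No step here is a genuine obstacle; the only points needing care are the bookkeeping of the exponent $2N-m+1$, chosen precisely so that the factor $r^{m-1}$ from the spherical volume element and the factor $r^{-2N}$ from the singularity of $\omega$ cancel against it to leave $r^{\,\mathrm{Re}\,s}$, and the reading of the hypothesis as a bound on $|\varphi|$.
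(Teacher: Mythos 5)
Your proof is correct and follows essentially the same route as the paper: both reduce via Theorem~\ref{t-02} to showing that $\zeta(s;\mu,\varphi\,\omega)$ has no pole at $s=0$, which the paper gets by noting $\varphi\,\omega=f\,\mathrm{vol}$ with $|f|\leq\mathrm{const}\,\mu^{-(m-1)/2}$ integrable, exactly the absolute-convergence estimate you carry out explicitly in spherical coordinates (and your reading of the hypothesis as a bound on $|\varphi|$ matches the paper's own proof).
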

\begin{proof} In this case $\varphi\,\omega= f \mathrm{vol}$ for some
  smooth volume form vol and a function $f$ such that
  $|f|\leq \mathrm{const}\,\mu^{-(m-1)/2}$ which is integrable. Thus
  $\zeta(s;\mu,\varphi\,\omega)$ is smooth and given by its absolutely
  convergent integral representation at $s=0$.
\end{proof}

\begin{remark} If the codimension $m$ of $Y$ is odd, then $I_0=0$ and
  the finite part is independent of the choice of $\mu$ in its
  conformal class.
\end{remark}
\section{The de Rham complex of differential forms with
  singularities}\label{s-3}
Let $\mu$ be a nonnegative Morse--Bott function on an $n$-dimensional
manifold $X$ vanishing on a submanifold $Y$ of codimension $m$. Let
$\omega$ be a top degree differential on $X\smallsetminus Y$ such that
$\mu^{N}\omega$ extends to a form on $X$ with compact support for some
$N$.  The residue at zero of the zeta function
\[
  I_0(\mu,\omega)=\mathrm{res}_{s=0}\zeta(s;\mu,\omega)
\]
is independent of the choice of the Morse--Bott function $\mu$ within
a fixed conformal class. It vanishes if the codimension of $Y$ is odd,
so in this section we assume that $m$ is even. We define a sheaf of
differential forms whose global sections are the forms
$\mathcal A_\mu(X)$ on which $I_0$ is defined and give a local formula
for $I_0$ on a quasi-isomorphic subcomplex in terms of a residue map.
\subsection{The de Rham complex}\label{ss-3.1}
Let $Y\subset X$ as above, $U=X\smallsetminus Y$ and denote
$j\colon U\to X$ the inclusion map. We write
$\mathcal A_Z=\oplus_j\mathcal A_Z^j$ for the complex of sheaves of
differential forms on a manifold $Z$ with de Rham differential.  Let
$\mu$ be a nonnegative Morse--Bott function vanishing on $Y$, We
identify $\mathcal A_X$ as the subcomplex of $j_*\mathcal A_{U}$
consisting of forms on $U$ that extends to $X$.  We set
\begin{align*}
  \mathcal A_{X,\mu}&=\cup_{N\geq0}\mathcal A_{X,\mu,N},
  \\
  \mathcal A_{X,\mu,N}&=\{\omega\in j_*\mathcal
                        A_{X\smallsetminus Y}: \mu^N\omega\in\mathcal A_X\}.
\end{align*}
\begin{lemma}\label{l-01}
  \
\item{(i)} The de Rham differential maps $\mathcal A_{X,\mu,N}$ to
  $\mathcal A_{X,\mu,N+1}$.  In particular, $\mathcal A_{X,\mu}$ is a
  subcomplex of $j_*\mathcal A _{X\smallsetminus Y}$.
\item{(ii)} $\mathcal A_{X,\mu,N}$ depends only on the conformal class
  of $\mu$.
\end{lemma}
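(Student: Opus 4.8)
The plan is to reduce both statements to the Leibniz rule together with the elementary fact that a smooth form on the dense open set $U=X\smallsetminus Y$ has at most one smooth extension to $X$, so that any identity valid on $U$ between forms that happen to extend to $X$ holds for the extensions as well. (Along the way one notes $\mathcal A_{X,\mu,N}\subseteq\mathcal A_{X,\mu,N+1}$, since $\mu^{N+1}\omega=\mu\cdot(\mu^N\omega)$ extends smoothly whenever $\mu^N\omega$ does, so that the union $\mathcal A_{X,\mu}=\bigcup_{N\geq0}\mathcal A_{X,\mu,N}$ is an increasing union.)

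For part (i), suppose $\omega\in\mathcal A_{X,\mu,N}$, so $\mu^N\omega$ extends to a smooth form on $X$; the claim is that $\mu^{N+1}d\omega$ extends to a smooth form on $X$. On $U$ the Leibniz rule gives $d(\mu^N\omega)=N\mu^{N-1}d\mu\wedge\omega+\mu^N d\omega$, and multiplying by $\mu$ and rearranging yields, on $U$,
\[
  \mu^{N+1}d\omega=\mu\,d(\mu^N\omega)-N\,d\mu\wedge(\mu^N\omega).
\]
The right-hand side is assembled from the smooth function $\mu$ and smooth form $d\mu$ on $X$ together with the smooth form $\mu^N\omega$ and its exterior derivative, all defined on all of $X$; hence it extends smoothly to $X$, and therefore so does $\mu^{N+1}d\omega$. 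This shows $d\omega\in\mathcal A_{X,\mu,N+1}$, and in particular $d$ preserves $\mathcal A_{X,\mu}$, making it a subcomplex of $j_*\mathcal A_{X\smallsetminus Y}$. (For $N=0$ there is nothing to prove, as $\omega$ then already extends to $X$ and $d\omega$ with it.)

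For part (ii), let $\mu'=f\mu$ with $f$ an everywhere positive smooth function on $X$, representing another element of the conformal class. If $\mu^N\omega\in\mathcal A_X$, then $(\mu')^N\omega=f^N\cdot(\mu^N\omega)$; since $f^N$ is smooth on $X$, this product extends to a smooth form on $X$, so $\omega\in\mathcal A_{X,\mu',N}$. Thus $\mathcal A_{X,\mu,N}\subseteq\mathcal A_{X,\mu',N}$, and applying the same argument to the representation $\mu=f^{-1}\mu'$—with $f^{-1}$ again smooth and positive—gives the reverse inclusion, so the two spaces coincide. There is no serious obstacle in either part; the only point meriting a word of care is the passage from identities on $U$ to identities between the extensions on $X$, which is immediate from the density of $U$ and continuity.
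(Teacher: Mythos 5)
Your proof is correct and follows essentially the same route as the paper: part (i) is the Leibniz rule expressing $\mu^{N+1}d\omega$ in terms of forms manifestly smooth on $X$ (the paper uses $d(\mu^{N+1}\omega)-(N+1)d\mu\wedge\mu^N\omega$, a trivial rearrangement of your identity), and part (ii) is the observation that multiplying by the smooth positive factor $f^N$ (or its inverse) preserves smooth extendability. Your extra remarks on unique extension from the dense open set $U$ and on the case $N=0$ are fine but not essentially different from the paper's argument.
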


\begin{proof}
  (i) Suppose $\mu^N\omega$ extends smoothly to $X$ then
  $d(\mu^{N+1}\omega)$ is also smooth and therefore also
  \[
    \mu^{N+1}d\omega=d(\mu^{N+1}\omega)-(N+1)d\mu\wedge\mu^N\omega.
  \]
  The support condition is preserved by the differential. (ii) It is
  clear that $\mathcal A_{X,\mu}=\mathcal A_{X,f\mu}$ for any everywhere
  positive function $f$.
\end{proof}
In the notation of the preceding sections,
$\mathcal A_\mu(X)=\Gamma(X,\mathcal A_{X,\mu})$ is the differential
graded algebra of global sections.

We may then view $I_0$ as a map on compactly supported sections of
$\mathcal A^n_{X,\mu}$:
\[
  I_0\colon\Gamma_c(X,\mathcal A^n_{X,\mu})\to \mathbb C.
\]

\begin{prop}\label{pr-02}
  Let $\psi\in\Gamma_c(X,\mathcal A^{n-1}_{X,\mu})$. Then
  \begin{enumerate}
  \item[(i)] $I_0(\mu,d\psi)=0$.
  \item[(ii)] $I_{\mathrm{finite}}(\mu,d\psi)=I_0(\mu,\psi\wedge\theta)$,
    where $\theta=\frac{d\mu}{2\mu}$.
  \end{enumerate}
\end{prop}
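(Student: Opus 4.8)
The plan is to reduce both statements to the one-dimensional Mellin-transform picture of Section~\ref{ss-2.2} and then read off the residues directly. First I would use a partition of unity to reduce to the case where $\psi$ has support in a small coordinate neighbourhood of a point of $Y$, since away from $Y$ both sides vanish (the form $d\psi$ is then smooth and compactly supported in $X\smallsetminus Y$, so $\zeta$ is entire and $I_0=0$, $I_{\mathrm{finite}}=\int d\psi=0$ by Stokes, while $\psi\wedge\theta$ is likewise smooth away from $Y$). In the local model we may assume $\mu=x_1^2+\cdots+x_m^2$ and pass to spherical coordinates in the normal direction with radial coordinate $r=\sqrt\mu$.

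For part~(i): by the Lemma of Section~\ref{ss-2.2}, writing $\psi$ in the normal-spherical coordinates and decomposing $d\psi=\frac{d\mu}{2\mu}\wedge\alpha+(\text{terms with a }d\mu\text{ that are }O(\text{smooth}))$ is not quite clean; instead I would argue directly on the zeta function. We have $\zeta(s;\mu,d\psi)=\int_X\mu^{s/2}d\psi$ for $\mathrm{Re}\,s\gg0$. Since $d(\mu^{s/2}\psi)=\mu^{s/2}d\psi+\tfrac s2\mu^{s/2-1}d\mu\wedge\psi$ and $\mu^{s/2}\psi$ is compactly supported and, for $\mathrm{Re}\,s\gg0$, extends continuously across $Y$, Stokes gives
\[
  \zeta(s;\mu,d\psi)=-\frac s2\int_X\mu^{s/2-1}\,d\mu\wedge\psi
  =-\frac s2\,\zeta\!\left(s-2;\mu,\tfrac{d\mu}{\mu}\wedge\psi\right)
  = s\,\zeta\!\left(s-2;\mu,\psi\wedge\theta\right).
\]
Wait — I must be careful with the shift: $\mu^{s/2-1}=\mu^{(s-2)/2}$, so the right side is $s\,\zeta(s-2;\mu,\psi\wedge\theta)$ up to sign, where $\psi\wedge\theta\in\mathcal A_\mu^{n}$ has one more power of $\mu$ in the denominator, hence is again of the allowed type. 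Now $\zeta(s-2;\mu,\psi\wedge\theta)$ has at most simple poles on an arithmetic progression; the factor $s$ in front kills any pole at $s=0$, so $I_0(\mu,d\psi)=\mathrm{res}_{s=0}\zeta(s;\mu,d\psi)=0$. The main thing to check carefully here is the validity of Stokes' theorem with the $\mu^{s/2}$ weight near $Y$, i.e.\ that for $\mathrm{Re}\,s$ large enough the boundary contribution from an $\epsilon$-tube around $Y$ vanishes as $\epsilon\to0$; this follows from the estimates in the Lemma of Section~\ref{ss-2.2} and is the one genuinely analytic point.

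For part~(ii): from the identity just derived, $\zeta(s;\mu,d\psi)=s\,\zeta(s-2;\mu,\psi\wedge\theta)$ (with the appropriate sign, which I will fix in the writeup), so by Theorem~\ref{t-01}(iii),
\[
  I_{\mathrm{finite}}(\mu,d\psi)=\lim_{s\to0}\Bigl(\zeta(s;\mu,d\psi)-\tfrac{I_0(\mu,d\psi)}{s}\Bigr)
  =\lim_{s\to0} s\,\zeta(s-2;\mu,\psi\wedge\theta)
  =\mathrm{res}_{s=-2}\zeta(s;\mu,\psi\wedge\theta).
\]
Hmm — this produces $\mathrm{res}_{s=-2}$, not $\mathrm{res}_{s=0}=I_0(\mu,\psi\wedge\theta)$, so the bookkeeping of which form carries the weight must be arranged so that the residue lands at $s=0$; concretely one should take the weight on the ``outer'' integral, i.e.\ write $\zeta(s;\mu,d\psi)$ in terms of $\zeta(s;\mu,\cdot)$ of a form obtained by an integration by parts that does \emph{not} shift $s$, using instead $d(\mu^{s/2})=\tfrac s2\mu^{s/2}\tfrac{d\mu}{\mu}=s\mu^{s/2}\theta$, which gives directly $\zeta(s;\mu,d\psi)=-\,s\!\int_X\mu^{s/2}\,\theta\wedge\psi = s\,\zeta(s;\mu,\psi\wedge\theta)$ up to the sign $(-1)^{?}$ from moving $\theta$ past $\psi$. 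Then $I_{\mathrm{finite}}(\mu,d\psi)=\lim_{s\to0}s\,\zeta(s;\mu,\psi\wedge\theta)=\mathrm{res}_{s=0}\zeta(s;\mu,\psi\wedge\theta)=I_0(\mu,\psi\wedge\theta)$, as claimed, and simultaneously this re-derives $I_0(\mu,d\psi)=\mathrm{res}_{s=0}(s\,\zeta(s;\mu,\psi\wedge\theta))=0$. So the clean route is: integrate by parts against $\mu^{s/2}$ directly (no index shift), justify the boundary term vanishes for $\mathrm{Re}\,s\gg0$, obtain $\zeta(s;\mu,d\psi)=\pm\,s\,\zeta(s;\mu,\psi\wedge\theta)$, and conclude both (i) and (ii) by taking the limit $s\to0$. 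The only obstacle is the Stokes justification near $Y$, which I expect to dispatch with the local spherical-coordinate estimate already used twice in Section~\ref{ss-2.2}; the sign $(-1)^{n-1}$ from $d\psi$ versus $\theta\wedge\psi$ versus $\psi\wedge\theta$ I will track explicitly and it will come out to $+1$ as stated.
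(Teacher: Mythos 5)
Your final ``clean route'' is exactly the paper's argument: integrate by parts using $d(\mu^{s/2})=\tfrac s2\mu^{s/2-1}d\mu$, absorb the extra $\mu^{-1}$ into the form (so that $\tfrac{d\mu}{\mu}\wedge\psi\in\mathcal A_{X,\mu}$ and there is no shift in $s$), obtain $\zeta(s;\mu,d\psi)=-s\,\zeta(s;\mu,\theta\wedge\psi)$, and read off both (i) and (ii) from the simple-pole structure at $s=0$; the abandoned $s\mapsto s-2$ detour and the Stokes/sign bookkeeping you flag are exactly the points the paper also handles (or glosses over) in the same way. So the proposal is correct and follows essentially the same approach as the paper.
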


\begin{proof} By Stokes's theorem we have
  \[
    \int_X\mu^{\frac s2}d\psi=-\frac s2\int_X\mu^{\frac
      s2-1}d\mu\wedge\psi,
  \]
  provided $\mathrm{Re}\,s$ is sufficiently large. Since
  $d\mu/\mu\wedge\psi$ belongs to $\mathcal A_{X,\mu}$, we get the
  identity of meromorphic functions
  \begin{equation}\label{e-desire}
    \zeta(s;\mu,d\psi) =
    - s\,\zeta\left(s;\mu,\frac{d\mu}{2\mu}\wedge\psi\right).
  \end{equation}
  Since the zeta function on the right has only simple poles, the
  left-hand side is regular at zero.  Since $I_0(\mu,d\psi)=0$, the finite
  part is just the value of the zeta function at $s=0$.  By
  \eqref{e-desire},
  \[
    \zeta(0;\mu,d\psi)=-\lim_{s=0}s\,\zeta(s;\mu,\theta\wedge\psi)
    =-\mathrm{res}_{s=0} \zeta(s;\mu,\theta\wedge\psi),
  \]
  which is $-I_0(\mu,\theta\wedge\psi)$ by definition.
\end{proof}

\subsection{The homomorphism to de Rham currents}
Let $\mathcal D^p(X)=\Gamma_c(X,\mathcal A^p_X)$ be the space of compactly supported differential
$p$-forms with the usual Fr\'echet topology. Recall that
the space $\mathcal D'(X)^p$ of de Rham currents of degree $p$ is the space
of continuous linear forms on $\mathcal D^{n-p}(X)$. The complex of currents
is the direct sum $\mathcal D'(X)=\oplus_{p=0}^n\mathcal D'(X)^p$ with differential
$d\colon\mathcal D'(X)^p\to \mathcal D'(X)^{p+1}$ defined by
\[
d \kappa (\varphi)=(-1)^{p+1} \kappa(d\varphi),
\quad \kappa\in \mathcal D'(X)^p,
\quad \varphi\in\mathcal D^{n-p}(X).
\]
For any smooth differential $p$-form $\omega$, the map
$\varphi\mapsto\int_X\omega\wedge\varphi$ defines a current of degree
$p$ and this defines an injective morphism of complexes
$\mathcal D(X)\hookrightarrow \mathcal D'(X)$. A current is said to be
supported on a closed subset $Y$ if it vanishes on all forms with
support in its complement.
\begin{theorem}\label{t-03}
  Let $\omega\in\Gamma(X,\mathcal A^p_{X,\mu})$. Then
  \[
    I_0(\mu,\omega\wedge\mbox{---})\colon\varphi\mapsto
    I_0(\mu,\omega\wedge\varphi),\quad \varphi\in\mathcal D^{n-p}(X),
  \]
  is a de Rham current supported on $Y$. The map
  $\omega\mapsto I_0(\mu,\omega\wedge\mbox{---})$ is a morphism of
  complexes $\Gamma(X,\mathcal A_{X,\mu})\to\mathcal D'(X)$.
\end{theorem}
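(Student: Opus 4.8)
The plan is to verify the two assertions of Theorem \ref{t-03} in turn: first that $\varphi\mapsto I_0(\mu,\omega\wedge\varphi)$ is a well-defined continuous linear functional on $\mathcal D^{n-p}(X)$, hence a current; second that it is supported on $Y$; and third that the assignment $\omega\mapsto I_0(\mu,\omega\wedge\text{---})$ intertwines the de Rham differential on $\mathcal A_{X,\mu}$ with the differential on currents. For the first point I would fix $N$ with $\mu^N\omega\in\mathcal A_X$; then for any $\varphi\in\mathcal D^{n-p}(X)$ we have $\mu^N(\omega\wedge\varphi)\in\mathcal A_X^n$ with compact support, so $\zeta(s;\mu,\omega\wedge\varphi)$ is defined and meromorphic by Theorem \ref{t-01}(i), and $I_0(\mu,\omega\wedge\varphi)=\mathrm{res}_{s=0}\zeta(s;\mu,\omega\wedge\varphi)$ makes sense. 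Linearity in $\varphi$ is immediate since $s\mapsto\zeta(s;\mu,\omega\wedge\varphi)$ is linear in $\varphi$ and taking residues is linear. Continuity is the one genuinely analytic point: I would use the explicit description from the proof of Theorem \ref{t-01}, where locally, after the Morse--Bott normalization $\mu=x_1^2+\cdots+x_m^2$ and passing to polar coordinates in the normal directions, $\mathrm{res}_{s=0}\zeta$ is extracted as the coefficient $b_0$ of $t^0$ in the Laurent expansion of the level-set integral $I(t;\mu,\alpha)$, where $\alpha$ is built from the coefficient function $f$ of $\mu^N(\omega\wedge\varphi)$. This coefficient is given by a universal finite-order differential operator applied to $f$ and then integrated over a sphere times a compact region, so it is bounded by a continuous seminorm of $f$, hence of $\varphi$ on its (fixed compact) support.

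For the support statement, suppose $\varphi\in\mathcal D^{n-p}(X)$ has support in $X\smallsetminus Y$. Then $\omega\wedge\varphi$ is a smooth compactly supported form on the open set $X\smallsetminus Y$ and extends by zero to a smooth compactly supported form on $X$; in particular $\mu^0(\omega\wedge\varphi)\in\mathcal A_X^n$, so we may take $N=0$. But then $\zeta(s;\mu,\omega\wedge\varphi)=\int_X\mu^{s/2}\omega\wedge\varphi$ is an entire function of $s$ (the integrand is smooth with compact support and $\mu^{s/2}$ is bounded near the support, which avoids $Y$), so $\mathrm{res}_{s=0}\zeta=0$. Hence $I_0(\mu,\omega\wedge\varphi)=0$, which is exactly the assertion that the current is supported on $Y$.

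Finally, for the morphism-of-complexes claim I must show that the current attached to $d\omega$ equals $d$ of the current attached to $\omega$, i.e. that for $\omega\in\Gamma(X,\mathcal A^p_{X,\mu})$ and $\varphi\in\mathcal D^{n-p-1}(X)$,
\[
  I_0(\mu,d\omega\wedge\varphi)=(-1)^{p+1}I_0(\mu,\omega\wedge d\varphi).
\]
I would deduce this from the Leibniz rule $d(\omega\wedge\varphi)=d\omega\wedge\varphi+(-1)^p\omega\wedge d\varphi$, which gives the identity of meromorphic functions $\zeta(s;\mu,d(\omega\wedge\varphi))=\zeta(s;\mu,d\omega\wedge\varphi)+(-1)^p\zeta(s;\mu,\omega\wedge d\varphi)$; note $\omega\wedge\varphi\in\Gamma_c(X,\mathcal A^{n-1}_{X,\mu})$ so Proposition \ref{pr-02}(i) applies to the left-hand side and yields $I_0(\mu,d(\omega\wedge\varphi))=0$. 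Taking $\mathrm{res}_{s=0}$ of the displayed identity of meromorphic functions and moving the $(-1)^p$ term across gives $I_0(\mu,d\omega\wedge\varphi)=-(-1)^pI_0(\mu,\omega\wedge d\varphi)=(-1)^{p+1}I_0(\mu,\omega\wedge d\varphi)$, as required. The main obstacle is the continuity estimate in the first step: one has to make sure the local formula for $b_0$ really is continuous in $\varphi$ uniformly, and that the partition-of-unity decomposition used to reduce to the Morse--Bott normal form does not spoil this; once that is in hand, the rest is bookkeeping with residues of the already-established meromorphic zeta function and an application of Proposition \ref{pr-02}.
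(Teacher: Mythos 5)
Your proposal is correct and follows essentially the same route as the paper: the current property rests on the same local Morse--Bott/polar-coordinate computation (the paper phrases it as $I_0(\mu,\omega\wedge\varphi)$ being a finite sum of terms $\int_Y\alpha\wedge D(\varphi)|_Y$ for differential operators $D$, which is exactly your $b_0$-coefficient description), and the morphism-of-complexes part is the identical Leibniz-rule and residue bookkeeping via Proposition \ref{pr-02}(i). Your support argument --- that $\zeta(s;\mu,\omega\wedge\varphi)$ is entire when $\mathrm{supp}\,\varphi$ avoids $Y$, so the residue vanishes --- is a slightly more direct variant of what the paper reads off from its local formula, but the substance is the same.
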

\begin{proof} Suppose $\varphi$ has support in a sufficiently small
  neighbourhood of a point of $Y$. Then a local calculation with
  Morse--Bott coordinates shows that $I_0(\mu,\omega\wedge\varphi)$ is a
  finite sum of terms of the form $\int_Y\alpha\wedge D(\varphi)|_Y$
  for some differential operators $D$. By a partition of unity
  argument, the same holds for general $\varphi$. This is certainly a
  well-defined de Rham current. The fact that the map is a morphism of
  complexes follows from Prop.~\ref{pr-02}. Indeed let
  $\omega\in\Gamma(X,\mathcal A^p_{X,\mu})$,
  $\kappa_{\omega}=I_0(\mu,\omega\wedge\mbox{---})$ and
  $\varphi\in\mathcal D^{n-p-1}(X)$. Then
  \begin{align*}
    d\kappa_\omega(\varphi)&=(-1)^{p+1}I_0(\mu,\omega\wedge d\varphi)\\
                           &=-I_0(\mu,d(\omega\wedge\varphi))+I_0(\mu,d\omega\wedge\varphi)\\
                           &=0+\kappa_{d\omega}(\varphi).
  \end{align*}
\end{proof}

\subsection{The subcomplex of differential forms with tame
  singularities}
Let $m=2r$ be the even codimension of $Y$. We introduce a subcomplex
of the de Rham complex $\mathcal A_{X,\mu}$ which is quasi-isomorphic
to it. It is analogous to the complex of logarithmic forms.
\begin{definition} A differential form $\omega\in\mathcal A_{X,\mu}$
  has {\em tame singularities} if
  \begin{enumerate}
  \item[(i)] $\mu^{r}\omega\in\mathcal A_X$,
  \item[(ii)] $\mu^{r-1}d\mu\wedge\omega\in\mathcal A_X$,
  \end{enumerate}
  for the half-codimension $r$. We denote by
  $\mathcal A^{\mathrm{tame}}_{X,\mu}$ the sheaf of differential forms
  with tame singularities.
\end{definition}
In fact $\mathcal A^{\mathrm{tame}}_{X,\mu}$ is a subcomplex, as we
now show.  More generally we prove that it is part of a filtration of
the complex $\mathcal A_{X,\mu}$:
\[
  \cdots\subset F_p\mathcal A_{X,\mu}\subset F_{p+1}\mathcal
  A_{X,\mu}\subset\cdots\subset A_{X,\mu} =\cup_{p\in\mathbb
    Z}F_pA_{X,\mu}
\]
by subspaces
\[
  F_p\mathcal A_{X,\mu} = \{ \omega\in\mathcal A_{X,\mu} :
  \mu^{p}\omega, \mu^{p-1}d\mu\wedge\omega\in\mathcal A \}.
\]
\begin{lemma}\label{l-07}
  Each $F_p\mathcal A_{X,\mu}$, in particular
  $\mathcal A_{X,\mu}^{\mathrm{tame}}=F_r\mathcal A_{X,\mu}$, is a
  subcomplex.
\end{lemma}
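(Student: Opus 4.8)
The plan is to show that $F_p\mathcal A_{X,\mu}$ is closed under the de Rham differential $d$; since it is clearly a graded subspace of $\mathcal A_{X,\mu}$, this suffices. So take $\omega\in F_p\mathcal A_{X,\mu}$ and verify the two defining conditions for $d\omega$, namely that $\mu^p\,d\omega$ and $\mu^{p-1}d\mu\wedge d\omega$ extend to smooth forms on $X$.

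For the first condition, I would mimic the computation in Lemma \ref{l-01}(i): write
\[
  \mu^p\,d\omega = d(\mu^p\omega) - p\,\mu^{p-1}d\mu\wedge\omega.
\]
The form $\mu^p\omega$ is smooth by hypothesis, so $d(\mu^p\omega)$ is smooth; and $\mu^{p-1}d\mu\wedge\omega$ is smooth by the second defining hypothesis on $\omega$. Hence $\mu^p\,d\omega\in\mathcal A_X$. For the second condition, the natural move is to multiply the same identity by $d\mu/\mu$, or more directly to compute
\[
  \mu^{p-1}d\mu\wedge d\omega
  = \mu^{-1}d\mu\wedge\bigl(d(\mu^p\omega) - p\,\mu^{p-1}d\mu\wedge\omega\bigr).
\]
Since $d\mu\wedge d\mu = 0$, the second term drops, leaving $\mu^{p-1}d\mu\wedge d\omega = \mu^{-1}d\mu\wedge d(\mu^p\omega)$. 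Now I would observe $d\mu\wedge d(\mu^p\omega) = -d(d\mu\wedge \mu^p\omega) = -d(\mu\cdot\mu^{p-1}d\mu\wedge\omega)$, and expand:
\[
  d\bigl(\mu\cdot(\mu^{p-1}d\mu\wedge\omega)\bigr)
  = d\mu\wedge(\mu^{p-1}d\mu\wedge\omega) + \mu\,d(\mu^{p-1}d\mu\wedge\omega)
  = \mu\,d(\mu^{p-1}d\mu\wedge\omega),
\]
again using $d\mu\wedge d\mu=0$. Therefore $\mu^{p-1}d\mu\wedge d\omega = -d(\mu^{p-1}d\mu\wedge\omega)$, which is smooth because $\mu^{p-1}d\mu\wedge\omega\in\mathcal A_X$ by hypothesis and $d$ preserves $\mathcal A_X$.

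The support condition required for membership in $\mathcal A_{X,\mu}$ (compact support, when one is working with the global-sections version) is preserved by $d$, exactly as noted in the proof of Lemma \ref{l-01}; at the sheaf level there is nothing to check. Finally, $\mathcal A^{\mathrm{tame}}_{X,\mu} = F_r\mathcal A_{X,\mu}$ by definition, so it is the special case $p=r$. I expect no real obstacle here: the only mild subtlety is bookkeeping the cancellations coming from $d\mu\wedge d\mu = 0$, which is what makes the second tameness condition close up under $d$ even though naively one would worry about producing a $\mu^{p-2}(d\mu)^{\wedge 2}$ term. A cleaner packaging, if preferred, is to note $F_p\mathcal A_{X,\mu} = \{\omega : \mu^p\omega\in\mathcal A_X \text{ and } d(\mu^p\omega)\in\mathcal A_X\}$ after checking $d(\mu^p\omega) = \mu^p d\omega + p\mu^{p-1}d\mu\wedge\omega$ shows the two descriptions agree; in that description closure under $d$ is immediate since $d^2 = 0$ forces $d(\mu^p\,d\omega)$ to be controlled, but I would present the direct computation above as it is the most transparent.
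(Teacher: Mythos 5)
Your main argument is correct and is essentially the paper's proof: the same identity $\mu^p d\omega=d(\mu^p\omega)-p\mu^{p-1}d\mu\wedge\omega$ handles the first condition, and you arrive (by a slightly longer route) at the same identity $\mu^{p-1}d\mu\wedge d\omega=-d(\mu^{p-1}d\mu\wedge\omega)$ for the second, which the paper gets in one line from $d(\mu^{p-1}d\mu)=0$.

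One caution about your closing ``cleaner packaging'': it is false that $F_p\mathcal A_{X,\mu}=\{\omega:\mu^p\omega\in\mathcal A_X\text{ and }d(\mu^p\omega)\in\mathcal A_X\}$. If $\mu^p\omega$ extends smoothly to $X$ then so does its exterior derivative, so the second condition in that description is vacuous and the set is just $\mathcal A_{X,\mu,p}$, which is strictly larger than $F_p\mathcal A_{X,\mu}$ in general: the identity $d(\mu^p\omega)=\mu^p d\omega+p\,\mu^{p-1}d\mu\wedge\omega$ does not let you conclude that $\mu^{p-1}d\mu\wedge\omega$ is separately smooth, since the two terms may have cancelling singularities (e.g.\ $\omega=\mu^{-p}f\,dx_{m+1}\wedge\cdots$ with $f$ not vanishing on $Y$ satisfies the weaker conditions but not the second defining condition of $F_p$). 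Keep the direct computation, which is fine as written.
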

\begin{proof}
  Suppose $\omega\in F_p\mathcal A_{X,\mu}$. Then
  \begin{align*}
    \mu^pd\omega
    &=d(\mu^p\omega)-p\mu^{p-1}d\mu\wedge\omega,
    \\
    \mu^{p-1}d\mu\wedge d\omega
    &=-d(\mu^{p-1}d\mu\wedge\omega).
  \end{align*}
  The right-hand sides are regular on $X$ by assumption.  Thus
  $F_p\mathcal A_{X,\mu}$ is a subcomplex.
\end{proof}
\begin{prop} Let $Y\subset X$ have codimension $m=2r$. The inclusion
  $\mathcal A_{X,\mu}^{\mathrm{tame}}=F_r\mathcal
  A_{X,\mu}\hookrightarrow \mathcal A_{X,\mu}$ is a quasi-isomorphism.
\end{prop}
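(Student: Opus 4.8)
The plan is to prove this quasi-isomorphism locally, by a partition-of-unity reduction, and then to identify both complexes with explicit models on a Morse--Bott chart. Since the statement is about sheaves, it suffices to check that the inclusion induces an isomorphism on cohomology sheaves, which is a local question. Using the Morse--Bott lemma we may work on a chart $U\cong V\times\mathbb R^m$ with $Y=V\times\{0\}$ and $\mu=x_1^2+\cdots+x_m^2=\rho^2$ in normal coordinates; passing to spherical coordinates $\rho>0$, $y\in S^{m-1}$ on the normal factor, a form in $\mathcal A_{X,\mu}$ is, up to smooth forms on $U$, a finite sum of terms $\rho^{-k}$ (times $d\rho$, times forms on $S^{m-1}\times V$) with $k$ allowed to be as large as one likes. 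The ``tame'' condition $\mu^r\omega,\ \mu^{r-1}d\mu\wedge\omega\in\mathcal A_X$ translates, in these coordinates, to: the most singular allowed behaviour is $\rho^{-2r}=\rho^{-m}$ on the part of $\omega$ not containing $d\rho$, and $\rho^{-(2r-1)}=\rho^{-(m-1)}$ on the part containing $d\rho$ (so that $d\mu\wedge\omega=2\rho\,d\rho\wedge(\cdots)$ lands in the same tame range). So the claim becomes: on the punctured normal disc, the inclusion of the subcomplex of forms with these bounded pole orders into the full complex of forms with arbitrary pole orders is a quasi-isomorphism.

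The key input is the computation of the cohomology sheaf of $\mathcal A_\mu$ carried out in the Appendix (referenced in the introduction): the cohomology sheaf of $\mathcal A_{X,\mu}$ is supported on $Y$ and is concentrated in the degrees where the cohomology of the normal sphere $S^{m-1}$ lives, i.e.\ it is generated over $\mathcal A_Y$ by the class of $1$ and by the class of the ``angular form'' on $S^{m-1}$ (of degree $m-1$), together with its ``radial companion'' forcing the relevant pole order. Concretely, in the normal variables the relevant generator of $H^{m-1}$ can be represented by the standard volume form on $S^{m-1}$, which in Cartesian coordinates is the Leray form $\rho^{-m}\sum_i(-1)^{i-1}x_i\,dx_1\wedge\cdots\widehat{dx_i}\cdots\wedge dx_m$ — this has exactly pole order $m=2r$ and hence already lies in $\mathcal A^{\mathrm{tame}}_{X,\mu}$, and its product with $d\mu$ vanishes, so it satisfies (ii) trivially. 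Thus every cohomology class of $\mathcal A_{X,\mu}$ already has a representative with tame singularities, so the inclusion is surjective on cohomology sheaves.

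For injectivity on cohomology sheaves one must show that if $\omega\in F_r\mathcal A_{X,\mu}$ is exact in $\mathcal A_{X,\mu}$, say $\omega=d\psi$ with $\psi\in\mathcal A_{X,\mu,N}$ for some possibly large $N$, then $\omega=d\psi'$ for some $\psi'\in F_r\mathcal A_{X,\mu}$. Here I would use the fibrewise structure over $Y$ and an explicit homotopy: decompose $\psi$ into its Laurent components in $\rho$, apply the usual radial (Poincar\'e-type) homotopy operator $K$ on the punctured normal disc fibrewise, and check that $K$ does not increase pole order beyond the tame range once one has already restricted attention to a $d$-closed tame $\omega$ — the point being that $d\psi$ tame forces cancellation of the high-order Laurent terms of $d\psi$, which by a direct index bookkeeping on the operators $d$ and $K$ forces the offending high-order terms of $\psi$ to be themselves closed, hence removable. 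This bookkeeping — showing that the homotopy can be chosen to land in $F_r$, i.e.\ that the filtration $F_p$ is compatible with an explicit contraction onto its cohomology — is the main obstacle, and it is exactly where the two conditions (i) and (ii) in the definition of tameness (rather than just a bound on $\mu^p\omega$) are needed: they are precisely what is preserved by both $d$ and the radial homotopy. I would organise this as: first treat the fibre $\mathbb R^m\smallsetminus\{0\}$ with constant coefficients, then upgrade to coefficients in $\mathcal A_Y$ by naturality, and finally patch via a partition of unity subordinate to Morse--Bott charts, using that the quasi-isomorphism statement is local and that $F_r$ is a subcomplex by Lemma \ref{l-07}.
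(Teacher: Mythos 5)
Your reduction to the local Morse--Bott model and your use of the Appendix cohomology computation are in the spirit of the paper's argument, and the surjectivity half is essentially right: every generator $1$, $d\mu/\mu$, $\beta$, $\frac{d\mu}{\mu}\wedge\beta$ of the local cohomology (Proposition \ref{p-01}) admits a representative in $F_r\mathcal A_{X,\mu}$. But two of your supporting claims are off. First, for the angular (Leray) representative $\beta$ the product with $d\mu$ does \emph{not} vanish: one computes $d\mu\wedge\beta=2\mu^{1-r}\,dx_1\wedge\cdots\wedge dx_m$, and $\beta$ is tame because $\mu^{r-1}d\mu\wedge\beta=2\,dx_1\wedge\cdots\wedge dx_m$ is smooth, not because the product is zero. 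Second, your spherical-coordinate translation of tameness is reversed: writing $\omega=d\rho\wedge a+b$ with $a,b$ free of $d\rho$, condition (ii) only sees $b$ (since $d\mu\wedge d\rho=0$), so it is the part \emph{without} $d\rho$ that is constrained to one order less, while the $d\rho$-part may carry the full order $m$ allowed by (i); compare Lemma \ref{l-Varia}, where the maximally singular term is forced to contain $dx_1\wedge\cdots\wedge dx_m=\rho^{m-1}d\rho\wedge d\Omega$.

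The genuine gap is injectivity on cohomology, i.e.\ that a tame form which is exact in $\mathcal A_{X,\mu}$ has a tame primitive. You state the plan (Laurent decomposition in $\rho$, a radial homotopy $K$, ``index bookkeeping'') but explicitly leave the bookkeeping as ``the main obstacle,'' and as sketched it does not go through directly: an element of $\mathcal A_{X,\mu}$ is $\mu^{-N}$ times a smooth (not analytic) form, so it has no finite Laurent expansion in $\rho$; one must first dispose of the subcomplex $\mathcal B$ of forms vanishing to infinite order on $Y$ (acyclic by Lemma \ref{l-05}) and work, via Borel's lemma, in the formal-power-series quotient $\mathcal C_\mu$, where the homogeneous decomposition is governed by the Euler operator splitting $\mathrm{Ker}(L_e)\oplus\mathrm{Im}(L_e)$ (Lemma \ref{l-001}). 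Moreover, ``the offending terms of $\psi$ are closed, hence removable'' is not yet an argument that the remaining primitive lies in $F_r$: what is needed is that the canonical decomposition $\omega=\frac{d\mu}{\mu}\wedge\sigma+\tau$ into basic pieces, and hence the contraction onto the cohomology, is compatible with the filtration $F_p$ --- this filtered compatibility is exactly the content of Lemmas \ref{l-007}, \ref{l-008} and Corollary \ref{c-02}, which is what the paper's proof invokes. Until you prove such a statement (or an equivalent bound showing your homotopy $K$ preserves both conditions (i) and (ii)), the injectivity half, and with it the proposition, is not established. A small additional remark: since a quasi-isomorphism of complexes of sheaves is a purely local statement about cohomology sheaves, no partition-of-unity patching is needed at the end.
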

This is a local statement, so it is sufficient to prove it on a small
ball in $\mathbb R^n$ with $\mu=\sum_{i=1}^mx_i^2$. In this case it
follows from the calculation of the cohomology done in the Appendix,
see Corollary \ref{c-02}.

\subsection{The residue map}
Let $i\colon Y\hookrightarrow X$ be the inclusion map and denote by
$\mathrm{or}_Y$ the orientation bundle of $Y$. We define a residue map
$R\colon \mathcal A_{X,\mu}^{\mathrm{tame}} \to
i_*(\mathrm{or}_Y\otimes\mathcal A_{Y}[-m])$ such that
$I_0(\mu,\omega)=\int_YR(\omega)$ for any top differential form
$\omega\in\Gamma_c(X,\mathcal A_{X,\mu}^{n,\mathrm{tame}})$ with tame
singularities and relatively compact support.  We denote by
$C^\infty_X$ the sheaf of smooth functions on $X$.
\begin{theorem}\label{t-04} Let $m=2r$ be the codimension of $Y$.
  There is a unique morphism of graded $C^\infty_X$-modules
  \[
    R\colon \mathcal A_{X,\mu}^{\mathrm{tame}}\to
    i_*(\mathrm{or}_Y\otimes\mathcal A_{Y}[-m])
  \]
  such that for any $p$-form
  $\omega\in\Gamma(X,\mathcal A_{X,\mu}^{\mathrm{tame}})$ with tame
  singularities and smooth compactly supported $(n-p)$-form
  $\varphi\in\Gamma(X,\mathcal A_X)$,
  \begin{equation}\label{e-31415926539}
    I_0(\mu,\omega\wedge\varphi)=
    \int_YR(\omega)\wedge\varphi.
  \end{equation}
  Moreover $R$ is a morphism of complexes of sheaves.
\end{theorem}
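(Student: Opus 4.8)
\emph{Strategy and reductions.} The plan is to build $R$ by an explicit computation in Morse--Bott coordinates, to deduce existence and uniqueness from the characterizing identity \eqref{e-31415926539}, and to obtain the $C^\infty_X$-linearity and the compatibility with differentials essentially for free from Theorems \ref{t-01} and \ref{t-03} and Lemma \ref{l-07}. A section of $\mathrm{or}_Y\otimes\mathcal A_Y$ over an open set is determined by the densities $\eta\mapsto\int_Y\eta\wedge i^*\varphi$ paired against compactly supported smooth forms $\varphi$ on $X$, since $i$ is a closed embedding and every compactly supported form on $Y$ is of the form $i^*\varphi$; hence \eqref{e-31415926539} determines $R(\omega)$ uniquely if it exists, so $R$ may be constructed on the members of an open cover and glued. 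By Theorem \ref{t-03} the current $\varphi\mapsto I_0(\mu,\omega\wedge\varphi)$ is supported on $Y$, so a partition of unity subordinate to a cover by coordinate charts reduces the construction to the case that $\omega$ has support near a point of $Y$; by the Morse--Bott lemma we may then take $X$ open in $\mathbb R^n$, $Y=\{x_1=\cdots=x_m=0\}$, $\mu=x_1^2+\cdots+x_m^2$, with $y=(x_{m+1},\dots,x_n)$ coordinates along $Y$.

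\emph{The local construction.} For a top form $\sigma$ with $\mu^r\sigma\in\mathcal A_X$, write $\mu^r\sigma=\tilde g\,dx_1\wedge\cdots\wedge dx_n$ and pass to polar coordinates $r=\sqrt\mu$, $\theta\in S^{m-1}$ in the normal directions; using $m=2r$ one gets $\sigma=\frac{d\mu}{2\mu}\wedge\alpha$ with $\alpha=\tilde g\,d\Omega\wedge dx_{m+1}\wedge\cdots\wedge dx_n$, and by the Mellin-transform formula of Section \ref{ss-2.2} together with the proof of Theorem \ref{t-01}, $I_0(\mu,\sigma)$ is the value at $t=0$ of the smooth function $t\mapsto\int_{S^{m-1}\times\mathbb R^{n-m}}\tilde g(t\theta,y)\,d\Omega(\theta)\wedge dx_{m+1}\wedge\cdots\wedge dx_n$, namely $\mathrm{vol}(S^{m-1})\int_Y\tilde g(0,y)\,dx_{m+1}\wedge\cdots\wedge dx_n$. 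Applying this to $\sigma=\omega\wedge\varphi$ with $\omega$ a tame $p$-form, $\tilde g$ is the coefficient of $(\mu^r\omega)\wedge\varphi$, so $\tilde g(0,y)$ is bilinear in the restrictions to $Y$ of $\mu^r\omega$ and of $\varphi$, viewed as sections of $\Lambda^\bullet T^*X|_Y$. Now the second tameness condition $\mu^{r-1}d\mu\wedge\omega\in\mathcal A_X$ forces the restriction of $\mu^r\omega$ to $Y$ to be divisible by the normal volume form, $(\mu^r\omega)|_Y=dx_1\wedge\cdots\wedge dx_m\wedge R'$ for a unique smooth $(p-m)$-form $R'$ on $Y$; under this divisibility the wedge with $dx_1\wedge\cdots\wedge dx_m$ kills every component of $\varphi|_Y$ involving a normal covector, so only $i^*\varphi$ survives and $I_0(\mu,\omega\wedge\varphi)=\mathrm{vol}(S^{m-1})\int_Y R'\wedge i^*\varphi$. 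One then sets $R(\omega)=\mathrm{vol}(S^{m-1})\,R'$. Smoothness of $R(\omega)$ along $Y$ is clear, and coordinate independence, as well as the fact that $R(\omega)$ is a section of $\mathrm{or}_Y\otimes\mathcal A_Y$ and not of $\mathcal A_Y$, follow because a change of Morse--Bott coordinates acts on the normal volume form along $Y$ by the sign of an orthogonal matrix, which (since $X$ is oriented) coincides with the sign by which the coordinates on $Y$ change. Evenness of $m$ is used here too: under the antipodal map of $S^{m-1}$ one has $d\Omega\mapsto(-1)^m d\Omega$. By uniqueness the local maps glue to a morphism of sheaves lowering degree by $m$, as required by the shift $[-m]$.

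\emph{Linearity and the chain-map property.} The sheaf $\mathcal A^{\mathrm{tame}}_{X,\mu}$ is a $C^\infty_X$-module, since $\mu^r(f\omega)=f\,\mu^r\omega$ and $\mu^{r-1}d\mu\wedge(f\omega)=f\,\mu^{r-1}d\mu\wedge\omega$; and from $I_0(\mu,(f\omega)\wedge\varphi)=I_0(\mu,\omega\wedge(f\varphi))=\int_Y R(\omega)\wedge i^*(f\varphi)=\int_Y(i^*f)R(\omega)\wedge i^*\varphi$ together with uniqueness we get $R(f\omega)=(i^*f)R(\omega)$, which is the asserted $C^\infty_X$-linearity (gradedness being built into the construction). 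Finally, by Lemma \ref{l-07} the subcomplex $\mathcal A^{\mathrm{tame}}_{X,\mu}=F_r\mathcal A_{X,\mu}$ is closed under $d$, so $R(d\omega)$ is defined. Unwinding \eqref{e-31415926539}, the current $\varphi\mapsto I_0(\mu,\omega\wedge\varphi)$ equals $i_*\kappa_{R(\omega)}$ with $\kappa_\eta(\psi)=\int_Y\eta\wedge\psi$, and both $\eta\mapsto\kappa_\eta$ and $i_*$ are injective morphisms of complexes of currents, the latter because $m$ is even. Since by Theorem \ref{t-03} the map $\omega\mapsto I_0(\mu,\omega\wedge\mbox{---})$ is a morphism of complexes, $i_*\kappa_{dR(\omega)}=d\,i_*\kappa_{R(\omega)}=I_0(\mu,d\omega\wedge\mbox{---})=i_*\kappa_{R(d\omega)}$, whence $dR(\omega)=R(d\omega)$; thus $R$ is a morphism of complexes of sheaves.

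\emph{Expected main obstacle.} The technical heart is the local construction, and above all the claim that the second tameness condition makes $(\mu^r\omega)|_Y$ divisible by $dx_1\wedge\cdots\wedge dx_m$, so that $I_0(\mu,\omega\wedge\varphi)$ depends on $\varphi$ only through $i^*\varphi$; this requires unwinding the ideal-theoretic content of $\mu^{r-1}d\mu\wedge\omega\in\mathcal A_X$ in Morse--Bott coordinates, together with checking that the resulting form $R'$ is coordinate-independent and carries the correct orientation-bundle twist. Once this is established, uniqueness, $C^\infty_X$-linearity, and compatibility with the differentials are formal consequences of Theorems \ref{t-01}, \ref{t-03} and Lemma \ref{l-07}.
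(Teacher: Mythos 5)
Your proposal follows essentially the same route as the paper's proof: uniqueness from the fact that a de Rham current is represented by at most one smooth form (plus cutoffs and $C^\infty_X$-linearity to pass to local sections), reduction by partition of unity to Morse--Bott coordinates, computation of $I_0$ via the polar/Mellin argument of Theorem \ref{t-01}, and use of the second tameness condition to show that $(\mu^r\omega)|_Y$ is divisible by the normal volume form, giving $R(\omega)=\mathrm{vol}(S^{m-1})\,R'$, which agrees with the paper's $\frac{2\pi^r}{(r-1)!}\,\alpha|_Y$ from Lemma \ref{l-Varia} and Theorem \ref{t-05}, and the chain-map property via Proposition \ref{pr-02}. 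The divisibility claim you flag as the main obstacle is precisely the paper's Lemma \ref{l-Varia}, established there by a short argument (the condition $\sum_i x_i\,dx_i\wedge\psi\equiv 0\bmod \mu$ forces $\psi_J|_Y=0$ for $|J|<m$, since a term linear in the normal variables cannot be divisible by the quadratic $\mu$), so your outline is correct as it stands.
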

\dontprint{
  \begin{remark} The assumption of orientability of $Y$ is not
    necessary.  The construction of $R$ is local and depends on a
    local choice of orientation.  If we take the opposite orientation,
    $R$ is replaced by $-R$.  Thus $R$ defines a map to
    $i_*(\mathrm{Or}_Y\otimes \mathcal A_Y$, where $\mathrm{Or}_Y$ is
    the orientation bundle of $Y$, and the integration over $Y$ is
    well-defined even if $Y$ is not orientable.
  \end{remark}
} The proof of Theorem \ref{t-04} occupies the rest of this section.

We first discuss uniqueness. First of all for any global section
$\omega\in\Gamma_c(X,\mathcal A_{X,\mu}^{\mathrm{tame}})$, $R(\omega)$
is uniquely determined by \eqref{e-31415926539} since a de Rham
current is represented by at most one smooth form. It remains to show
that $R$ is uniquely determined by its action on global sections. This
follows from the fact that it is linear over the algebra of
functions: let $\omega$ be a section on an open set $U\subset X$. Then
for any $f\in C^\infty(U)$ with compact support, $f\omega$ extends (by
zero) to $X$ and since $R$ is a map of sheaves we obtain that
$I_0(\mu,f\omega\wedge\varphi)=\int_YR(f\omega)\wedge\varphi$ for all
$\varphi$ with support in $U$. We may now choose $f$ to be 1 on the
support of $\varphi$, so that $f\varphi=\varphi$. By the
$C^\infty_X$-linearity of $R$, it follows that \eqref{e-31415926539}
holds for sections $\omega$ on any open subset $U$ and $\varphi$ with
compact support in $U$. Therefore $R$ is uniquely defined as a map of
sheaves. The uniqueness also implies that $R$ is a morphism of
complexes: by Prop.~\ref{pr-02},
\begin{align*}
  I_0(\mu,d\omega\wedge\varphi)&=-(-1)^{p}I_0(\mu,\omega\wedge d\varphi)\\
                           &=-(-1)^{p}\int_Y R(\omega)\wedge d\varphi\\
                           &=(-1)^{m}\int_Y d R(\omega)\wedge \varphi,
\end{align*}
and therefore $R(d\omega)=(-1)^md R(\omega)$ ($(-1)^md$ is the
differential of $i_*\mathcal A_Y[-m]$).

To prove existence, we claim that we may assume that $X$ is a small
ball in $\mathbb R^{n}$ with the standard orientation and that
$\mu=x_1^2+\dots+x_m^2$. To reduce the general case to this local
statement, notice that we may an open cover $(U_i)$, such that on each
$U_i$ we have Morse--Bott coordinates. Assuming the local statement,
we obtain, for each global section $\omega$, forms $R_i(\omega)$
defined on $U_i$ such that
\[
  I_0(\mu,\omega\wedge\varphi)=\int_Y R_i(\omega)\wedge\varphi
\]
for all forms $\varphi$ with support on $U_i$. By uniqueness, these
forms $R_i(\omega)$ must agree on intersections and are thus
restrictions of a unique form $R(\omega)$ on $Y$.

From now on, we thus assume that $X$ is small ball around the origin
of $\mathbb R^n$ and that $\mu=\sum_{i=1}^mx_i^2$, so that $Y$ is
given by the equations $x_1=\cdots=x_m=0$.

\begin{lemma}\label{l-Varia}
  Let $\omega$ be a differential form with tame singularities. Then
  \[
    \omega=\frac1{(x_1^2+\cdots+x_m^2)^{\frac
        m2}}(dx_1\wedge\cdots\wedge
    dx_m\wedge\alpha+\sum_{i=1}^mx_i\alpha_i)
  \]
  for some smooth forms $\alpha,\alpha_i$.
\end{lemma}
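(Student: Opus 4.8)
The plan is to start from the general structure of an arbitrary form in $\mathcal A_{X,\mu}$ and then extract the consequences of the two tameness conditions. Since $\mu = x_1^2 + \cdots + x_m^2$ and $\mathcal A_{X,\mu}$ is exhausted by the spaces $\mathcal A_{X,\mu,N}$, we know $\mu^N\omega$ is a smooth form for some $N$; writing $\omega = \mu^{-N}\beta$ with $\beta$ a smooth form, we see $\omega$ has a pole along $Y=\{x_1=\cdots=x_m=0\}$ whose order is controlled by $N$. The essential point is that tameness should force the pole order down to exactly $r=m/2$, up to a specific remainder, which is what the asserted formula records: the ``leading'' term $\mu^{-r}\,dx_1\wedge\cdots\wedge dx_m\wedge\alpha$ and the ``subleading'' terms $\mu^{-r}\sum_i x_i\alpha_i$ (which, since each $x_i$ divides one power of $\mu$ only in a weak sense, effectively live at pole order $r-\tfrac12$, i.e., are less singular than top order in the normal sphere variable).

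The key steps, in order, are as follows. First I would pass to normal spherical coordinates $x_i = \rho\, y_i$ with $\rho = \sqrt\mu \ge 0$ and $y\in S^{m-1}$, keeping $x_{m+1},\dots,x_n$ as is, so that $\mu = \rho^2$ and a general smooth form pulled back is $\rho$-analytic (indeed smooth and even/odd in $\rho$ in the appropriate sense) in these coordinates. Condition (i), $\mu^r\omega = \rho^{2r}\omega = \rho^m\omega \in \mathcal A_X$, says precisely that $\rho^m\omega$ extends smoothly across $\rho=0$; hence $\omega = \rho^{-m}\gamma$ for a smooth form $\gamma$, and the worst behaviour is captured by the component of $\gamma$ proportional to the sphere volume form $d\Omega$ in the $y$-variables, which corresponds exactly to $dx_1\wedge\cdots\wedge dx_m$ up to the factor $\rho^{m-1}$. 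Second, I would use condition (ii), $\mu^{r-1}d\mu\wedge\omega = \rho^{2r-2}\,d(\rho^2)\wedge\omega = 2\rho^{m-1}\,d\rho\wedge\omega \in \mathcal A_X$, to control the remaining components: writing $\gamma$ as $d\rho\wedge\gamma' + \gamma''$ with $\gamma',\gamma''$ not containing $d\rho$, condition (ii) says $\rho^{-1}d\rho\wedge\gamma''$ extends smoothly, forcing $\gamma''$ to be divisible by $\rho$, i.e., to contribute only the less-singular $\mu^{-r}\sum_i x_i\alpha_i$ type terms. Third, I would reorganize: the part of $\gamma$ of the form $d\rho\wedge(\text{stuff})$, once re-expressed in the $x_i$, either already has an explicit $x_i$ factor or combines with $\gamma''$ into the stated shape; collecting everything gives $\omega = \mu^{-m/2}\bigl(dx_1\wedge\cdots\wedge dx_m\wedge\alpha + \sum_i x_i\alpha_i\bigr)$ with $\alpha, \alpha_i$ smooth (and only involving $dx_{m+1},\dots,dx_n$ in $\alpha$, though the statement does not insist on this).

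The main obstacle I expect is bookkeeping the change of coordinates $x \leftrightarrow (\rho,y)$ cleanly: the differentials $dx_i = y_i\,d\rho + \rho\,dy_i$ mix the radial and angular directions, and one must check carefully that a form which is smooth in $(\rho,y)$ and has the right parity actually descends to something of the claimed polynomial-in-$x_i$-over-$\mu^{r}$ form — in particular that the angular forms $dy_i$ on $S^{m-1}$, which are not globally defined one-forms, only ever appear in combinations ($\rho^{m-1}d\Omega$, or products with $x_i = \rho y_i$) that do glue to honest smooth forms in the $x$-coordinates. This is the standard subtlety in translating between Morse--Bott normal coordinates and spherical coordinates, and it is exactly the place where the two tameness conditions (rather than just one) are needed: condition (i) alone controls the $d\Omega$-component, and condition (ii) is what kills the would-be intermediate-order angular terms that do not have an $x_i$ prefactor.
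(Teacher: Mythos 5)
There is a genuine gap, and it sits exactly where you yourself flagged the ``main obstacle''. Your use of the hypotheses is fine as far as it goes: condition (i) gives that $\gamma:=\mu^{m/2}\omega$ is smooth in the $x$-coordinates, and condition (ii) says that $\mu^{-1}d\mu\wedge\gamma=2\rho^{-1}d\rho\wedge\gamma$ extends smoothly. But the polar splitting $\gamma=d\rho\wedge\gamma'+\gamma''$ is only defined off $Y$ and its pieces are not smooth in $x$ (nor is $d\rho$), so ``forcing $\gamma''$ to be divisible by $\rho$'' is both unproved as stated and, more importantly, not the conclusion you need: $\rho=\sqrt\mu$ is not smooth, and $O(\rho)$ behaviour does not by itself yield the representation $\sum_i x_i\alpha_i$ with \emph{smooth} $\alpha_i$. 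What the lemma actually requires is that every Cartesian component $\psi_I$ of $\gamma=\sum_I dx^I\wedge\psi_I$ (with $I\subset\{1,\dots,m\}$ and $\psi_I$ free of $dx_1,\dots,dx_m$) with $|I|<m$ vanishes along $Y$; Hadamard's lemma then converts that vanishing into the $\sum_i x_i\alpha_i$ shape. Your condition (ii) step only constrains the $d\rho$-free part $\gamma''$, and says nothing about the components of $d\rho\wedge\gamma'$ containing some but not all of the normal differentials; the remark that these ``already have an explicit $x_i$ factor'' after rewriting $d\rho=\rho^{-1}\sum_i x_i\,dx_i$ does not help, because the rewriting introduces an extra $\rho^{-1}$, so the coefficients multiplying $x_i$ are not smooth. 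Note also that a single Cartesian term such as $dx_1\wedge\psi_{\{1\}}$ contributes to both $\gamma'$ and $\gamma''$ (since $dx_1=y_1\,d\rho+\rho\,dy_1$), so the dictionary between your polar splitting and the components you must control is not the clean one your reorganization step assumes.

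For comparison, the paper avoids polar coordinates entirely here. Writing $\psi=\mu^{m/2}\omega=\sum_I dx^I\wedge\psi_I$ with smooth $\psi_I$, condition (ii) becomes the divisibility statement $\sum_{i=1}^m x_i\,dx_i\wedge\psi\equiv 0 \bmod (x_1^2+\cdots+x_m^2)$. The right-hand side vanishes to second order along $Y$, while the coefficient of each $dx^J$ on the left is $\sum_{i\in J}\pm x_i\,\psi_{J\smallsetminus\{i\}}$, whose first-order term along $Y$ is $\sum_{i\in J}\pm x_i\,\psi_{J\smallsetminus\{i\}}|_Y$; this must vanish identically, and since the $x_i$ are independent coordinates one gets $\psi_I|_Y=0$ for all $|I|<m$. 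Those terms then go into $\sum_i x_i\alpha_i$, and the remaining term $dx_1\wedge\cdots\wedge dx_m\wedge\psi_{\{1,\dots,m\}}$ supplies $\alpha$. If you wish to keep the polar picture, you would have to prove the descent statement you allude to (smoothness in $(\rho,y)$ with the right parity and divisibility implies the claimed Cartesian form), which is essentially equivalent in difficulty to the Cartesian argument just described; as written, that step is missing from your proposal.
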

\begin{proof}
  By the first condition for tame singularities,
  $\omega=\mu^{-\frac m2}\psi$ with smooth $\psi$. Let
  $dx^I=dx_{i_1}\wedge\cdots\wedge dx_{i_k}$ for $I=\{i_1<\dots<i_k\}$
  and set $|I|=k$.  $\psi=\sum_Idx^I\wedge\psi_I$ for some forms
  $\psi_I$ not involving $dx_1,\dots,dx_m$. The claim is that the
  second condition implies that $\psi_I$ vanishes on $Y$ for $|I|<m$
  and thus contributes to $\sum_{i=1}^mx_i\alpha_i$. To prove this
  claim, notice that the second condition may be written as
  \[
    \sum_{i=1}^mx_idx_i\wedge\psi\equiv 0\mod x_1^2+\cdots+x_m^2,
  \]
  implying that $\sum_{i=1}^mx_idx_i\wedge\psi|_{Y=0}=0$. This
  condition has the form
  \[
    \sum_{i\in I}\pm x_i\psi_{I\smallsetminus\{i\}}|_Y=0.
  \]
  Thus $\psi_J|_Y$ vanishes for all $J$ of the form
  $I\smallsetminus\{i\}$, namely such that $|J|<m$.
\end{proof}
We can now compute the residue of the zeta function in spherical
coordinates.  Let $x=(x',x'')$ with $x'$ the first $m$
coordinates. Write $x'=ry$ with $y\in S^{m-1}$ on the unit sphere with
volume form $d\Omega(y)$.  Then, in the notation of Lemma
\ref{l-Varia},
\[
  \zeta(s;\mu,\omega\wedge\varphi)=\int_{\mathbb{R}_{\geq0}\times
    S^{m-1}\times \mathbb R^{n-m}}r^{s-1}dr\wedge
  d\Omega(y)\wedge(\alpha\wedge\varphi+O(r)).
\]
This integral is a holomorphic function of $s$ in the right half-plane
and has a simple pole at $s=0$. Its residue can be computed as in the
proof of Theorem \ref{t-01} by first integrating over $r$. To do this
calculation we need to choose an orientation of $Y$ (a trivialization
of $\mathrm{or}_Y$), which we take to be defined by
$dx_{m+1}\wedge\cdots\wedge d x_n$ and the compatible orientation
$dx_1\wedge\cdots\wedge dx_m$ of the fibres.
\begin{align*}
  I_0(\mu,\omega\wedge\varphi) 
  &=
    \int_{S^{m-1}}d\Omega \int_{\mathbb R^{n-m}}  
    \alpha\wedge\varphi
  \\
  &= 
    \frac{2\pi^{\frac m2}}{(m/2-1)!}
    \int_Y\alpha\wedge\varphi.
\end{align*}
Thus the claim of the Theorem holds with
$R(\omega)=\frac{2\pi^{m/2}}{(m/2-1)!}\alpha|_Y$. If we change the
trivialization of $\mathrm{or}_Y$, the orientation of the fibres, and
thus $R$, change sign, and we get a well defined form on $Y$ twisted
by the orientation bundle. It is clear from the definition in Lemma
\ref{l-Varia} that $R$ is $C^\infty$-linear. The proof of Theorem
\ref{t-04} is complete.

\subsection{An explicit formula}\label{ss-3.5}
The proof of Lemma \ref{l-Varia} gives an explicit formula for
$R(\omega)$ in terms of Morse--Bott coordinates. We may formulate it
more invariantly as follows. The Hessian of the Morse--Bott function
$\mu$ defines a euclidean metric on the normal bundle. Each local
trivialization of $\mathrm{or}_Y$ defines an orientation of the normal
bundle. These two data define a volume form $V(\mu)$ on the normal
bundle, which we may view as a section of
$\mathrm{or}_Y\otimes\wedge^mT^*X|_Y$ vanishing on vectors tangent to
$Y$.

\begin{theorem}\label{t-05} Let $m=2r$ be the even codimension of $Y$.
  Let $\omega\in\Gamma(X,\mathcal A_{X,\mu}^{\mathrm{tame}})$. Then
  \[
    R(\omega)=\frac{2\pi^r}{(r-1)!}\frac{(\mu^{r}\omega)|_Y}{V(\mu)}.
  \]
  More properly, the restriction of $\mu^r\omega$ to $Y$ is of the
  form $V(\mu)\wedge \alpha$ and
  $R(\omega)=\frac{2\pi^r}{(r-1)!}\alpha$.
\end{theorem}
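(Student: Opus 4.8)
The plan is to reduce the assertion to the local computation already carried out in the proof of Theorem \ref{t-04}, and then to recognize the constant and the form $\alpha$ occurring there in an invariant way. Since $R$ is a morphism of graded $C^\infty_X$-modules and both sides of the claimed identity depend on $\omega$ only through the smooth form $\mu^r\omega\in\mathcal A_X$, and in fact only through its restriction to $Y$, it suffices to verify the formula on a neighbourhood of an arbitrary point $y\in Y$. By the Morse--Bott lemma I may choose coordinates $x_1,\dots,x_n$ near $y$ with $\mu=x_1^2+\cdots+x_m^2$ and $Y=\{x_1=\cdots=x_m=0\}$, equipped with the standard orientation of $X$ and the trivialization of $\mathrm{or}_Y$ given by $dx_{m+1}\wedge\cdots\wedge dx_n$ together with the compatible orientation $dx_1\wedge\cdots\wedge dx_m$ of the normal directions, exactly as in the proof of Theorem \ref{t-04}.

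First I would identify $V(\mu)$ in this chart. The normal directions along $Y$ are spanned by $\partial_{x_1},\dots,\partial_{x_m}$, and in Morse--Bott normal form the Hessian quadratic form of $\mu$ is $x_1^2+\cdots+x_m^2$, so the euclidean metric it induces on the normal bundle makes $\partial_{x_1},\dots,\partial_{x_m}$ orthonormal; with the orientation fixed above its volume form is $V(\mu)=dx_1\wedge\cdots\wedge dx_m$, viewed as a section of $\mathrm{or}_Y\otimes\wedge^mT^*X|_Y$ killing vectors tangent to $Y$.

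Next I would invoke Lemma \ref{l-Varia}: tameness of $\omega$ gives
\[
  \mu^r\omega=dx_1\wedge\cdots\wedge dx_m\wedge\alpha+\sum_{i=1}^m x_i\,\alpha_i
\]
with $\alpha,\alpha_i$ smooth. Restricting to $Y$ annihilates the second sum, so $(\mu^r\omega)|_Y=V(\mu)\wedge\alpha|_Y$; this establishes the divisibility asserted in the ``more properly'' clause. The quotient $(\mu^r\omega)|_Y/V(\mu)$ is well defined as a section of $\mathrm{or}_Y\otimes\mathcal A_Y$ over $Y$: pointwise, $\beta\mapsto V(\mu)\wedge\beta$ annihilates every term of $\beta\in\wedge^{p-m}T^*X|_Y$ involving some $dx_i$ with $i\leq m$, hence factors through the restriction map $\wedge^{p-m}T^*X|_Y\to\wedge^{p-m}T^*Y$, on which it becomes injective; therefore the $\alpha$ of the statement is unambiguously $i^*\alpha$. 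Finally, the residue computation in the proof of Theorem \ref{t-04}---integrating $r^{s-1}\,dr$ against $d\Omega(y)\wedge(\alpha\wedge\varphi+O(r))$ over $\mathbb R_{\geq0}\times S^{m-1}\times\mathbb R^{n-m}$ and taking the residue at $s=0$---gives exactly $R(\omega)=\frac{2\pi^{m/2}}{(m/2-1)!}\alpha|_Y=\frac{2\pi^r}{(r-1)!}i^*\alpha$, which is the claimed formula. The local answers glue by the uniqueness statement of Theorem \ref{t-04}.

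The only steps that are not pure bookkeeping are the linear-algebra check that ``division by $V(\mu)$'' is a well-defined operation landing in $\mathrm{or}_Y\otimes\mathcal A_Y$---so that the right-hand side of the formula makes invariant sense independently of the Morse--Bott chart---and keeping track of the normalization of the Hessian metric, which must be taken so that $\mu$ is literally the squared distance to $Y$ in Morse--Bott coordinates; with that convention the constant produced by the sphere integral is precisely $2\pi^r/(r-1)!$, matching Theorem \ref{t-04}. Everything else is a rereading of the computation already performed there.
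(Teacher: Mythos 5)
Your argument is correct and is essentially the paper's own: Theorem \ref{t-05} is obtained there precisely by rereading the local computation in the proof of Theorem \ref{t-04} together with Lemma \ref{l-Varia}, identifying $V(\mu)=dx_1\wedge\cdots\wedge dx_m$ in Morse--Bott coordinates and noting that division by $V(\mu)$ makes invariant sense as a section of $\mathrm{or}_Y\otimes\mathcal A_Y$. Your explicit remark on normalizing the Hessian metric so that $\mu$ is the squared distance in Morse--Bott coordinates (and your check that $(\mu^r\omega)|_Y$ is divisible by $V(\mu)$) just spells out what the paper leaves implicit, so nothing further is needed.
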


\subsection{Dependence on the conformal class of the Morse--Bott
  function}\footnote{This approach was suggested to us by Tomer
  Schlank}\label{ss-dependence}
To compare the complexes $\mathcal A_{X,\mu}$ for different $\mu$ we
notice that for any two such $\mu_0,\mu_1$, the function
$\mu\colon (x,t)\mapsto t\mu_1(x)+(1-t)\mu_0(x)$ is a nonnegative
Morse--Bott function on $X\times I$ vanishing on
$Y\times I\subset X\times I$, where $I=[0,1]$ and restricting to
$\mu_j$ at the endpoints.  Let $p\colon X\times I\to X$ be the
projection to the first factor.  We then have maps
\begin{equation}\label{e-face}
  \mathcal A_{X,\mu_0}\leftarrow p_*\mathcal A_{X\times
    I,\mu}\to\mathcal A_{X,\mu_1}.
\end{equation}

\begin{prop}\label{p-UrsuleMirouet}
  These maps are quasi-isomorphisms of complexes of sheaves.
\end{prop}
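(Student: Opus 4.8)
The plan is to deduce the Proposition from the computation of the cohomology sheaf of $\mathcal A_\nu$ carried out in the Appendix. One cannot argue by the naive fibre-integration homotopy along $p$: neither $p^*$ nor the homotopy operator preserves the singularity classes. For instance, with $X=\mathbb R^2$, $Y=\{0\}$, $\mu_0=x_1^2+2x_2^2$, $\mu_1=x_1^2+x_2^2$, so that $\mu=x_1^2+(2-t)x_2^2$, and with $\omega=\mu_1^{-1}dx_1\wedge dx_2\in\mathcal A_{X,\mu_1}$, one checks that $\mu^Np^*\omega=\dfrac{(x_1^2+(2-t)x_2^2)^N}{x_1^2+x_2^2}\,dx_1\wedge dx_2$ is \emph{not} smooth on $X\times I$ for any $N$. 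The remedy is to enlarge every complex to the complex of \emph{all} smooth forms on the complement of the zero locus, where the analogous homotopy is unobstructed, and to invoke the Appendix to see that this enlargement is a quasi-isomorphism.

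Write $j\colon X\smallsetminus Y\hookrightarrow X$ and $j'\colon(X\smallsetminus Y)\times I\hookrightarrow X\times I$ for the open inclusions and $\bar p\colon(X\smallsetminus Y)\times I\to X\smallsetminus Y$ for the projection, so that $p\circ j'=j\circ\bar p$. I would use two facts. First, the computation of $\mathcal H^\bullet(\mathcal A_\nu)$ in the Appendix, applied to the Morse--Bott functions $\mu_0,\mu_1$ on $X$ and to $\mu$ on $X\times I$, shows that the inclusions $\mathcal A_{X,\mu_0}\hookrightarrow j_*\mathcal A_{X\smallsetminus Y}$, $\mathcal A_{X,\mu_1}\hookrightarrow j_*\mathcal A_{X\smallsetminus Y}$ and $\mathcal A_{X\times I,\mu}\hookrightarrow j'_*\mathcal A_{(X\smallsetminus Y)\times I}$ are quasi-isomorphisms of complexes of sheaves: each map is an isomorphism away from the zero locus, while along it the source has the cohomology sheaves computed in the Appendix, which coincide with those of the target ($j_*\mathcal A_{X\smallsetminus Y}$ computes $Rj_*\underline{\mathbb C}_{X\smallsetminus Y}$ since the $\mathcal A^q$ are soft). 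Second, all the sheaves involved are modules over the relevant sheaf of smooth functions, hence soft; therefore the higher direct images of their terms under $p$, $j$, $j'$ and $\bar p$ vanish, so these pushforward functors compute the corresponding right derived functors and in particular preserve quasi-isomorphisms of bounded-below complexes.

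Applying $p_*$ to the inclusion $\mathcal A_{X\times I,\mu}\hookrightarrow j'_*\mathcal A_{(X\smallsetminus Y)\times I}$ and using $p\circ j'=j\circ\bar p$, I get a quasi-isomorphism $p_*\mathcal A_{X\times I,\mu}\xrightarrow{\ \sim\ }j_*\bigl(\bar p_*\mathcal A_{(X\smallsetminus Y)\times I}\bigr)$. Restriction to $t=1$ defines a morphism of complexes $\bar p_*\mathcal A_{(X\smallsetminus Y)\times I}\to\mathcal A_{X\smallsetminus Y}$ whose effect on cohomology sheaves is the restriction $H^\bullet(V\times I)\to H^\bullet(V)$, $V\subseteq X\smallsetminus Y$ open, an isomorphism by the Poincar\'e lemma for the interval; hence this morphism is a quasi-isomorphism, and so is $j_*\bigl(\bar p_*\mathcal A_{(X\smallsetminus Y)\times I}\bigr)\to j_*\mathcal A_{X\smallsetminus Y}$ obtained by applying $j_*$. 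Composing, $p_*\mathcal A_{X\times I,\mu}\to j_*\mathcal A_{X\smallsetminus Y}$ is a quasi-isomorphism; but by construction it factors as
\[
  p_*\mathcal A_{X\times I,\mu}\ \longrightarrow\ \mathcal A_{X,\mu_1}\ \hookrightarrow\ j_*\mathcal A_{X\smallsetminus Y},
\]
where the first arrow is restriction to $t=1$ and the second is the inclusion, already known to be a quasi-isomorphism. Since the second arrow is a quasi-isomorphism, the first one is too; the same argument with $t=0$ handles $\mathcal A_{X,\mu_0}$.

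The main obstacle, and indeed the only non-formal ingredient, is the first of the two facts above: that $\mathcal A_{X,\nu}\hookrightarrow j_*\mathcal A_{X\smallsetminus Y}$ is a quasi-isomorphism for a Morse--Bott function $\nu$. This is exactly what the cohomology-sheaf computation of the Appendix provides --- it is there that the local structure of forms with Morse--Bott singularities is analyzed, in particular that the cohomology sheaf $\mathcal H^{m-1}(\mathcal A_\nu)$ along $Y$ is represented by the appropriate angular form (whose restriction to the normal sphere is its fundamental class, so that the inclusion is a stalkwise isomorphism on cohomology). Everything else is formal homological algebra together with the softness of fine sheaves.
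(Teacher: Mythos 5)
Your opening observation is correct and well chosen: $p^*$ does not map $\mathcal A_{X,\mu_1}$ into $\mathcal A_{X\times I,\mu}$ when $\mu$ genuinely depends on $t$, so the naive fibrewise homotopy is obstructed. But the repair you propose rests on a claim that is false, and in fact the Appendix proves the opposite of what you attribute to it. Take $X$ a small ball, $Y=\{x_1=\cdots=x_m=0\}$, $\nu=x_1^2+\cdots+x_m^2$ with $m=2r$ even. Proposition \ref{p-01} gives $H(\mathcal A_\nu(X))\cong\bigwedge(\bar\alpha,\bar\beta)$ with nonzero classes in degrees $0,1,m-1,m$, whereas the stalkwise cohomology of $j_*\mathcal A_{X\smallsetminus Y}$ along $Y$ is $H^\bullet(S^{m-1})$, nonzero only in degrees $0$ and $m-1$. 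Concretely, $\bar\alpha=[d\nu/\nu]$ is killed by the inclusion $\mathcal A_{X,\nu}\hookrightarrow j_*\mathcal A_{X\smallsetminus Y}$, because its primitive $\log\nu$ is smooth on $X\smallsetminus Y$ but does not lie in $\mathcal A_\nu$ (no power $\nu^N\log\nu$ is smooth); likewise $\bar\alpha\bar\beta$ dies in degree $m$ (already for $m=2$: $dr/r$ and $dr/r\wedge d\theta$ are exact on the punctured disc but not in $\mathcal A_\nu$). So the cohomology sheaves of source and target do \emph{not} coincide, the inclusions $\mathcal A_{X,\mu_i}\hookrightarrow j_*\mathcal A_{X\smallsetminus Y}$ and $\mathcal A_{X\times I,\mu}\hookrightarrow j'_*\mathcal A_{(X\smallsetminus Y)\times I}$ are not quasi-isomorphisms, and the two-out-of-three argument run through the enlarged complexes collapses (the intermediate assertion that $p_*\mathcal A_{X\times I,\mu}\to j_*\bar p_*\mathcal A_{(X\smallsetminus Y)\times I}$ is a quasi-isomorphism is already false). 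Your closing paragraph only checks degree $m-1$, where the angular form does match; the discrepancy is in degrees $1$ and $m$.

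The difficulty you isolated is exactly what Lemma \ref{l-LioubovAndreevna} is designed to remove: a parametrized Morse--Bott lemma produces functions $z_1,\dots,z_m$ on $B'\times\Delta_p$ with $\mu_t=z_1^2+\cdots+z_m^2$, i.e.\ fibered coordinates in which the family becomes the constant function $\bar\mu=x_1^2+\cdots+x_m^2$. In that normal form $\mu=p^*\bar\mu$, so pull-back by $p$ and the homotopy operator along the simplex \emph{do} preserve the singularity class $\mathcal A_{\bar\mu}$, and the usual homotopy-invariance argument shows that restriction to the endpoints is a quasi-isomorphism; since the statement is local on $X$, this suffices. So keep your counterexample as motivation, but the correct remedy is a normal form for the family of Morse--Bott functions, not an enlargement to all smooth forms on the complement.
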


It follows that we have a canonical isomorphism between the cohomology
sheaves for $\mu_0$ and $\mu_1$.

To obtain a more precise information, in particular to show that the
composition of isomorphisms is again an isomorphism of this form, we
prove a slightly stronger version of this proposition: we denote by
$\Delta_p=\{t\in\mathbb R^{p+1}_{\geq0}\colon t_0+\dots+t_p=1\}$ the
geometric $p$-simplex.

Let $\mathit{MB}$ be the convex cone of nonnegative Morse--Bott
functions $Y$ $S(\mathit{MB})$ the category of simplices of the affine
singular set of $\mathit{MB}$. Its objects are affine $p$-simplices in
$\mathit{MB}$, i.e. affine maps from the geometric $p$-simplex
$\Delta_p=\{t\in\mathbb R^p_{\geq0}\colon \sum t_i=0\}$ to
$\mathit{MB}$, and the morphisms are compositions of face and
degeneracy maps.

\begin{theorem}\label{t-Ania} There is a functor
  \[
    F\colon S(\mathit{MB})\to \mathrm{ShDGA}(X)
  \]
  to the category of sheaves of differential graded algebras, such
  that on vertices
  \[
    F(\mu)=\mathcal A_{X,\mu},
  \]
  and sending all morphisms to quasi-isomorphisms.
\end{theorem}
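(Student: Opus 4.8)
The plan is to construct the functor $F$ by the same device that produced the span \eqref{e-face}, applied uniformly over the whole simplicial set $S(\mathit{MB})$. Given an affine $p$-simplex $\sigma\colon\Delta_p\to\mathit{MB}$, write $\sigma(t)=\sum_{i=0}^p t_i\mu_i$ where $\mu_i=\sigma(e_i)$ are the images of the vertices; since $\mathit{MB}$ is a convex cone, $\sigma(t)$ is again a nonnegative Morse--Bott function vanishing on $Y$ for every $t\in\Delta_p$. I would then consider the function $\tilde\mu_\sigma\colon X\times\Delta_p\to\mathbb R_{\geq0}$, $\tilde\mu_\sigma(x,t)=\sum_i t_i\mu_i(x)$, which is a nonnegative Morse--Bott function on $X\times\Delta_p$ with zero set $Y\times\Delta_p$ (the Hessian in the normal $X$-directions is a convex combination of nondegenerate positive forms, hence nondegenerate). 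Letting $p_\sigma\colon X\times\Delta_p\to X$ be the projection, set
\[
  F(\sigma)=p_{\sigma*}\mathcal A_{X\times\Delta_p,\tilde\mu_\sigma}.
\]
On a vertex $\mu$ this is $p_{*}\mathcal A_{X\times\Delta_0,\mu}=\mathcal A_{X,\mu}$, as required. Pullback along a face inclusion $\Delta_{p-1}\hookrightarrow\Delta_p$ or a degeneracy $\Delta_{p+1}\to\Delta_p$ carries $\tilde\mu_\sigma$ to $\tilde\mu_{\sigma'}$ for the corresponding face/degeneracy $\sigma'$ of $\sigma$, and hence induces a restriction map $F(\sigma)\to F(\sigma')$ resp. a pullback map $F(\sigma')\to F(\sigma)$; functoriality in $S(\mathit{MB})$ follows from functoriality of these projections. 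These are maps of sheaves of differential graded algebras because the relevant operations — restriction of forms and integration-free pullback — are compatible with wedge product and $d$, and because $\tilde\mu_{\sigma'}^N\omega\in\mathcal A$ pulls back to $\tilde\mu_\sigma^N(p^*\omega)\in\mathcal A$.

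The heart of the matter is showing every structure map of $F$ is a quasi-isomorphism; by the two-out-of-three property and the simplicial identities it suffices to treat the face and degeneracy maps, and both of these reduce to the single assertion that for a nonnegative Morse--Bott function $\mu$ on $X$ (vanishing on $Y$) the pullback
\[
  \mathcal A_{X,\mu}\to p_*\mathcal A_{X\times[0,1],\,\mu\circ\mathrm{pr}_X}
\]
is a quasi-isomorphism, and more generally that $\mathcal A_{X,\mu}\to p_*\mathcal A_{X\times\Delta_p,\tilde\mu_\sigma}$ is one for every $\sigma$. This is a purely local statement — both sides are sheaves on $X$ and the question is about their cohomology sheaves along $Y$ — so I may take $X$ to be a small ball with $\mu=x_1^2+\dots+x_m^2$ and use the computation of the cohomology sheaf of $\mathcal A_\mu$ carried out in the Appendix (the same input used for Corollary \ref{c-02}). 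Concretely: the cohomology sheaf of $\mathcal A_{X,\mu}$ is supported on $Y$ and is explicitly described there; one checks that the analogous complex on $X\times\Delta_p$ with $\tilde\mu_\sigma$ has the same cohomology sheaf along $Y\times\Delta_p$ (the extra simplex directions are ``tame'' parameters — $\tilde\mu_\sigma$ does not vanish to higher order along them, and the Poincar\'e lemma in the $\Delta_p$ directions lets one contract them), and that $p_*$ of this recovers the cohomology sheaf on $X$ via the contractibility of $\Delta_p$. The span \eqref{e-face} and Proposition \ref{p-UrsuleMirouet} are the $p=1$ case of exactly this.

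The main obstacle I anticipate is the last point: controlling the cohomology sheaf of $\mathcal A_{X\times\Delta_p,\tilde\mu_\sigma}$ uniformly in $\sigma$, i.e. verifying that adjoining the simplex variables really does not change the local cohomology along the singular locus. The subtlety is that $\tilde\mu_\sigma(x,t)$, while Morse--Bott on the total space, varies its ``shape'' with $t$ (different convex combinations of the $\mu_i$), so one cannot simply invoke a product formula; one needs a relative version of the Appendix computation — a Künneth-type or deformation argument showing that the inclusion of the $t$-independent forms is a quasi-isomorphism — together with the fact that a compact contractible base like $\Delta_p$ contributes nothing under $p_*$ (a de Rham homotopy/fibre-integration argument, being careful that pushforward along a proper map with contractible fibres is exact on the relevant complexes). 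Once that relative statement is in hand, assembling $F$ and checking it lands every morphism in quasi-isomorphisms is formal.
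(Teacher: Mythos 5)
Your construction of $F$ is exactly the paper's: $F(\sigma)=p_*\mathcal A_{X\times\Delta_p,\sum t_i\mu_i}$, with face and degeneracy morphisms sent to the induced restriction/pullback maps, and the reduction of the quasi-isomorphism claim to a local statement near a point of $Y$ is also the right first move. But the heart of the proof is precisely the point you flag as ``the main obstacle I anticipate'' and then leave unresolved: you say one needs ``a relative version of the Appendix computation --- a K\"unneth-type or deformation argument'' to handle the fact that $\tilde\mu_\sigma(x,t)=\sum_i t_i\mu_i(x)$ changes shape with $t$, but you do not supply that argument. As you yourself note, a product/K\"unneth formula is not available because $\tilde\mu_\sigma$ is not of product form, and the Appendix computation is carried out only for the standard quadric $\mu=x_1^2+\cdots+x_m^2$, so it does not directly apply to $\mathcal A_{X\times\Delta_p,\tilde\mu_\sigma}$. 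Without closing this step the proposal does not prove the theorem.

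The missing idea in the paper is a parametrized Morse--Bott lemma (Lemma \ref{l-LioubovAndreevna}): on a possibly smaller ball $B'$ one can find functions $z_1,\dots,z_m$ smooth on $B'\times\Delta_p$, obtained by writing $x^{(i)}_j=x_j g^{(i)}_j$ with $g^{(i)}_j|_Y=1$ and setting $z_j=x_j\bigl(\sum_i t_i (g^{(i)}_j)^2\bigr)^{1/2}$, so that $\sum_i t_i\mu_i=z_1^2+\cdots+z_m^2$ identically in $t$. This reduces the situation, after a coordinate change smooth in $(x,t)$, to the case where all vertices equal the single function $\bar\mu=x_1^2+\cdots+x_m^2$ and $\tilde\mu_\sigma$ is constant in $t$; then $F(\sigma)=p_*\mathcal A_{X\times\Delta_p,\bar\mu}$ for every simplex, the pullback $\mathcal A_{X,\bar\mu}\to p_*\mathcal A_{X\times\Delta_p,\bar\mu}$ is a quasi-isomorphism by the standard homotopy operator in the $\Delta_p$-directions (which preserves the condition $\bar\mu^N\omega\in\mathcal A$ since $\bar\mu$ is $t$-independent), and all face and degeneracy maps induce the identity on cohomology. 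Your ``deformation argument'' hunch is the right instinct, but to complete the proof you must actually prove this normal-form statement uniformly in $t\in\Delta_p$ (or an equivalent substitute); everything else in your outline then goes through as in Proposition \ref{p-UrsuleMirouet}.
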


Since $\mathit{MB}$ is contractible, it follows that all
$\mathcal A_{X,\mu}$ are quasi-isomorphic to the homotopy colimit
$\mathrm{hocolim}\,F$, and that the quasi-isomorphism is unique up to
a contractible space of choices.

To prove this theorem we begin by defining the functor.  Let
$\mu_0,\dots,\mu_p$ be nonnegative Morse--Bott functions vanishing on
$Y$. They are vertices of a $p$-simplex $\Delta(\mu_0,\dots,\mu_p)$,
an object of $S(\mathit{MB}).$ We set
\[
  F(\Delta(\mu_0,\dots,\mu_p))= p_*\mathcal A_{X\times \Delta_p,\sum
    t_i\mu_i}.
\]
Here $p\colon X\times\Delta_p\to X$ is the projection onto the first
factor and we view the convex linear combination
$\sum_{i=0}^p t_i\mu_i$ as a nonnegative Morse--Bott function on
$X\times\Delta_p$ vanishing on $Y\times \Delta_p$.  The face and
degeneracy maps are mapped to the pull-backs of corresponding face and
degeneracy maps on the $\Delta_p$.  The first example is
\eqref{e-face}.

We turn to the proof of Theorem \ref{t-Ania}.  Since the claim is a
local statement we may assume that $X$ is a small ball centered at the
origin in $\mathbb R^n$. The non-trivial case is when the origin is in
$Y$.  To prove the proposition in this case we need a slight
generalization of the Morse--Bott lemma.
\begin{lemma}\label{l-LioubovAndreevna}
  Let $\mu_0,\dots,\mu_p$ be nonnegative Morse--Bott functions on an
  open ball $B\subset \mathbb R^n$ centered at the origin and
  vanishing on the same smooth submanifold $0\in Y\subset B$ of
  codimension $m$. For $t\in\Delta_p$ set
  \[
    \mu_t=\sum_{i=0}^p t_i\mu_i.
  \] Then there are smooth functions $z_1,\dots, z_m$ on
  $B'\times \Delta_p$ for some possibly smaller ball $B'\subset B$,
  such that
  \[
    \mu_t=z_1^2+\cdots+z_m^2\; \text{on $B'\times \Delta_p$}.
  \]
\end{lemma}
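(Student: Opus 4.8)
The plan is to run the classical ``square root of the Hessian'' proof of the Morse--Bott lemma, but treating the simplex coordinate $t\in\Delta_p$ as an extra smooth parameter and using the compactness of $\Delta_p$ to produce a single common domain $B'$.

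First I would straighten $Y$: since $0\in Y$ is a smooth codimension-$m$ submanifold of $B$, after shrinking $B$ and performing a $t$-independent change of coordinates we may assume $Y=\{x_1=\dots=x_m=0\}$; write $x=(x',x'')$ with $x'=(x_1,\dots,x_m)$. Each $\mu_i$ is nonnegative and vanishes on $\{x'=0\}$, hence $\{x'=0\}$ is a set of minima and all first-order derivatives of $\mu_i$ vanish there as well. Applying Hadamard's lemma twice then gives smooth symmetric-matrix-valued functions $H^{(i)}(x)$ with
\[
  \mu_i(x)=\langle x',H^{(i)}(x)\,x'\rangle,\qquad i=0,\dots,p .
\]
At a point $(0,x'')\in Y$ the matrix $H^{(i)}(0,x'')$ equals one half of the Hessian of $\mu_i$ in the normal direction; this Hessian is positive semidefinite because $\mu_i\ge 0$, and non-degenerate because $\mu_i$ is Morse--Bott, hence positive definite.

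Next set $H_t(x)=\sum_{i=0}^p t_i H^{(i)}(x)$, so that $\mu_t(x)=\langle x',H_t(x)\,x'\rangle$ is smooth in $(x,t)\in B\times\Delta_p$. On $Y$ the matrix $H_t(0,x'')$ is a convex combination of positive definite symmetric matrices, hence positive definite; by continuity and the compactness of $\Delta_p$ there is a smaller ball $B'\ni 0$ with $H_t(x)$ positive definite for all $(x,t)\in B'\times\Delta_p$. Finally I would use that the symmetric positive definite square root $H\mapsto H^{1/2}$ is a smooth map on the open cone of positive definite symmetric matrices (for instance because the derivative of $A\mapsto A^2$ at a positive definite $A$ is the invertible operator $B\mapsto AB+BA$, so the implicit function theorem applies; or via $H^{1/2}=\tfrac1\pi\int_0^\infty\lambda^{-1/2}H(H+\lambda I)^{-1}\,d\lambda$). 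Put $A_t(x)=H_t(x)^{1/2}$, smooth in $(x,t)$ on $B'\times\Delta_p$ with symmetric values, and define
\[
  z_c=\sum_{a=1}^m \bigl(A_t(x)\bigr)_{ca}\,x_a,\qquad c=1,\dots,m .
\]
Then $z_1,\dots,z_m$ are smooth on $B'\times\Delta_p$ and $z_1^2+\cdots+z_m^2=\langle x',A_t(x)^2x'\rangle=\langle x',H_t(x)x'\rangle=\mu_t$, as required. (If needed later, one notes in addition that $\partial z_c/\partial x_a|_{x'=0}=(A_t(0,x''))_{ca}$ is invertible, so after a further shrinking $(z_1,\dots,z_m,x'')$ is a coordinate system near $Y\times\Delta_p$ for each fixed $t$.)

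The straightening of $Y$ and the two applications of Hadamard's lemma are routine; the only places the hypotheses are genuinely used are that each normal Hessian is positive definite (nonnegativity together with non-degeneracy) and that this positivity is inherited by the convex combinations $H_t$. The one piece of real analysis is the smoothness of the matrix square root on the positive definite cone, which combined with the compactness of the simplex yields the uniform domain $B'$; I do not expect any serious obstacle beyond organising these standard ingredients.
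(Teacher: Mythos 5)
Your proof is correct, and it takes a genuinely different route from the paper's. The paper first invokes the Morse--Bott lemma for each $\mu_i$ separately to get $\mu_i=\sum_j (x_j^{(i)})^2$, rewrites all the $x_j^{(i)}$ in terms of one fixed set of generators $x_j$ of the ideal of $Y$, and then (after a rotation step arranging $x_j^{(i)}=x_jg_j^{(i)}$ with $g_j^{(i)}|_Y=1$) reduces to scalar square roots, setting $z_j=x_j\bigl(\sum_i t_i (g_j^{(i)})^2\bigr)^{1/2}$. You instead bypass the Morse--Bott lemma altogether: after straightening $Y$ you represent each $\mu_i$ as a quadratic form $\langle x',H^{(i)}(x)x'\rangle$ via the Taylor/Hadamard remainder, observe that the normal Hessians are positive definite (nonnegativity plus the rank condition, noting that the tangential and mixed blocks of the Hessian vanish along $Y$), and take the smooth symmetric square root of the convex combination $H_t$, with compactness of $\Delta_p$ supplying the uniform ball $B'$. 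In effect you reprove the Morse--Bott lemma with parameters rather than quote it, which makes the argument more self-contained and sidesteps the most delicate point of the paper's proof (the simultaneous reduction of the matrices $a^{(i)}$ to the diagonal form $x_j^{(i)}=x_jg_j^{(i)}$); the price is the need for smoothness of the matrix square root on the positive definite cone, which you justify adequately. The paper's version buys brevity by citing the Morse--Bott lemma and working with scalar, rather than matrix, square roots.
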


\begin{proof}
  By the Morse--Bott lemma there exist functions
  $x^{(i)}_1,\dots,x^{(i)}_m$ vanishing on $Y$ and defined on a
  possibly smaller ball $B'\subset B$ and with linear independent
  differentials on $Y$, such that
  $\mu_i=(x^{(i)}_1)^2+\cdots+(x^{(i)}_m)^2$, for $i=0,\dots,p$. Let
  $x_1,\dots,x_m$ be generate the ideal of functions vanishing on $Y$,
  for instance $x_i=x_i^{(0)}$.  Then we can write
  $x^{(i)}_j=\sum_{k=1}^m x_ka^{(i)}_{kj}$, for some smooth functions
  $a^{(i)}_{kj}$ forming, for each $i$, a non-degenerate matrix. After
  possibly rotating $x^{(i)}_j$ by a linear orthogonal transformation,
  we may assume that $x^{(i)}_j=x_jg^{(i)}_j$ for some smooth
  functions $g^{(i)}_j$ with $g^{(i)}_j|_Y=1$. Then
  \[
    \mu_t=\sum_{j=1}^m x_j^2\sum_{i=0}^p t_i(g^{(i)}_j)^2.
  \]
  Since $\sum_{i=0}^p t_i(g^{(i)}_j)^2$ is close to 1 in the vicinity
  of $Y$, it is positive and we can define, again after making $B'$
  smaller, new functions
  \[
    z_j=x_j\sqrt{\sum_{i=0}^p t_i(g^{(i)}_j)^2}.
  \]
  such that $\mu_t=\sum_{j=1}^mz_j^2$.
\end{proof}

\noindent{\it Proof of Theorem \ref{t-Ania}}.  As we saw, it is
sufficient to assume that $X$ is a small ball centered at the origin
in $\mathbb R^n$. To prove that morphisms are mapped to
quasi-isomorphisms it is sufficient to prove that face maps and
degeneracy maps are mapped to quasi-isomorphisms. These maps involve
simplices with a fixed set of vertices, say $\mu_0,\dots,\mu_p$. By
Lemma \ref{l-LioubovAndreevna} we may assume that
$\mu_i=\bar\mu=x_1^2+\cdots+x_m^2$ for all $i$. Thus
\[
  \sum_{i=0}^pt_i\mu_i=\bar\mu,
\]
is a constant function of $t\in\Delta_p$. Thus the algebras
$F(\Delta(\mu_{i_0},\dots,\mu_{i_k})$ are all equal to
$p_*\mathcal A_{X\times\Delta_k,\bar\mu}$. The maps
$\mathcal A_{X,\bar\mu}\to p_*\mathcal A_{X\times\Delta_p,\bar\mu}$
sending a form to its pull back by $p$ are quasi-isomorphisms
commuting with face and degeneracy maps.  The maps induced by face and
degeneracy maps in cohomology are thus the identity maps in
$\mathcal H(\mathcal A_{X,\bar\mu}$.  \hfill$\square$

We conclude this section by stating an elementary consequence.

\begin{cor} \label{c-Lopachkine} For any two nonnegative Morse--Bott
  functions $\mu_0,\mu_1$ we have a canonical isomorphism of the
  cohomology sheaves
  \[
    I(\mu_0,\mu_1)\colon\mathcal H(\mathcal A_{X,\mu_0})\to\mathcal
    H(\mathcal A_{X,\mu_1}).
  \]
  For any three $\mu_0,\mu_1,\mu_2$ we have
  \[
    I(\mu_1,\mu_2)\circ I(\mu_0,\mu_1)=I(\mu_0,\mu_2)
  \]
\end{cor}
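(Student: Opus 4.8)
The plan is to read the isomorphism $I(\mu_0,\mu_1)$ off the interpolating $1$-simplex of Morse--Bott functions, and to deduce the cocycle identity from the corresponding $2$-simplex. For a tuple $\mu_{i_0},\dots,\mu_{i_p}$ of nonnegative Morse--Bott functions vanishing on $Y$ I write
\[
  \mathcal A_{i_0\cdots i_p}=p_*\mathcal A_{X\times\Delta_p,\,t_0\mu_{i_0}+\cdots+t_p\mu_{i_p}},
\]
so that for a single vertex this is just $\mathcal A_{X,\mu_i}$ and $\mathcal A_{01}$ is the middle term of \eqref{e-face}. For every subtuple $\{j_0<\cdots<j_q\}\subseteq\{i_0<\cdots<i_p\}$, restriction of forms along the face inclusion $X\times\Delta_q\hookrightarrow X\times\Delta_p$ (followed by $p_*$) gives a map of complexes of sheaves $\mathcal A_{i_0\cdots i_p}\to\mathcal A_{j_0\cdots j_q}$. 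These are exactly the maps that Theorem~\ref{t-Ania} sends to quasi-isomorphisms (the case of a vertex of a $1$-simplex being Proposition~\ref{p-UrsuleMirouet}), so each induces an isomorphism of cohomology sheaves; and they are tautologically compatible, since restriction to a face followed by restriction to a sub-face of it equals restriction to that sub-face.

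First I would define the isomorphism. Let $r_0\colon\mathcal A_{01}\to\mathcal A_{X,\mu_0}$ and $r_1\colon\mathcal A_{01}\to\mathcal A_{X,\mu_1}$ be the two vertex restrictions of the interpolating $1$-simplex, and let $\mathcal H(-)$ denote the induced map on cohomology sheaves. By the previous paragraph $\mathcal H(r_0)$ and $\mathcal H(r_1)$ are isomorphisms, and I set
\[
  I(\mu_0,\mu_1):=\mathcal H(r_1)\circ\mathcal H(r_0)^{-1}.
\]
Since $r_0$ and $r_1$ are morphisms of sheaves of differential graded algebras, this is an isomorphism of sheaves of graded algebras, and it is manufactured from $\mu_0$ and $\mu_1$ alone, hence canonical. (The degenerate $1$-simplex on a single vertex gives $I(\mu_0,\mu_0)=\mathrm{id}$ in the same way, though this will not be used.)

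For the cocycle identity I would bring in the $2$-simplex on $\mu_0,\mu_1,\mu_2$, with complex $\mathcal A_{012}$. For each $i$ let $s_i\colon\mathcal A_{012}\to\mathcal A_{X,\mu_i}$ be restriction to the $i$-th vertex; it is a quasi-isomorphism. Fix a pair $i,j\in\{0,1,2\}$. Then $s_i=a_i\circ\rho$ and $s_j=a_j\circ\rho$, where $\rho\colon\mathcal A_{012}\to\mathcal A_{ij}$ is restriction to the edge $ij$ and $a_i\colon\mathcal A_{ij}\to\mathcal A_{X,\mu_i}$ is the vertex restriction that enters the definition of $I(\mu_i,\mu_j)$. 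Since $\rho$ is again a quasi-isomorphism by Theorem~\ref{t-Ania}, $\mathcal H(\rho)$ is invertible and
\[
  \mathcal H(s_j)\circ\mathcal H(s_i)^{-1}=\mathcal H(a_j)\circ\mathcal H(\rho)\circ\mathcal H(\rho)^{-1}\circ\mathcal H(a_i)^{-1}=\mathcal H(a_j)\circ\mathcal H(a_i)^{-1}=I(\mu_i,\mu_j).
\]
Hence
\[
  I(\mu_1,\mu_2)\circ I(\mu_0,\mu_1)=\mathcal H(s_2)\circ\mathcal H(s_1)^{-1}\circ\mathcal H(s_1)\circ\mathcal H(s_0)^{-1}=\mathcal H(s_2)\circ\mathcal H(s_0)^{-1}=I(\mu_0,\mu_2).
\]

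In the end this is a diagram chase, and I expect no genuine obstacle: all the analytic substance --- that the face restrictions are quasi-isomorphisms --- is already contained in Proposition~\ref{p-UrsuleMirouet} and Theorem~\ref{t-Ania}. The two points to pin down carefully are that the face maps appearing in Theorem~\ref{t-Ania} are literally these restriction maps (true, since that functor is defined by pull-back along the face inclusions of the $\Delta_p$), and that restriction to a vertex factors strictly, not merely up to homotopy, through restriction to any edge containing it (immediate, both being restrictions of forms along inclusions of the shape $X\times(\text{simplex})$).
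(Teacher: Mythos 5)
Your proof is correct and is essentially the paper's own argument: the isomorphism comes from the zig-zag of Proposition~\ref{p-UrsuleMirouet}, and the cocycle identity from the $p=2$ case of Theorem~\ref{t-Ania}, i.e.\ the strictly commutative diagram of restriction quasi-isomorphisms from $\mathcal A_{012}$ through the edges $\mathcal A_{ij}$ to the vertices. You merely spell out the diagram chase that the paper leaves implicit.
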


The first statement follows from Proposition
\ref{p-UrsuleMirouet}. The second statement follows from the case
$p=2$ of Theorem \ref{t-Ania}. The sheaves
$\mathcal A_i=\mathcal A_{X,\mu_i}$, $i=0,1,2$ are related by a
commutative diagram of quasi-isomorphisms:
\[
  \begin{tikzpicture}[baseline=0,xscale=.6,yscale=.6]
    \draw (0,-0.1) node {$\mathcal
      A_0$}; \draw [<-] (.5*0.5,.5*0.866) -- (3*.5,3*.866); \draw
    (3.5*0.5,3.5*0.866-.1) node {$\mathcal
      A_{01}$}; \draw [->] (4*0.5,4*0.866) -- (6.5*.5,6.5*.866); \draw
    (7*.5,7*.866-.1) node {$\mathcal
      A_1$}; \draw (7,0-.1) node {$\mathcal
      A_2$}; \draw [<-] (7-.5*0.5,.5*0.866) -- (7-3*.5,3*.866); \draw
    (7-3.5*0.5,3.5*0.866-.1) node {$\mathcal
      A_{12}$}; \draw [->] (7-4*0.5,4*0.866) -- (7-6.5*.5,6.5*.866);
    \draw (7-7*.5,7*.866-.1) node {$\mathcal
      A_1$}; \draw [<-] (.5,0) -- (3,0); \draw (3.5,0-.1) node
    {$\mathcal
      A_{02}$}; \draw [->] (4,0) -- (6.5,0); \draw (3.5,7*0.289-.1)
    node {$\mathcal
      A_{012}$}; \draw [->] (3.5,7*0.289-.5) -- (3.5,0.5); \draw [->]
    (3.5-.5*.866,7*0.289+.25) -- (3.5*0.5+.5*.866,3.5*0.866-.25);
    \draw [->] (3.5+.5*.866,7*0.289+.25) --
    (7-3.5*0.5-.5*.866,3.5*0.866-.25);
  \end{tikzpicture}
\]
where
$A_{i_0,\dots,i_k}=p_*\mathcal A_{X\times\Delta_k,\sum t_s\mu_{i_s}}$.
\section{Complex hypersurfaces}\label{s-4}
\subsection{The canonical conformal class of Morse--Bott functions}
Suppose $D\subset X$ is a smooth divisor (complex hypersurface) in a
$d$-dimensional complex manifold. Thus we have $n=2d$ and $m=2$. The
complex structure defines a canonical conformal class of nonnegative
Morse--Bott functions $\mu$ vanishing on $D$: locally on an open set
$U\subset X$, $D$ is defined by $f=0$ for some holomorphic
function $f$ on $U$ such that $df|_{D\cap U}\neq0$. We call such a function
a local equation for $D$.  We then require
$\mu/|f|^2$ to extend to a positive smooth function. This condition is
independent of $f$ as the ratio of any two $f$'s is a nowhere
vanishing function.  Any two nonnegative Morse--Bott functions with
this property differ by multiplication by a positive function and thus
define a conformal class. The differential forms we consider can then
be defined as those locally of the form $\omega/|f|^{2N}$ with
$\omega$ smooth and $f$ a local equation for $D$.

The de Rham complex $\mathcal A_{X,\mu}$ is quasi-isomorphic to the
subcomplex of differential forms with tame singularities.  Here is a
description of these forms, which in this context could be called
bilogarithmic.
\begin{prop}\label{p-Galleria di base 21 ottobre 2016!}
  Let $\omega\in\Gamma(X,\mathcal A_{X,\mu})$. Then $\omega$ has tame
  singularities if and only of for any local equation $f$
  of $D$,
  \[
    \omega=\frac {df}f\wedge\frac {d\bar f}{\bar f}\wedge\omega_{1,1}
    +\frac{df}f\wedge\omega_{1,0}+\frac {d\bar f}{\bar
      f}\wedge\omega_{0,1}+\omega_{0,0},
  \]
  for some smooth forms $\omega_{i,j}$.  The residue map is
  \[
    R(\omega)=-4\pi i\omega_{1,1}.
  \]
\end{prop}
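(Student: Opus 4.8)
The plan is to derive the stated normal form by combining the two tame-singularity conditions with a local computation, and then to identify the residue by comparing with the general formula of Theorem~\ref{t-05} (equivalently Theorem~\ref{t-04}). Work locally on an open set $U$ where $D$ is cut out by a holomorphic equation $f$ with $df|_D\neq 0$; then $\mu=|f|^2 g$ for a positive smooth $g$, and since $\mathcal A_{X,\mu}$ depends only on the conformal class we may as well take $\mu=|f|^2$. Choose holomorphic coordinates $z_1=f,z_2,\dots,z_d$ on (a shrinking of) $U$, so that $D=\{z_1=0\}$ and in the real coordinates $x_1=\mathrm{Re}\,z_1$, $x_2=\mathrm{Im}\,z_1$ we have $\mu=x_1^2+x_2^2$, i.e. $r^2=\mu$ with $r=|f|$. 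This puts us precisely in the Morse--Bott coordinate situation of codimension $m=2$, $r=1$, used in Lemma~\ref{l-Varia} and Theorem~\ref{t-05}.

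First I would prove the ``only if'' direction. By definition of tame singularities (with $r=1$), $\mu\omega=|f|^2\omega$ and $d\mu\wedge\omega=(f\,d\bar f+\bar f\,df)\wedge\omega$ both extend smoothly across $D$. Equivalently, $f\bar f\,\omega$, $f\,d\bar f\wedge\omega$ and $\bar f\,df\wedge\omega$ are smooth. Write $\omega$ on $U\smallsetminus D$ using the decomposition of forms into pieces involving $df/f$, $d\bar f/\bar f$: since $df,d\bar f$ together with $dz_2,\dots,dz_d,d\bar z_2,\dots,d\bar z_d$ span the cotangent space, every form can be written uniquely as $\omega=\frac{df}{f}\wedge\frac{d\bar f}{\bar f}\wedge\omega_{1,1}+\frac{df}{f}\wedge\omega_{1,0}+\frac{d\bar f}{\bar f}\wedge\omega_{0,1}+\omega_{0,0}$ with $\omega_{i,j}$ \emph{a priori} of the form (smooth)$\cdot|f|^{-2M}$ and not involving $df,d\bar f$. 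Clearing denominators, $|f|^2\omega$ smooth forces $\omega_{1,1},\omega_{1,0},\omega_{0,1},\omega_{0,0}$ to have at worst a double pole; the condition that $f\,d\bar f\wedge\omega=df\wedge\frac{d\bar f}{\bar f}\wedge(\bar f\omega_{1,1})+\cdots$ be smooth kills the pole in $\omega_{1,1}$ along $f$ (and symmetrically $\bar f\,df\wedge\omega$ kills the pole along $\bar f$), so $\omega_{1,1}$ is smooth; feeding this back, $|f|^2\omega$ smooth then forces $\omega_{1,0},\omega_{0,1},\omega_{0,0}$ smooth as well. This is the short bookkeeping computation I do not spell out here; it is the analogue, in the one-variable-squared case, of the proof of Lemma~\ref{l-Varia}, which already shows $\mu^{m/2}\omega=dx_1\wedge\cdots\wedge dx_m\wedge\alpha+\sum x_i\alpha_i$, i.e. that the ``top'' coefficient $\alpha$ is exactly $\bar f\, f\,\omega_{1,1}|_D$ up to a unit. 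The ``if'' direction is immediate: each of the four displayed terms is manifestly tame (for $\frac{df}{f}\wedge\frac{d\bar f}{\bar f}\wedge\omega_{1,1}$ one checks $\mu\cdot$ it and $d\mu\wedge$ it are smooth directly), and tameness is preserved under addition.

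For the residue, by Theorem~\ref{t-05} (with $r=1$) we have $R(\omega)=2\pi\,(\mu\omega)|_D/V(\mu)$, where $V(\mu)$ is the volume form on the normal bundle determined by the Hessian metric and the chosen orientation. In the coordinates above, $\mu=x_1^2+x_2^2$ has Hessian metric $dx_1^2+dx_2^2$, the complex orientation gives $V(\mu)=dx_1\wedge dx_2=\frac{i}{2}\,df\wedge d\bar f$ on $D$, and $\mu\omega=|f|^2\omega$ restricted to $D$ equals $df\wedge d\bar f\wedge\omega_{1,1}|_D$ (the other three terms carry a factor $f$ or $\bar f$ and vanish on $D$). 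Hence $(\mu\omega)|_D=V(\mu)\wedge\bigl(\tfrac{2}{i}\omega_{1,1}|_D\bigr)=V(\mu)\wedge(-2i\,\omega_{1,1})|_D$, so $R(\omega)=2\pi\cdot(-2i)\,\omega_{1,1}|_D=-4\pi i\,\omega_{1,1}$. (One should double-check the placement of $\omega_{1,1}$ relative to $V(\mu)$ in the wedge, i.e. the sign coming from moving a $2$-form past, but since $V(\mu)$ has even degree there is no sign; and $\omega_{1,1}|_D$ is intrinsically defined — independent of the choice of $f$ — because changing $f$ by a nonvanishing holomorphic unit $u$ changes $df/f$ by the smooth form $du/u$, which only alters the \emph{lower} pieces $\omega_{1,0},\omega_{0,1},\omega_{0,0}$.)

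The main obstacle I anticipate is the uniqueness and $f$-independence of the decomposition: one must check carefully that the coefficients $\omega_{i,j}$ are well defined given $f$ (which is clear once one fixes that they do not involve $df,d\bar f$), that $\omega_{1,1}|_D$ does not depend on the choice of local equation $f$ (as sketched above, using that a change of equation shifts $\frac{df}{f}$ by something smooth), and that these local residues glue — but the gluing is already guaranteed by the uniqueness part of Theorem~\ref{t-04}. Everything else is the routine pole-order bookkeeping indicated in the second paragraph, parallel to Lemma~\ref{l-Varia}.
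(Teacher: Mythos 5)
Your overall route is the same as the paper's: pass to local coordinates with $f=z_1$, $\mu=|z_1|^2$, decompose $\omega$ with respect to $dz_1,d\bar z_1$, read the two tameness conditions off the components, and then get the constant $-4\pi i$ from the general residue formula of Theorem~\ref{t-05} with $r=1$ together with $df\wedge d\bar f=-2i\,dx_1\wedge dx_2$; that last computation, the ``if'' direction, and the remark on independence of the choice of $f$ are all correct.

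The defect is in the ``only if'' bookkeeping, which as written does not work. With your normalization (singular prefactors pulled out, $\omega_{i,j}$ not involving $df,d\bar f$), one has $f\,d\bar f\wedge\omega=-df\wedge d\bar f\wedge\omega_{1,0}+f\,d\bar f\wedge\omega_{0,0}$: no $\omega_{1,1}$ term occurs, so this condition cannot ``kill the pole in $\omega_{1,1}$''. The logic is in fact inverted: it is already condition (i), $|f|^2\omega$ smooth, that forces $\omega_{1,1}$ smooth (and gives only $\bar f\omega_{1,0}$, $f\omega_{0,1}$, $|f|^2\omega_{0,0}$ smooth), while the condition on $d\mu\wedge\omega$ is what must be used to upgrade $\omega_{1,0},\omega_{0,1},\omega_{0,0}$ to smooth forms. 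Moreover the step you dismiss with ``Equivalently'' is the only nontrivial point: tameness gives smoothness of the \emph{sum} $d\mu\wedge\omega=\bar f\,df\wedge\omega+f\,d\bar f\wedge\omega$, i.e.\ of $\omega_{0,1}-\omega_{1,0}$, $f\omega_{0,0}$ and $\bar f\omega_{0,0}$, not of the two summands separately. To split it (equivalently, to conclude that $\omega_{1,0}$, $\omega_{0,1}$, $\omega_{0,0}$ are individually smooth) you need a divisibility lemma along $D$: if $u$ is defined off $D$ and both $fu$ and $\bar f u$ extend smoothly, then $u$ extends smoothly (compare Taylor expansions along $D$, where $\bar f\,(\widehat{fu})=f\,(\widehat{\bar f u})$ forces formal divisibility by $f$, and handle the flat remainder separately). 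With that lemma the untangling closes exactly as claimed — the paper's own proof is equally terse at this spot — but your sketch both misassigns the roles of the two conditions and leaves this one genuinely needed argument unstated.
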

\begin{proof}
  We may choose local complex coordinates $z_1,\dots,z_d$ so that
  $f=z_1$ and $\mu=|z_1|^2$. Let us write
  \[
    |z_1|^2\omega=dz_1\wedge d\bar
    z_1\wedge\alpha_{1,1}+dz_1\wedge\alpha_{1,0} +d\bar
    z_1\wedge\alpha_{0,1}+\alpha_{00},
  \]
  for some forms $\alpha_{i,j}$ not involving $dz_1$ or $d\bar z_1$.
  The first condition for tameness implies that $\alpha_{i,j}$ are
  smooth.  The second condition in real codimension 2 states that
  $ d\mu\wedge\omega $ is smooth. Since
  $d\mu=z_1d\bar z_1+\bar z_1dz_1$ this translates to the smoothness
  of
  \[
    \omega_{1,0}=\frac1{\bar z_1}\,\alpha_{1,0},\quad\omega_{0,1}=\frac1{
      z_1}\,\alpha_{0,1},\quad \omega_{0,0}=\frac1{z_1\bar
      z_1}\alpha_{0,0}.
  \]
  Comparing with the explicit formula of \ref{ss-3.5}, we see that
  $R(\omega)$ is proportional to $\alpha$. The Morse--Bott coordinates
  $x_1,x_2$ are given by $z_1=x_1+ix_2$ and
  $dz_1\wedge d\bar z_1/|z_1|^2=-2i\,dx_1\wedge
  dx_2/(x_1^2+x_2^2)$. Thus
  \[
    R(\omega)=2\pi (-2i)\omega_{1,1}=-4\pi i\omega_{1,1}.
  \]
\end{proof}
In \cite{FelderKazhdan2016}, inspired by calculations in perturbative
superstring theory, we considered the case where
$\omega=\alpha\wedge\bar\beta$ where $\beta$ is a holomorphic $d$-form
with simple pole on $Y$ and $\alpha$ is a smooth $(d,0)$-form on
$X\smallsetminus D$ so that locally $f^N\alpha$ extends to a compactly
supported smooth form on $X$ for some $N$.  There we defined a
Dolbeault residue $\mathrm{Res}_\partial$ defined on this class of
$(d,0)$-forms $\alpha$ and taking values in $\partial$-cohomology
classes of forms of type $(d-1,0)$ on $D$. The Dolbeault residue
vanishes on $\partial$-exact forms and coincides with the Poincar\'e
residue $\mathrm{Res}$ for forms with first order pole. Comparing with
Prop.~\ref{p-Galleria di base 21 ottobre 2016!}  we obtain
\[
  R(\alpha\wedge\bar\beta)=4\pi
  i(-1)^{d}\mathrm{Res}_\partial\alpha\wedge\overline{\mathrm{Res}\,\beta}.
\]
The dependence on the Morse--Bott function of the Hadamard finite part
is thus
\[
  I_{\mathrm{finite}}(\mu
  e^{2\varphi},\alpha\wedge\bar\beta)=I_{\mathrm{finite}}(\mu,\alpha\wedge\bar\beta)
  + (-1)^{d}4\pi
  i\int_\varphi\mathrm{Res}_\partial\alpha\wedge\overline{\mathrm{Res}\,\beta},
\]
in agreement with \cite{FelderKazhdan2016}.

\section{Normal crossing divisor}\label{ss-nc}
It is desirable to extend our results to the case where the
singularities of the differential forms are not smooth
submanifolds. We consider here the special case of a divisor $D$ with
normal crossings in a complex manifold. We first focus
on the case of two components $D=D_1\cup D_2$. Away from the intersection
the theory of Section \ref{s-4} applies, so it is sufficient to
consider a neihgbourhood of the intersection, which is locally given by
$z_1=0$, $z_2=0$, for some local coordinate functions $z_1$, $z_2$.
Let $\omega$ be a top degree form on $X\smallsetminus D$ and assume
that $|z_1z_2|^{2N}\omega$ extends to a smooth form on $X$ with
compact support.  The zeta function is
\[
  \zeta(s_1,s_2;\mu_1,\mu_2,\omega)=\int_X\mu_1^{\frac{s_1}2}\mu_2^{\frac{s_2}2}
  \omega.
\]
It depends on nonnegative Morse--Bott functions $\mu_1,\mu_2$
and vanishing on $D_1$ and $D_2$
respectively. As in the case of a smooth divisor we take $\mu_1$
and $\mu_2$ in the canonical conformal class defined  by the complex
structure. As a function of $s_1,s_2$ the zeta function is
holomorphic for $\mathrm{Re}(s_i)$ large enough and extends to a
meromorphic function on $\mathbb C^2$ with at most simple poles on the
lines $s_i=2k$, $k\in\mathbb Z$.

We define the finite part of $\int_X\omega$ as the constant term of
the Laurent expansion of $\zeta$ at 0.
\begin{definition} The {\em finite part} of the divergent integral 
$\int_X\omega$ is
\[
  I_{\mathrm{finite}}(\mu_1,\mu_2,\omega)=\mathrm{res}_{s_1=0}\mathrm{res}_{s_2=0}
  \frac1{s_1s_2}\zeta(s_1,s_2;\mu_1,\mu_2,\omega).
\]
\end{definition}
\begin{remark}
  If $\omega$ is regular on one of the components, say $D_2$, then
  $\zeta$ is regular at $s_2=0$ and our definition of the finite part
  reduces to the one for smooth divisors.
\end{remark}
To describe the dependence on the Morse--Bott functions it is useful
to introduce coefficients of divergent terms:
\[
  I_{j,k}(\mu_1,\mu_2,\omega)
  =\mathrm{res}_{s_1=0}\mathrm{res}_{s_2=0}\frac1{s_1^{j}s_2^{k}}
  \zeta(s_1,s_2;\mu_1,\mu_2,\omega).
\]
We only care about $j,k=0$ or $1$. The Laurent expansion looks like
\[
  \zeta(s_1,s_2;\mu_1,\mu_2,\omega)=
  \frac{I_{0,0}}{s_1s_2}+\frac{I_{0,1}}{s_1}+
  \frac{I_{1,0}}{s_2}+I_{\mathrm{finite}}+\cdots
\]
(in the dots there are other divergent terms such as $s_1/s_2$). Note
that $I_{\mathrm{finite}}=I_{1,1}$.

\begin{prop}
  \
  \begin{enumerate}
  \item $I_{0,0}$ is independent of $\mu_1,\mu_2$, $I_{0,1}$ is
    independent of $\mu_1$ and $I_{1,0}$ is independent of $\mu_2$.
  \item Let $\varphi\in C^\infty(X)$. Then
    \begin{align*}
      I_{\mathrm{finite}}(e^{2\varphi}\mu_1,\mu_2,\omega)
      &=I_{\mathrm{finite}}(\mu_1,\mu_2,\omega)+I_{0,1}(\mu_2,\varphi\,\omega),\\
      I_{\mathrm{finite}}(\mu_1,e^{2\varphi}\mu_2,\omega)
      &=I_{\mathrm{finite}}(\mu_1,\mu_2,\omega)+I_{1,0}(\mu_1,\varphi\,\omega),\\
      I_{1,0}(e^{2\varphi}\mu_1,\omega)
      &=I_{1,0}(\mu_1,\omega)+I_{0,0}(\varphi\,\omega),\\
      I_{0,1}(e^{2\varphi}\mu_2,\omega)
      &=I_{0,1}(\mu_2,\omega)+I_{0,0}(\varphi\,\omega).
    \end{align*}
  \end{enumerate}
\end{prop}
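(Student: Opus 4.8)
The plan is to imitate the one-variable arguments from Section \ref{s-2} in the two-variable setting, using the identity analogous to \eqref{e-92} in each variable separately. For part (1), fix large real parts of $s_1,s_2$ and write $e^{2\varphi}\mu_1 = e^{2\varphi}\mu_1$, so that
\[
  \zeta(s_1,s_2;e^{2\varphi}\mu_1,\mu_2,\omega)-\zeta(s_1,s_2;\mu_1,\mu_2,\omega)
  = s_1\,\zeta(s_1,s_2;\mu_1,\mu_2,\omega(s_1)),
\]
where, exactly as before, $\omega(s_1)=\frac{e^{s_1\varphi}-1}{s_1}\,\omega$ is entire in $s_1$ with a convergent expansion at $s_1=0$ whose coefficients lie in $\mathcal A_{\mu_1,\mu_2}(X)$ (i.e. of the same singularity type, since $\varphi$ is smooth). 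Since the right-hand zeta function has at most simple poles along $s_1=0$ and along $s_2=0$, multiplying by $s_1$ kills the $s_1=0$ pole; taking $\mathrm{res}_{s_2=0}$ and then the constant term in $s_1$ of the difference shows that $\frac{1}{s_2}\mathrm{res}_{s_1=0}$ is unchanged by $\mu_1\mapsto e^{2\varphi}\mu_1$, which is the statement that $I_{0,0}$ and $I_{0,1}$ do not depend on $\mu_1$. The symmetric computation gives independence of $I_{0,0}$ and $I_{1,0}$ on $\mu_2$.

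For part (2), I would extract the constant term of the Laurent expansion of the same difference. From the display above, reading off coefficients,
\[
  \zeta(s_1,s_2;e^{2\varphi}\mu_1,\mu_2,\omega)-\zeta(s_1,s_2;\mu_1,\mu_2,\omega)
  = s_1\,\zeta(s_1,s_2;\mu_1,\mu_2,\varphi\,\omega)+O(s_1^2)\cdot(\text{pole in }s_2),
\]
and since $\zeta(s_1,s_2;\mu_1,\mu_2,\varphi\,\omega)$ has simple poles in $s_2$, the factor $s_1$ cancels the $1/s_1$ pole, so that $s_1\,\zeta(s_1,s_2;\mu_1,\mu_2,\varphi\,\omega)$ contributes to the constant term of the difference exactly $\mathrm{res}_{s_1=0}\,\zeta(s_1,s_2;\mu_1,\mu_2,\varphi\,\omega)$, whose $1/s_2$-residue and constant term are $I_{0,0}(\mu_2,\varphi\,\omega)$ and $I_{0,1}(\mu_2,\varphi\,\omega)$ respectively; the higher-order-in-$s_1$ terms multiplied by a simple $s_2$-pole contribute nothing to the $I_{1,1}$-coefficient. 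Matching the $\frac1{s_1s_2}$ extraction that defines $I_{\mathrm{finite}}$ then gives the first formula. The third and fourth formulas are the one-variable relation of Theorem \ref{t-02} applied to the residue functions: $I_{1,0}(e^{2\varphi}\mu_1,\omega)-I_{1,0}(\mu_1,\omega)$ is the constant term in $s_1$ of $\mathrm{res}_{s_2=0}(\zeta(\cdot;e^{2\varphi}\mu_1)-\zeta(\cdot;\mu_1))$, which by the same identity equals $\mathrm{res}_{s_1=0}\mathrm{res}_{s_2=0}\zeta(s_1,s_2;\mu_1,\mu_2,\varphi\,\omega)=I_{0,0}(\varphi\,\omega)$; and symmetrically for $I_{0,1}$. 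The second formula of (2) is the $\mu_2$-analogue of the first.

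The main obstacle, and the one point needing care, is bookkeeping of the Laurent expansion in two variables: one must be sure that when multiplying $s_1$ against a function holomorphic in $s_1$ but with a simple pole in $s_2$, the product contributes to the $\frac{1}{s_1 s_2}$-coefficient (i.e. to $I_{\mathrm{finite}}$) only through its value at $s_1=0$, and that cross terms like $s_1/s_2$ or $s_2/s_1$ genuinely do not interfere. This is handled by noting that $\zeta$ in two variables has only simple poles along the coordinate lines $s_i=0$ — a fact already recorded in the text — so the relevant functions admit expansions of the form $\sum_{j,k\ge -1} c_{jk}s_1^j s_2^k$ near the origin, and all manipulations reduce to multiplying and picking out coefficients of such finite-type expansions. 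A secondary, purely notational, point is that one should fix a meromorphic continuation so that the subtracted integrals on the left are genuinely the same meromorphic function evaluated at shifted data; this is exactly the role played by $\omega(s_1)$ being entire, as in the proof of Theorem \ref{t-01}.
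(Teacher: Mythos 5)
Your proposal is correct and follows essentially the same route as the paper: the key identity $\zeta(s_1,s_2;e^{2\varphi}\mu_1,\mu_2,\omega)-\zeta(s_1,s_2;\mu_1,\mu_2,\omega)=s_1\zeta\bigl(s_1,s_2;\mu_1,\mu_2,\tfrac{e^{s_1\varphi}-1}{s_1}\omega\bigr)$ with the entire family $\omega(s_1)$, regularity of the difference along $s_1=0$, and then reading off Laurent coefficients, exactly as in the paper's (more tersely written) argument. Your extra bookkeeping of the two-variable Laurent expansion with only simple poles along the coordinate lines is just a more explicit version of what the paper leaves implicit.
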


\begin{proof}
  For $s_1,s_2$ with large real part
  \begin{align*}
    \zeta(s_1,s_2;e^{2\varphi}\mu_1,\mu_2,\omega)-
    \zeta(s_1,s_2;\mu_1,\mu_2,\omega)
    &=s_1\int_X\mu_1^{\frac{s_1}2}\mu_2^{\frac{s_2}2}\frac{e^{s_1\varphi}-1}{s_1}\omega
    \\
    &=s_1\int_X\mu_1^{\frac{s_1}2}\mu_2^{\frac{s_2}2}\varphi\,\omega+(\dots),
  \end{align*}
  The analytic continuation is regular on the line $s_1=0$ and
  $(\dots)$ vanishes there.  For $j\in\{0,1\}$ we get
  \begin{align*}
    I_{0,j}(e^{2\varphi}\mu_1,\mu_2,\omega)
    -I_{0,j}(\mu_1,\mu_2,\omega)
    &=0,\\
    I_{1,j}(e^{2\varphi}\mu_1,\mu_2,\omega)
    -I_{1,j}(\mu_1,\mu_2,\omega)
    &=I_{0,j}(\mu_1,\mu_2,\varphi\,\omega),
  \end{align*}
  and similarly for $\mu_2$.
\end{proof}
We thus obtain the formula
\begin{align*}
  I_{\mathrm{finite}}(e^{\varphi_1}\mu_1,e^{\varphi_2}\mu_2,\omega)
  &=I_{\mathrm{finite}}(\mu_1,\mu_2,\omega)+I_{1,0}(\mu_1,{\varphi_2}\omega)
  \\ &+I_{0,1}(\mu_2,{\varphi_1}\omega)+I_{0,0}({\varphi_1}{\varphi_2}\omega).
\end{align*}
It is easy to generalize this result to the case of an arbitrary
number of components.  Let $D=D_1\cup\cdots\cup D_m$ be a divisor in
$X$ with normal crossings and irreducible components $D_i$. Take a
nonnegative Morse--Bott function $\mu_i$ for $D_i$ in the canonical 
conformal class for each $i$. Let $\omega$ be
a top form on $X\smallsetminus D$ so that, at a generic point of each
component $D_i$, $\mu_i^{N}\omega_i$ is smooth for sufficiently large
$N$. We then have a zeta function
\[
  \zeta(s;(\mu_i)_{i=1}^m,\omega)=\int_X\mu_1^{\frac{s_1}2}\cdots\mu_m^{\frac{s_m}2}\omega
\]
which has an analytic continuation to a meromorphic function of
$s\in\mathbb C^m$ with at most simple poles on the hyperplanes
$s_i=2k$, $i=1,\dots,m$, $k\in\mathbb Z$. We let $[m]=\{1,\dots,m\}$
and for any $M\subset [m]$,
\[
  I_M((\mu_i)_{i=1}^m,\omega)=\mathrm{res}_{s_1=0}\cdots\mathrm{res}_{s_m=0}
  \left(\prod_{i\in
      M}\frac1{s_i}\right)\,\zeta(s;(\mu_i)_{i=1}^m,\omega).
\]
and define the finite part as
\[
  I_{\mathrm{finite}}((\mu_i)_{i=1}^m,\omega)=I_{[n]}((\mu_i)_{i=1}^m,\omega).
\]
\begin{theorem}\label{t-06}
  \
  \begin{enumerate}
  \item $I_M$ is independent of $\mu_j$, $j\not\in M$.
  \item Let $i\in M$ and $\varphi\in C^\infty(X)$. Then
    \[
      I_M(\dots, e^{2\varphi}\mu_i,\dots,\omega)=
      I_M(\dots,\mu_i,\dots,\omega) +
      I_{M\smallsetminus\{i\}}(\mu_1,\dots,\mu_m,\varphi\,\omega)
    \]
  \end{enumerate}
\end{theorem}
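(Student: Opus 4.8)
The plan is to run the argument of the two-component Proposition in $m$ variables, with a little more index bookkeeping. The key identity, valid for $\mathrm{Re}(s_\ell)$ large and for any fixed $i$, is the $m$-variable analogue of \eqref{e-92}:
\[
  \zeta(s;\mu_1,\dots,e^{2\varphi}\mu_i,\dots,\mu_m,\omega)
  -\zeta(s;\mu_1,\dots,\mu_m,\omega)
  = s_i\,\zeta\bigl(s;\mu_1,\dots,\mu_m,\omega(s_i)\bigr),
\]
where only the $i$-th Morse--Bott function is rescaled and
$\omega(s_i)=\tfrac{e^{s_i\varphi}-1}{s_i}\,\omega=\sum_{k\ge1}\tfrac{s_i^{k-1}}{k!}\,\varphi^k\omega$
is entire in $s_i$ with values in admissible integrands (smooth multiples of $\omega$). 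Continuing the series term by term, the right-hand side extends to a meromorphic function of $s=(s_1,\dots,s_m)$ with at most simple poles on the hyperplanes $s_\ell=2k$; in particular $\prod_{\ell=1}^m s_\ell\cdot\zeta(s;\mu_1,\dots,\mu_m,\omega(s_i))$ is holomorphic near the origin, exactly like $\prod_{\ell=1}^m s_\ell\cdot\zeta(s;\mu_1,\dots,\mu_m,\omega)$ itself.

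Next I would reformulate the residue functional. By this pole structure the iterated residues $\mathrm{res}_{s_1=0}\cdots\mathrm{res}_{s_m=0}$ of $\bigl(\prod_{j\in M}\tfrac1{s_j}\bigr)\zeta$ are order-independent, and $I_M((\mu_\ell),\omega)$ is precisely the coefficient of $\prod_{i\in M}s_i$ in the Taylor expansion of $\prod_{\ell}s_\ell\cdot\zeta(s;(\mu_\ell),\omega)$ at the origin (this is the $m$-variable version of the computation in the proof of Theorem \ref{t-01}). With this in hand I apply $\mathrm{res}_{s_1=0}\cdots\mathrm{res}_{s_m=0}\bigl(\prod_{j\in M}\tfrac1{s_j}\bigr)$ to both sides of the identity and split into cases according to whether $i\in M$.

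If $i\notin M$, substitute $\omega(s_i)=\sum_{k\ge1}\tfrac{s_i^{k-1}}{k!}\varphi^k\omega$ and absorb the prefactor $s_i$, obtaining $\sum_{k\ge1}\tfrac{s_i^k}{k!}\zeta(s;\mu_1,\dots,\mu_m,\varphi^k\omega)$; each summand is regular in $s_i$ at $0$ because $\zeta(s;\dots,\varphi^k\omega)$ has at most a simple pole there and $k\ge1$, and no factor $\tfrac1{s_j}$ involves $s_i$. Hence $\mathrm{res}_{s_i=0}$, and with it the whole functional, kills the right-hand side, so $I_M(\dots,e^{2\varphi}\mu_i,\dots,\omega)=I_M(\dots,\mu_i,\dots,\omega)$; this is part (1). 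If $i\in M$, the same substitution gives $\bigl(\prod_{j\in M}\tfrac1{s_j}\bigr)\sum_{k\ge1}\tfrac{s_i^k}{k!}\zeta(s;\dots,\varphi^k\omega)=\sum_{k\ge1}\tfrac{s_i^{k-1}}{k!}\bigl(\prod_{j\in M\setminus\{i\}}\tfrac1{s_j}\bigr)\zeta(s;\dots,\varphi^k\omega)$; for $k\ge2$ the factor $s_i^{k-1}$ again makes the summand regular in $s_i$, so only $k=1$ survives, contributing
\[
  \mathrm{res}_{s_1=0}\cdots\mathrm{res}_{s_m=0}\Bigl(\prod_{j\in M\setminus\{i\}}\tfrac1{s_j}\Bigr)
  \zeta(s;\mu_1,\dots,\mu_m,\varphi\omega)=I_{M\setminus\{i\}}(\mu_1,\dots,\mu_m,\varphi\omega),
\]
which is part (2). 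Finally, since in the normal-crossing setting each $\mu_\ell$ is taken within the canonical conformal class, any two admissible choices differ by maps $\mu_\ell\mapsto e^{2\varphi_\ell}\mu_\ell$, and iterating part (2) — using part (1) to drop the dependence on the $\mu_\ell$ with $\ell\notin M$ appearing in the lower-order terms — reproduces the full transformation formula, exactly as in the two-component case.

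The only real work is the multivariable bookkeeping: one needs that the meromorphic continuation of $\zeta$ near the origin has polar locus contained in $\bigcup_\ell\{s_\ell=0\}$ with multiplicity one, so that the iterated residues are well defined, commute, and compute a Taylor coefficient of $\prod_\ell s_\ell\cdot\zeta$; and one needs that this structure persists for the auxiliary integrand $\omega(s_i)$, which itself depends on $s_i$. Both points are handled exactly as in the one-variable theory — expand $\omega(s_i)$ in its locally uniformly convergent power series and continue term by term — reducing every claim to the one-form-at-a-time results already established. I do not expect any genuine obstacle beyond this accounting.
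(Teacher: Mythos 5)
Your proposal is correct and follows essentially the same route as the paper: the paper proves the two-component case via the identity $\zeta(s;e^{2\varphi}\mu_i,\dots)-\zeta(s;\mu_i,\dots)=s_i\,\zeta(s;\dots,\tfrac{e^{s_i\varphi}-1}{s_i}\omega)$ and reads off residues, and Theorem \ref{t-06} is stated as the straightforward generalization, which is exactly what you carry out. Your extra bookkeeping (interpreting $I_M$ as a Taylor coefficient of $\prod_\ell s_\ell\cdot\zeta$ and continuing the series in $\omega(s_i)$ term by term) is a legitimate spelling-out of what the paper leaves implicit, not a different method.
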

We write $I_M=I_M((\mu_i)_{i\in M},\omega)$ accordingly.
\begin{cor}
  \begin{align*}
    I_{\mathrm{finite}}((e^{2\varphi_i}\mu_i)^m_{i=1},\omega)
    &=
      \sum_{M\subset[m]}I_M((\mu_i)_{i\in M},
      {\textstyle{\prod_{i\not\in M}}}\varphi_i\,\omega)\\
    &=
      I_{\mathrm{finite}}((\mu_i)_{i=1}^m,\omega)
      +\sum_{M\subsetneqq[m]}I_M((\mu_i)_{i\in M},
      {\textstyle{\prod_{i\not\in M}}}\varphi_i\,\omega).
  \end{align*}
\end{cor}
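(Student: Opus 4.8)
The plan is to deduce the Corollary from Theorem~\ref{t-06} by replacing the Morse--Bott functions $\mu_i$ by $e^{2\varphi_i}\mu_i$ one index at a time, the only work being the combinatorial bookkeeping. The convenient vehicle is the auxiliary identity: for every $S\subseteq[m]$, every $M\subseteq[m]$, and every admissible top form $\eta$ on $X\smallsetminus D$ (one for which $\mu_i^N\eta$ is smooth near a generic point of each $D_i$ for $N\gg0$),
\[
  I_M\bigl((e^{2\varphi_i}\mu_i)_{i\in S},(\mu_i)_{i\notin S},\eta\bigr)
  =\sum_{T\subseteq S\cap M}I_{M\smallsetminus T}\bigl((\mu_i)_{i=1}^m,{\textstyle{\prod_{i\in T}}}\varphi_i\,\eta\bigr),
\]
which I would prove by induction on $|S|$; note that admissibility of $\eta$ is preserved under multiplication by the smooth functions $\varphi_i$. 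The base case $S=\emptyset$ is the tautology that only $T=\emptyset$ contributes.

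For the inductive step, write $S=S'\sqcup\{j\}$ and distinguish two cases. If $j\notin M$, then Theorem~\ref{t-06}(1) says $I_M$ does not depend on its $j$-th slot, so the left-hand side is unchanged when $e^{2\varphi_j}\mu_j$ is replaced by $\mu_j$; the inductive hypothesis applied with exponentiated set $S'$ then finishes this case because $S\cap M=S'\cap M$. If $j\in M$, then Theorem~\ref{t-06}(2) applied in the $j$-th slot (the remaining slots carrying $e^{2\varphi_i}\mu_i$ for $i\in S'$ and $\mu_i$ for $i\notin S$) rewrites the left-hand side as the sum of an $I_M$-term with form $\eta$ and an $I_{M\smallsetminus\{j\}}$-term with form $\varphi_j\eta$, both now exponentiated only over $S'$. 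Applying the inductive hypothesis to each and reindexing the second sum by $T'=T\cup\{j\}$ — legitimate since $j\notin S'$, and $S\cap M=(S'\cap M)\sqcup\{j\}$ because $j\in M$ — recombines the two pieces into a single sum over all $T\subseteq S\cap M$, as required.

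Finally, specializing the auxiliary identity to $S=M=[m]$, $\eta=\omega$ and substituting $M'=[m]\smallsetminus T$ (so ${\prod_{i\in T}}\varphi_i={\prod_{i\notin M'}}\varphi_i$) yields
\[
  I_{\mathrm{finite}}\bigl((e^{2\varphi_i}\mu_i)_{i=1}^m,\omega\bigr)
  =\sum_{M'\subseteq[m]}I_{M'}\bigl((\mu_i)_{i=1}^m,{\textstyle{\prod_{i\notin M'}}}\varphi_i\,\omega\bigr).
\]
By Theorem~\ref{t-06}(1) each summand $I_{M'}$ depends only on $(\mu_i)_{i\in M'}$, which gives the first displayed equality of the Corollary; isolating the term $M'=[m]$ (whose product is empty, hence $1$, and whose value is $I_{[m]}=I_{\mathrm{finite}}((\mu_i)_{i=1}^m,\omega)$) gives the second. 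The only genuine obstacle is the bookkeeping in the inductive step — in particular checking that the cases $j\in M$ and $j\notin M$ reassemble to the same master formula — since all analytic content (meromorphic continuation of the multivariable zeta function and the independence statements) is already packaged in Theorem~\ref{t-06}.
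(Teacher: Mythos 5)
Your proposal is correct and follows the paper's (implicit) route: the paper derives the $m=2$ case by applying the one-slot change-of-$\mu$ formula twice and leaves the general case as an immediate iteration of Theorem \ref{t-06}, which is exactly what your induction on the set $S$ of exponentiated slots carries out, using part (1) when $j\notin M$ and part (2) when $j\in M$. Your master identity is just a careful bookkeeping of that iteration, so the argument matches the intended proof.
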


\section{Codimension one, manifolds with boundary}\label{s-5}
In the case of codimension 1 there are two situations in which it
makes sense to ask about divergent integrals with an integrand singular
on $Y\subset X$: the case of a hypersurface $Y$ in a manifold $X$,
which is a special case of what we considered so far, and the case of a
manifold $X$ with boundary (or a boundary component) $Y$, to which we
show that our results extend.
\subsection{Real hypersurfaces}
In the case of a submanifold $Y\subset X$ of codimension $m=1$, the
ratio of any two nonnegative Morse--Bott functions vanishing on $Y$ is
an everywhere positive function, so there is only one conformal class
in this case. It is a special case of odd codimension, and thus
$I_0=0$.  By Theorem \ref{ti-1}, the finite part of a divergent
integral $\int_X\omega$ is the value of the zeta function at $s=0$, it
is independent of the choice of Morse--Bott function and depends only
on the cohomology class of $\omega\in\mathcal A_\mu(X)$.  This is a
generalization of the classical theory of principal values.
\subsection{Manifolds with boundary} Let $X$ be an $n$-dimensional
oriented manifold with boundary and $Y\subset X$ be a union of
connected components of the boundary. Let $\lambda$ be a nonnegative
smooth function on $X$ vanishing to first order on $Y$ (i.e., such that
$\lambda|_Y=0$ and $d\lambda|_Y\neq0$) and positive
everywhere else. Such a function is unique up to multiplication by an
everywhere positive function.  A differential form $\omega$ defined on
$X\smallsetminus Y$ is said to have a pole singularity of order
$M\in\mathbb Z_{\geq0}$ on the boundary if $\lambda^M\omega$ extends
smoothly%
\footnote{A smooth form on a manifold with boundary is by definition a
  differential form whose pull-back to any coordinate chart
  $U\subset\mathbb R_{\geq0}\times\mathbb R^{n-1}$ is locally the
  restriction of a form defined on $V\subset\mathbb R^{n}$ for some
  open set $V\supset U$.} %
to the boundary for some integer $M$. This condition does not depend
on the choice of $\lambda$.  Let $\mathcal A_{X,Y}$ be the sheaf of
differential forms on the interior of $X$ with polar singularities on
the boundary. We consider regularization of divergent integrals
\[
  \int_{\lambda\geq\epsilon}\omega, \qquad
  \omega\in\Gamma_c(X,\mathcal A^n_{X,Y})
\]
on global sections with compact support.  The corresponding zeta
function is the meromorphic continuation of
\[
  \tilde\zeta(s;\lambda,\omega)=\int_X\lambda^s\omega,\qquad
  \mathrm{Re}\,s\gg 0
\]
To compare with the previous sections, note that $\mu=\lambda^2$
vanishes to second order at the boundary and should be thought of as a
nonnegative Morse--Bott function. Then
$\tilde\zeta(s;\lambda,\omega)=\zeta(s;\mu,\omega)$ and
$\mathcal A_{X,Y}$ is $\mathcal A_{X,\mu}$; it is independent of
$\mu$.

The proofs of the following results are parallel to the ones in the
case of even codimension of the previous sections. The main difference
is that the poles of the zeta function lie on arithmetic progressions
with step 1 rather than 2. Therefore the zeta function has a pole at
zero in spite of the fact that the codimension is odd.

\begin{theorem}\label{t-b01} Let $M\in\mathbb Z_{\geq0}$ and $\lambda$
  be a nonnegative smooth function vanishing on $Y$ to first
  order. Let $\omega\in\Gamma_c(X,\mathcal A_{X,Y})$ have polar
  singularity of order $M$ at $Y\subset \partial X$. Then
  \begin{enumerate}
  \item[(i)] $\int_X\lambda^{s}\omega$ is holomorphic for
    $\mathrm{Re}\,s>M-1$ and has a meromorphic continuation
    $\tilde\zeta(s;\lambda,\omega)$ with at most simple poles on the
    arithmetic progression $s=M-1,M-2,\dots$.
  \item[(ii)] As $\epsilon\to 0$ we have an expansion
    \[
      \int_{\lambda\geq\epsilon}\omega
      =\sum_{k=1}^{M-1}I_{-k}(\lambda,\omega)\epsilon^{-k}+
      I_0(\omega)\log\frac1\epsilon
      +I_{\mathrm{finite}}(\lambda,\omega)+O(\epsilon).
    \]
  \item[(iii)] For $k=1,\dots, M-1$,
    \[
      I_{-k}(\lambda,\omega)
      =\frac1k\mathrm{res}_{s=k}\tilde\zeta(s;\lambda,\omega),
    \]
    and
    \[
      I_0(\omega) =\mathrm{res}_{s=0}\tilde\zeta(s;\lambda,\omega)
    \]
    is independent of $\lambda$. The finite part is
    \[
      I_{\mathrm{finite}}(\lambda,\omega) =\lim_{s\to
        0}\left(\tilde\zeta(s;\lambda,\omega)-\frac{I_0(\omega)}{s}\right).
    \]
  \item[(iv)] For any function $\varphi\in C^\infty(X)$,
    \[
      I_{\mathrm{finite}}(\lambda
      e^{\varphi},\omega)=I_{\mathrm{finite}}(\lambda,\omega) +
      I_0(\varphi\,\omega).
    \]
  \end{enumerate}
\end{theorem}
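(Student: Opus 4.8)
The plan is to reduce everything to the one-variable Mellin transform picture already developed in Section \ref{ss-2.2}, exactly as in the proof of Theorem \ref{t-01}, and then to mimic the proof of Theorem \ref{t-02} for part (iv). First I would localise: by subtracting a form with support disjoint from $Y$ (which contributes only convergent integrals), assume $\omega$ is supported in a small neighbourhood of a boundary component. Using a partition of unity and a boundary coordinate chart $U\subset\mathbb R_{\geq0}\times\mathbb R^{n-1}$ in which $\lambda$ is the first coordinate $x_1$ (possible since $\lambda$ vanishes to first order with nonvanishing differential, so it can be taken as a coordinate after the analogue of the Morse--Bott normalisation — here just the collar neighbourhood theorem), write $\omega=x_1^{-M}f\,dx_1\wedge\cdots\wedge dx_n$ with $f$ smooth and compactly supported. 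Setting $t=\lambda$ and introducing the ``level set integral'' $\tilde I(t;\lambda,\alpha)=\int_{\lambda=t}\alpha$ for the form $\alpha$ with $\omega=\tfrac{d\lambda}\lambda\wedge\alpha$, one gets $t^{M}\tilde I(t)$ extends to a smooth compactly supported function of $t\in\mathbb R_{\geq0}$, and $\tilde\zeta(s;\lambda,\omega)=\int_0^\infty t^{s-1}\tilde I(t)\,dt$, while $\int_{\lambda\geq\epsilon}\omega=\int_\epsilon^\infty \tilde I(t)\tfrac{dt}{t}$.

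Next I would run the Laurent-expansion argument verbatim. Expand $\tilde I(t)=\sum_{k=-M+1}^{p}b_kt^k+R_p(t)$ with $R_p(t)=O(t^{p+1})$ on $[0,M']$; the key difference from Theorem \ref{t-01} is that there is no evenness constraint, since the involution $y\mapsto -y$ of the normal sphere $S^0$ is not available — the normal direction is a half-line — so \emph{all} integer powers can occur and in particular $b_0$ need not vanish. This gives the meromorphic continuation with at most simple poles on $s=M-1,M-2,\dots$ (part (i)), the identification $b_k=\mathrm{res}_{s=-k}\tilde\zeta$, and, expanding to order $p=0$ in the cutoff integral, the expansion in (ii) with $I_{-k}=b_{-k}/k$ and the $\log(1/\epsilon)$-coefficient $b_0=\mathrm{res}_{s=0}\tilde\zeta$. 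The finite part is $I_{\mathrm{finite}}=\sum_{k=-M+1}^{-1}b_k M'^k/k+b_0\log M'+\int_0^{M'}R_0(t)\tfrac{dt}t$, which equals $\lim_{s\to0}(\tilde\zeta(s;\lambda,\omega)-b_0/s)$ as in Theorem \ref{t-01}, giving the stated formula in (iii).

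For the $\lambda$-independence of $I_0$ in (iii) and for part (iv), I would copy the proof of Theorem \ref{t-02}: write the ratio of two admissible boundary-defining functions as $e^{\varphi}$, and for $\mathrm{Re}\,s\gg0$ compute
\[
  \int_X(e^{\varphi}\lambda)^s\omega-\int_X\lambda^s\omega
  =s\int_X\lambda^s\,\omega(s),\qquad
  \omega(s)=\frac{e^{s\varphi}-1}{s}\,\omega,
\]
where $\omega(s)$ is an entire family with values in $\mathcal A_{X,Y}$ and $\omega(0)=\varphi\,\omega$. Since $\int_X\lambda^s\omega(s)$ has at most a simple pole at $s=0$, the right-hand side is regular there, so $\tilde\zeta(s;e^\varphi\lambda,\omega)$ and $\tilde\zeta(s;\lambda,\omega)$ have the same residue at $0$, proving $I_0$ is independent of $\lambda$; and taking the limit $s\to0$ of the difference, $I_{\mathrm{finite}}(e^\varphi\lambda,\omega)-I_{\mathrm{finite}}(\lambda,\omega)=\lim_{s\to0}s\int_X\lambda^s\omega(s)=\mathrm{res}_{s=0}\tilde\zeta(s;\lambda,\varphi\,\omega)=I_0(\varphi\,\omega)$, which is (iv). The only genuinely new point compared with Section \ref{s-2} is checking that the normalisation ``$\lambda=x_1$ in a collar'' can be arranged compatibly with compact support and that $t^M\tilde I(t)$ is smooth up to $t=0$; once that is in place the rest is a transcription. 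The main obstacle is thus organisational rather than substantive — making sure the collar-coordinate reduction and the ``smooth form on a manifold with boundary'' convention (footnote) interact correctly so that $f$ above is genuinely smooth up to the boundary; everything else follows the already-proven even-codimension template with the step-$2$ progression replaced by the step-$1$ progression.
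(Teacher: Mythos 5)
Your proposal is correct and follows essentially the same route as the paper, which proves Theorem \ref{t-b01} by transcribing the Mellin-transform/Laurent-expansion argument of Theorems \ref{t-01} and \ref{t-02} with $\mu=\lambda^2$ and the step-$2$ progression replaced by the step-$1$ progression (no parity cancellation, hence a pole at $s=0$). Only note that your local expression $\alpha=x_1^{1-M}f\,dx_2\wedge\cdots\wedge dx_n$ in fact gives that $t^{M-1}\tilde I(t)$ (not merely $t^{M}\tilde I(t)$) is smooth up to $t=0$, which is exactly what justifies starting your Laurent expansion at $k=-M+1$ and the holomorphy of $\tilde\zeta$ for $\mathrm{Re}\,s>M-1$.
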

The analogue of the tame differential forms are (the real version) of
logarithmic forms. By definition, a logarithmic form in
$\mathcal A_{X,Y}$ is a form $\omega$ such that $\lambda\,\omega$ and
$d\lambda/\lambda\wedge\omega$ extend to smooth forms on $X$ for one
(and thus any)
choice of a nonnegative function $\lambda$ vanishing to first order
on $Y$. As in the complex case, and in the case of forms
with tame singularities, logarithmic forms form a subcomplex of
sheaves $\mathcal A_{X,Y}^{\log}$ which is quasi-isomorphic to
$\mathcal A_{X,Y}$. Given a choice of the function $\lambda$ vanishing
to first order on $Y\subset\partial X$, any logarithmic form can
locally be written as
\[
  \omega=\frac{d\lambda}{\lambda}\wedge\sigma+\tau,
\]
for some smooth forms $\sigma$, $\tau$. Moreover it is standard to
check that $\sigma|_Y$ is independent of the choice of the decomposition
and of the choice of $\lambda$. Thus the map $\omega\mapsto \sigma|_Y$ is
well-defined and is the real analogue of the Poincar\'e residue map.
\begin{definition} The {\em residue} is the map
  $R\colon\mathcal A_{X,Y}\to i_*\mathcal A_Y[-1]$ such that
  \[
    R\left(\frac{d\lambda}{\lambda}\wedge\sigma+\tau\right)=\sigma|_Y.
  \]
\end{definition}
\begin{theorem}\label{t-b04}
  \
  \begin{enumerate}
  \item[(i)] The residue map $R$ is a morphism of complexes of
    sheaves.
  \item[(ii)] For any logarithmic $p$-form
    $\omega\in\Gamma(X,\mathcal A_{X,Y}^{\log})$ smooth compactly
    supported $(n-p)$-form $\varphi\in\Gamma(X,\mathcal A_X)$,
    \[
      I_0(\omega\wedge\varphi)= \int_YR(\omega)\wedge\varphi.
    \]
  \end{enumerate}
\end{theorem}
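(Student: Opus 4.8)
The plan is to reduce Theorem~\ref{t-b04} to the local model and then run the one-variable Mellin-transform argument of Theorem~\ref{t-01}, exactly as the even-codimension case was reduced to Lemma~\ref{l-Varia}. For (i), I would argue as in the paragraph following Theorem~\ref{t-04}: the residue map is $C^\infty_X$-linear by construction, and the fact that it is a morphism of complexes follows formally once (ii) is known, using the codimension-one analogue of Proposition~\ref{pr-02}. Concretely, if $\omega$ is logarithmic of degree $p$, then $d\omega$ is again logarithmic (the real logarithmic forms form a subcomplex, as asserted just before the definition of $R$), and for compactly supported $\varphi$ one computes
\[
  I_0(d\omega\wedge\varphi)=-(-1)^{p}I_0(\omega\wedge d\varphi)
  =-(-1)^{p}\int_Y R(\omega)\wedge d\varphi
  =-\int_Y d R(\omega)\wedge\varphi,
\]
using Stokes on the closed manifold $Y$, and therefore $R(d\omega)=-dR(\omega)$, which is precisely the differential of $i_*\mathcal A_Y[-1]$. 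Here the needed input is that $I_0(d\psi)=0$ and $I_0((d\lambda/\lambda)\wedge\psi)$ controls the finite part, which is the codimension-one version of Proposition~\ref{pr-02}; its proof is identical, integrating by parts in $\tilde\zeta(s;\lambda,d\psi)=-s\,\tilde\zeta(s;\lambda,(d\lambda/\lambda)\wedge\psi)$ and reading off residues at $s=0$.

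For (ii), by a partition of unity I may assume $\omega$ is supported in a coordinate chart $U\subset\mathbb R_{\geq0}\times\mathbb R^{n-1}$ with coordinates $(x_1,x_2,\dots,x_n)$, $Y=\{x_1=0\}$, and $\lambda=x_1$; by $C^\infty_X$-linearity of both sides (arguing as in the uniqueness discussion after Theorem~\ref{t-04}) it suffices to verify the identity for $\varphi$ supported in such a chart. Writing the logarithmic form as $\omega=\frac{dx_1}{x_1}\wedge\sigma+\tau$ with $\sigma,\tau$ smooth and not involving $dx_1$, the top-degree wedge $\omega\wedge\varphi$ has the form $\frac{dx_1}{x_1}\wedge(\text{smooth }(n-1)\text{-form})+(\text{smooth }n\text{-form})$. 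The smooth part contributes an absolutely convergent integral and so contributes nothing to $\mathrm{res}_{s=0}$; only the $\frac{dx_1}{x_1}$-part matters. Thus
\[
  \tilde\zeta(s;\lambda,\omega\wedge\varphi)
  =\int_0^\infty x_1^{s-1}\Bigl(\int_{\{x_1\}\times\mathbb R^{n-1}}\sigma\wedge\varphi\Bigr)dx_1+(\text{holomorphic near }s=0),
\]
and the inner integral is a smooth compactly supported function of $x_1\in\mathbb R_{\geq0}$ whose value at $x_1=0$ is $\int_Y\sigma|_Y\wedge\varphi|_Y=\int_Y R(\omega)\wedge\varphi$. Running the Mellin argument of Theorem~\ref{t-01}(iii) — expand the inner function to order $0$, integrate the leading constant term against $x_1^{s-1}\,dx_1$ over $(0,M)$ — gives $\mathrm{res}_{s=0}=\int_Y R(\omega)\wedge\varphi$, i.e. $I_0(\omega\wedge\varphi)=\int_Y R(\omega)\wedge\varphi$.

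The one genuine point requiring care — the main obstacle — is the well-definedness and locality statement asserted just before the definition of $R$: that $\sigma|_Y$ is independent of the choice of decomposition $\omega=\frac{d\lambda}{\lambda}\wedge\sigma+\tau$ and of $\lambda$, so that the chart-wise formulas glue to a global section of $i_*\mathcal A_Y[-1]$. For independence of the decomposition: if $\frac{d\lambda}{\lambda}\wedge\sigma+\tau=\frac{d\lambda}{\lambda}\wedge\sigma'+\tau'$ with all of $\sigma,\sigma',\tau,\tau'$ smooth, then $\frac{d\lambda}{\lambda}\wedge(\sigma-\sigma')=\tau'-\tau$ is smooth, which forces $\lambda\mid (\sigma-\sigma')$ after stripping the $d\lambda$ factor (in local coordinates with $\lambda=x_1$, the left side has a genuine $\frac{1}{x_1}$ unless $\sigma-\sigma'$ is divisible by $x_1$), hence $(\sigma-\sigma')|_Y=0$. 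For independence of $\lambda$: replacing $\lambda$ by $f\lambda$ with $f>0$ smooth changes $\frac{d\lambda}{\lambda}$ by the smooth form $\frac{df}{f}$, which only modifies $\tau$, leaving $\sigma|_Y$ unchanged. Once this is in place, everything else is the routine transcription of Sections~\ref{ss-2.3}--\ref{s-3} with the arithmetic progression $s=M-1,M-2,\dots$ of step $1$ replacing the step-$2$ progression, which is exactly why $\tilde\zeta$ now has a pole at $s=0$ even though the codimension is odd.
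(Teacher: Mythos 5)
Your proposal is correct and takes essentially the same route the paper intends: Section \ref{s-5} declares the proofs parallel to the even-codimension case, and you simply transcribe the arguments of Proposition \ref{pr-02} and Theorem \ref{t-04} (the local Mellin computation giving the residue at $s=0$ as the boundary value $\int_Y\sigma|_Y\wedge\varphi$, the $C^\infty_X$-linearity/uniqueness argument, and the ``standard'' well-definedness of $\sigma|_Y$) with the step-one arithmetic progression of poles. The only minor imprecision is that when $Y\subsetneq\partial X$ the Stokes identity $\tilde\zeta(s;\lambda,d\psi)=-s\,\tilde\zeta(s;\lambda,(d\lambda/\lambda)\wedge\psi)$ acquires an extra term from $\partial X\smallsetminus Y$ that is entire in $s$, which does not affect $I_0(d\psi)=0$ and hence none of your conclusions.
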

The real analogue of a normal crossing divisor is the boundary of a
manifold with corners.  We leave it to the reader to extend the
results of Section \ref{ss-nc} to this case.
\appendix

\section{Cohomology: local calculation}
Let $X$ be an open ball in $\mathbb R^n$ centered at the origin and
$Y\subset X$ its intersection with the subspace $x_{1}=\cdots=x_m=0$.
Let $\mu=x_1^2+\cdots+x_m^2$. We compute the cohomology of the complex
$\mathcal A_\mu(X)$ of differential forms $\alpha$ on
$X\smallsetminus Y$ such that $\mu^N\alpha$ is smooth for some $N$.

\subsection{Cohomology of $\mathcal A_\mu(D^n)$}
We denote by $\bigwedge(t_1,\dots,t_k)$ the exterior algebra with
generators $t_1,\dots,t_k$.

\begin{prop}\label{p-01}
  Let $X=D^n$ be an open ball in $\mathbb R^n$ centered at the origin,
  $\mu=x_1^2+\cdots+x_m^2$, and $Y=\mu^{-1}(0)\cap X\subset X$.
  \begin{enumerate}
  \item[(i)] If $m$ is odd,
    \[
      H(A_\mu(X))\cong \bigwedge(\bar\alpha),\quad \mathrm{deg}(\bar
      \alpha)=1.
    \]
  \item[(ii)] If $m$ is even,
    \[
      H(A_\mu(X))\cong \bigwedge(\bar\alpha,\bar\beta),\quad
      \mathrm{deg}(\bar\alpha)=1,\quad\mathrm{deg}(\bar\beta)=m-1.
    \]
  \end{enumerate}
  Here $\bar\alpha$ is the class of $\alpha=d\mu/\mu$ and $\bar\beta$
  (in the even case) is the class of
  \[
    \beta=
    \sum_{i=1}^{m}(-1)^{i-1}\frac{dx_1\wedge\cdots\wedge\widehat{dx_i}\wedge\cdots\wedge
      dx_m}{(x_1^2+\cdots+x_m^2)^{m/2}}.
  \]
  It is the basic representative of a rotation invariant volume form
  on the $m-1$-dimensional real projective space.
\end{prop}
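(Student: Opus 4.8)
The plan is to deformation-retract the computation onto a model space and then resolve it by an explicit Mayer--Vietoris or direct spectral-sequence argument. First I would observe that $X\smallsetminus Y$ is homotopy equivalent to $(\mathbb R^m\smallsetminus\{0\})\times Y$, and more usefully that $\mathbb R^m\smallsetminus\{0\}$ deformation retracts onto $S^{m-1}$. The point is that forms in $\mathcal A_\mu(X)$ are exactly those that, after multiplication by a power of $r^2=\mu$, extend smoothly across $Y$; writing everything in spherical coordinates $r>0$, $y\in S^{m-1}$ on the normal $\mathbb R^m$ and keeping the $Y$-coordinates, a form lies in $\mathcal A_\mu$ iff its coefficients are of the form $r^{-2N}(\text{smooth in }r^2,y,x'')$. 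I would then set up the subcomplex $\mathcal B\subset \mathcal A_\mu(X)$ of forms pulled back from $S^{m-1}\times Y$ together with $\mathbb R_{>0}$-invariant forms built from $dr/r$, i.e. generated over the de Rham complex of $Y$ by $\alpha=d\mu/\mu = 2\,dr/r$ and the pullback of forms on $S^{m-1}$. Since $Y$ is a ball it is contractible, so the de Rham cohomology of $S^{m-1}\times Y$ is just $H^\bullet(S^{m-1})$, which gives $\bigwedge(\bar\alpha)$ when $m$ is odd (only $H^0$, with $\bar\alpha$ coming from the $dr/r$ factor being closed but not exact) and $\bigwedge(\bar\alpha,\bar\beta)$ when $m$ is even, $\bar\beta$ being the pullback of the generator of $H^{m-1}(S^{m-1})$, which is exactly the rotation-invariant volume form written down in the statement. (When $m$ is odd the top class on $S^{m-1}$ is there too, but $S^{m-1}/\{\pm1\}=\mathbb{RP}^{m-1}$ has no top cohomology, and the relevant subtlety — whether $\beta$ itself lies in $\mathcal A_\mu$ — is what distinguishes the two cases.)

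The substantive step is then to prove that the inclusion $\mathcal B\hookrightarrow\mathcal A_\mu(X)$ is a quasi-isomorphism. I would do this by a Poincar\'e-lemma-style homotopy operator: contract the $Y=\mathbb R^{n-m}$ directions linearly (this is harmless, everything is smooth and compactly supportable there and a standard chain homotopy works), and then handle the remaining $(r,y)$ variables. In the $r$-direction one wants a homotopy that integrates from $r=\infty$ (or a fixed radius) inward, carefully checking that it preserves the growth condition $\mu^N\cdot(-)$ smooth; the key algebraic fact is that a smooth function of $r^2$ (and $y$, $x''$) that vanishes at $r=0$ is divisible by $r^2$ with smooth quotient, so integration and differentiation in $r^2$ stay inside the complex. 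This reduces the problem to the cohomology of $\mathcal A(S^{m-1})$ tensored appropriately — but on $S^{m-1}$ the forms are genuinely smooth (no singularity), so this is just ordinary $H^\bullet(S^{m-1})$. Assembling, a closed form in $\mathcal A_\mu(X)$ is cohomologous to one pulled back from $S^{m-1}$, wedged with powers of $dr/r$, and the Künneth-type bookkeeping gives the stated exterior algebras.

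I expect the main obstacle to be the careful treatment of the radial homotopy operator and the precise bookkeeping of the pole order $N$ under this operator: one must check that the chosen primitive of a form with $\mu^N(-)$ smooth again has a (possibly larger but finite) pole order, and that closedness of $\mu^N\omega$-type data is respected. A secondary subtlety, which is really the crux of the $m$ odd versus $m$ even dichotomy, is deciding when the spherical volume form $\beta$ actually belongs to $\mathcal A_\mu(X)$: $\beta$ has coefficients $\sim r^{-m}\cdot(\text{smooth})$, so $\mu^{\lceil m/2\rceil}\beta$ is smooth and $\beta\in\mathcal A_\mu$ for all $m$, but $\beta$ is exact on $\mathbb R^m\smallsetminus\{0\}$ when $m$ is odd — indeed $\beta$ generates $H^{m-1}(S^{m-1})$ always, yet for $m$ odd the antipodal-invariance forces it to be exact within the invariant complex — so the honest point is to show $\beta$ is \emph{not} exact in $\mathcal A_\mu(X)$ precisely when $m$ is even, which one sees by integrating $\beta$ over a small sphere linking $Y$ and noting the pairing is well-defined on $\mathcal A_\mu$-cohomology only up to the orientation issue that disappears exactly in even codimension. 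I would verify non-exactness of $\bar\alpha$ and $\bar\beta$ and the vanishing of all other cohomology by exhibiting explicit closed-form generators and the homotopy above, rather than by an abstract argument.
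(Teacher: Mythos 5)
Your plan founders on a point that is in fact the crux of the statement: you assert that $\mu^{\lceil m/2\rceil}\beta$ is smooth, hence $\beta\in\mathcal A_\mu(X)$ for every $m$, and that the odd/even dichotomy is a matter of $\beta$ becoming \emph{exact} for odd $m$. Both claims are wrong. For $m$ odd, $\mu^{N}\beta$ has coefficients involving $\mu^{N-m/2}=r^{2N-m}$, an \emph{odd} power of $r=\sqrt{\mu}$, which is never smooth across $Y$ for any $N$; so $\beta$ simply does not belong to $\mathcal A_\mu(X)$ when $m$ is odd (this is exactly how the paper's Lemma \ref{l-004} separates the two cases: ``not in $\mathcal C_\mu(X)$ due to the presence of the square root of $\mu$''). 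Conversely, $\beta$ is \emph{not} exact on $\mathbb R^m\smallsetminus\{0\}$ for odd $m$: it restricts to the volume form of $S^{m-1}$ and generates $H^{m-1}$ there for all $m\geq 2$; the antipodal/projective-space heuristic you invoke does not produce exactness in the smooth de Rham complex of the punctured space.

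The deeper issue is your first reduction: $H(\mathcal A_\mu(X))$ is \emph{not} the de Rham cohomology of $X\smallsetminus Y\simeq S^{m-1}\times Y\times\mathbb R_{>0}$, so no homotopy-invariance/K\"unneth argument that lands on $H^\bullet(S^{m-1})$ can give the stated answer. Indeed $\alpha=d\mu/\mu=2\,d\log r$ is exact in the ordinary de Rham complex of $X\smallsetminus Y$ but not in $\mathcal A_\mu(X)$ (because $\log\mu$ is not of the form $\mu^{-N}\cdot(\text{smooth})$), while for odd $m$ the topological class in degree $m-1$ survives in $H^{m-1}_{dR}(X\smallsetminus Y)$ but has no representative in $\mathcal A_\mu(X)$. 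Your answer for the ``sphere factor'' (``only $H^0$ when $m$ is odd'') is therefore also incorrect as a statement about $S^{m-1}$, and the proposed radial homotopy operator, even if made to preserve pole order, would at best compute the wrong target. The pole-order condition must enter the computation structurally, not as an afterthought: the paper does this by splitting off the ideal of forms flat along $Y$ (acyclic by a dilation homotopy), passing via Borel's lemma to formal power series in the normal variables with integer powers of $\mu$ as denominators, reducing to the kernel of the Euler operator $L_e$, decomposing $\omega=\frac{d\mu}{\mu}\wedge\sigma+\tau$ with $\iota_e\sigma=\iota_e\tau=0$, and finally averaging over $\mathrm{SO}(m)$, at which point the only candidate invariant forms are $1$ and $\beta$ and the integrality of the power of $\mu$ rules out $\beta$ exactly when $m$ is odd. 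To repair your write-up you would need to replace the topological retraction step by an argument of this kind (or otherwise keep track of the parity of the $r$-powers throughout), and to redo the $\beta$ discussion with the correct membership criterion.
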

To prove this result we first notice that $\mathcal A_\mu(X)$ has a
subcomplex $\mathcal B(X)$ of differential forms vanishing to infinite
order at $Y$. By Borel's lemma, the quotient
$\mathcal A_\mu(X)/\mathcal B(X)$ is isomorphic to the complex of
differential forms that are formal power series in the normal
direction:
\[
  \mathcal C_\mu(X)=\mathcal A_\mu(X)/\mathcal B(X)=\mathcal
  A(Y)[[x_1,\dots,x_m]][\frac1\mu]\otimes\bigwedge(dx_1,\dots,dx_m).
\]
The Euler vector field $e=\sum_{i=1}^mx_i\partial_{x_i}$ acts on
$\mathcal C_\mu$ via the Lie derivative
$L_e=d\circ\iota_e+\iota_e\circ d$.
\begin{lemma}\label{l-001}
  The inclusion map
  $\mathrm{Ker}(L_e)\hookrightarrow \mathcal C_\mu(X)$ induces an
  isomorphism in cohomology.
\end{lemma}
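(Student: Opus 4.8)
The plan is to exhibit an explicit contracting homotopy on the quotient complex $\mathcal C_\mu(X)/\mathrm{Ker}(L_e)$, or equivalently to decompose $\mathcal C_\mu(X)$ as a direct sum of subcomplexes indexed by the eigenvalues of $L_e$ and show that only the zero-eigenvalue part carries cohomology. First I would observe that $\mathcal C_\mu(X)$ is graded: writing elements as $\mathcal A(Y)$-valued formal series in $x_1,\dots,x_m$ (with $1/\mu$ inverted) tensored with $\bigwedge(dx_1,\dots,dx_m)$, the Euler field $e$ acts semisimply with integer eigenvalues — a monomial $x^a\,dx^I\,\eta$ (with $\eta$ a form on $Y$) is an eigenvector of $L_e$ with eigenvalue $|a|+|I|$, where negative exponents from powers of $\mu^{-1}$ are allowed since $\mu$ is homogeneous of degree $2$. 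Because $L_e = d\iota_e + \iota_e d$ commutes with $d$, each eigenspace $\mathcal C_\mu(X)_\lambda = \{\xi : L_e\xi = \lambda\xi\}$ is a subcomplex, and $\mathcal C_\mu(X) = \bigoplus_\lambda \mathcal C_\mu(X)_\lambda$ in the sense of formal series (the decomposition is compatible with the differential and the topology).

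Next I would show that $\mathcal C_\mu(X)_\lambda$ is acyclic for $\lambda\neq 0$. On such an eigenspace the Cartan homotopy formula reads $d\iota_e + \iota_e d = \lambda\cdot\mathrm{id}$, so $h := \lambda^{-1}\iota_e$ is a chain homotopy between the identity and the zero map: $dh + hd = \mathrm{id}$ on $\mathcal C_\mu(X)_\lambda$. Hence $H(\mathcal C_\mu(X)_\lambda) = 0$ for $\lambda\neq 0$. It follows that the inclusion $\mathcal C_\mu(X)_0 = \mathrm{Ker}(L_e) \hookrightarrow \mathcal C_\mu(X)$ is a quasi-isomorphism: the long exact sequence (or simply the direct-sum decomposition) shows that the cohomology of the ambient complex is concentrated in the $\lambda = 0$ summand. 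One must check that $\iota_e$ genuinely preserves the relevant space — it does, since $\iota_e$ lowers form degree, acts as a derivation, and sends the generators $dx_i$ to $x_i$, so it maps $\mathcal C_\mu(X)$ to itself; and it preserves each eigenspace because $[L_e,\iota_e]=0$.

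The one point requiring a little care — and the main obstacle — is the convergence/well-definedness of the eigenspace decomposition for genuine (not truncated) formal power series: an element of $\mathcal C_\mu(X)$ is an infinite sum of eigenvectors, and I must ensure that the homotopy $h$ defined summand-by-summand assembles to a well-defined operator on the whole complex. Here the key is that $h$ acts on the $\lambda$-eigenspace by the scalar $\lambda^{-1}$, and since $\lambda$ ranges over a set of integers bounded below (namely $\lambda \ge -2M + 0$ for the fixed $M$ with $\mu^M\xi$ smooth, once we also account for form degree), the scalars $\lambda^{-1}$ stay bounded away from accumulating badly; more precisely, for each fixed total degree in the $x_i$ only finitely many eigenvalues occur, so $h$ is continuous for the natural filtration by order of vanishing (equivalently by order of pole in $\mu$) and extends to all of $\mathcal C_\mu(X)$. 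With that continuity in hand, $dh+hd=\mathrm{id}$ holds on the complement of $\mathrm{Ker}(L_e)$, and the lemma follows.
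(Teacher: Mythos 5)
Your proof is correct and follows essentially the same route as the paper: the paper's one-line argument splits $\mathcal C_\mu(X)$ as $\mathrm{Ker}(L_e)\oplus\mathrm{Im}(L_e)$ with $L_e$ invertible on the image, which is precisely the semisimple action of $L_e$ on formal series that you make explicit, and the implicit homotopy $L_e^{-1}\iota_e$ is exactly your eigenvalue-wise $\lambda^{-1}\iota_e$. Your extra care about assembling the homotopy over the formal-series decomposition is a fair elaboration of what the paper leaves implicit, not a different method.
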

\begin{proof}
  The complex $\mathcal C_\mu$ splits into a direct sum
  $\mathcal C_\mu(X)=\mathrm{Ker}(L_e)\oplus\mathrm{Im}(L_e)$ of
  subcomplexes invariant under $L_e$, such that $L_e$ is invertible on
  $\mathrm{Im}(L_e)$.
\end{proof}
\begin{lemma}\label{l-002}
  Any form $\omega\in\mathcal C_\mu(X)$ can uniquely be written as
  \[
    \omega=\frac{d\mu}\mu\wedge\sigma+\tau,
  \]
  where $\iota_e\sigma=0=\iota_e\tau$.
\end{lemma}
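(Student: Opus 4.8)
The plan is to write down the two components by an explicit contraction formula. First I would record the basic identity: since $d\mu=\sum_{i=1}^m 2x_i\,dx_i$, contraction with the Euler field $e=\sum_{i=1}^m x_i\partial_{x_i}$ gives $\iota_e d\mu=2\sum x_i^2=2\mu$, hence $\iota_e(d\mu/\mu)=2$. Given $\omega\in\mathcal C_\mu(X)$, set
\[
  \sigma=\tfrac12\,\iota_e\omega,\qquad \tau=\omega-\frac{d\mu}\mu\wedge\sigma.
\]
Both operations used here — contracting with $e$ and wedging with $d\mu/\mu$ — manifestly preserve $\mathcal C_\mu(X)$: the Euler field has polynomial coefficients in $x_1,\dots,x_m$, and $d\mu/\mu$ itself lies in $\mathcal C_\mu(X)$ since the denominator $\mu$ is already inverted in that algebra. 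So $\sigma,\tau\in\mathcal C_\mu(X)$, and by construction $\omega=\frac{d\mu}\mu\wedge\sigma+\tau$.

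Next I would check the two annihilation conditions. From $\iota_e^2=0$ we get $\iota_e\sigma=\tfrac12\iota_e^2\omega=0$ immediately. For $\tau$, apply the graded Leibniz rule $\iota_e(\eta\wedge\zeta)=(\iota_e\eta)\wedge\zeta+(-1)^{|\eta|}\eta\wedge\iota_e\zeta$ with $\eta=d\mu/\mu$ of degree $1$:
\[
  \iota_e\Bigl(\frac{d\mu}\mu\wedge\sigma\Bigr)
  =2\sigma-\frac{d\mu}\mu\wedge\iota_e\sigma=2\sigma=\iota_e\omega,
\]
so that $\iota_e\tau=\iota_e\omega-\iota_e\bigl(\tfrac{d\mu}\mu\wedge\sigma\bigr)=0$. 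This establishes existence of the claimed decomposition.

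For uniqueness, suppose $\frac{d\mu}\mu\wedge\sigma+\tau=0$ with $\iota_e\sigma=\iota_e\tau=0$. Contracting with $e$ and using the same Leibniz computation yields $2\sigma=0$, hence $\sigma=0$ and then $\tau=0$; equivalently, any decomposition must satisfy $\sigma=\tfrac12\iota_e\omega$ and therefore is the one written above. There is no genuine obstacle in this argument — it is the familiar splitting of forms into a ``radial'' piece $\frac{d\mu}\mu\wedge(-)$ and an ``angular'' piece killed by $\iota_e$ — and the only points requiring a moment's care are the sign in the graded Leibniz rule for the interior product and the remark that none of the operations introduce a pole worse than the already-inverted $\mu$, so everything stays inside $\mathcal C_\mu(X)$.
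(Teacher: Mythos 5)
Your proof is correct and follows essentially the same route as the paper: both define $\sigma=\tfrac12\iota_e\omega$, $\tau=\omega-\tfrac{d\mu}\mu\wedge\sigma$, use $\iota_e d\mu=2\mu$ and $\iota_e^2=0$ together with the graded Leibniz rule, and obtain uniqueness by contracting the relation with $e$. The only addition is your (correct, if routine) remark that the operations stay inside $\mathcal C_\mu(X)$.
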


\begin{proof}
  We have
  \[
    \iota_ed\mu=\sum x_i\iota_{\partial_{x_i}}2\sum{x_idx_i}=2\mu.
  \]
  Thus, given $\omega$ we set $\sigma=\frac12\iota_e\omega$ and
  $\tau=\omega-\frac{d\mu}{\mu}\wedge\sigma$.  Then $\iota_e\sigma=0$
  (since $\iota_e^2=0$) and
  \[
    \iota_e\tau=\iota_e\omega-\frac12\iota_e\frac{d\mu}{\mu}\wedge\sigma=0.
  \]
  This proves existence. To check uniqueness, suppose
  $0=\frac{d\mu}{\mu}\wedge\sigma+\tau$ with
  $\sigma,\tau\in\mathrm{Ker}(\iota_e)$.  Then applying $\iota_E$ we
  get $0=\iota_E\frac{d\mu}\mu\wedge\sigma$ and thus $\sigma=0$, and
  therefore also $\tau=0$.
\end{proof}
Let
$\mathcal C_{\mu,\mathrm{basic}}(X)=\mathrm{Ker}(L_e)\cap
\mathrm{Ker}(\iota_e)$. It is the subcomplex of basic differential
forms for the action of the group of dilations in the normal
direction.  Let us denote by $C[-1]$ the $(-1)$-shift of a cochain
complex $C$: $C[-1]^i=C^{i-1}$ with differential $d_{C[-1]}=-d_C$.
\begin{lemma}\label{l-003}
  The map
  \[
    \mathcal C_{\mu,\mathrm{basic}}(X)[-1]\oplus\mathcal
    C_{\mu,\mathrm{basic}}(X)\to \mathrm{Ker}(L_e)
  \]
  sending $\sigma\oplus\tau$ to $d\mu/\mu\wedge\sigma+\tau$ is an
  isomorphism of complexes.
\end{lemma}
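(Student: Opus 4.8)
The plan is to check directly that the given map is a grading-preserving bijection intertwining the differentials. First I would collect the elementary properties of the Euler field that make everything work. Since $L_e\mu=e(\mu)=2\mu$, the function $\mu$ is homogeneous of degree $2$, and $L_e$ is a derivation of the exterior algebra commuting with both $d$ and $\iota_e$ (the latter because $[e,e]=0$). Hence
\[
  L_e\Bigl(\tfrac{d\mu}{\mu}\Bigr)
  =-\tfrac{L_e\mu}{\mu^2}\,d\mu+\tfrac1\mu\,d(L_e\mu)
  =-\tfrac2\mu\,d\mu+\tfrac2\mu\,d\mu=0,
\]
while $\iota_e(d\mu/\mu)=(\iota_ed\mu)/\mu=2$ and $d(d\mu/\mu)=-\mu^{-2}d\mu\wedge d\mu=0$. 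I would also note at the outset that $\mathcal C_{\mu,\mathrm{basic}}(X)$ is indeed a subcomplex of $\mathrm{Ker}(L_e)$: the differential preserves $\mathrm{Ker}(L_e)$ because $[d,L_e]=0$, and if $\iota_e\alpha=L_e\alpha=0$ then $\iota_ed\alpha=L_e\alpha-d\iota_e\alpha=0$.

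Next I would establish the bijection. Well-definedness is immediate: for basic $\sigma,\tau$ one has $L_e\bigl(\tfrac{d\mu}{\mu}\wedge\sigma+\tau\bigr)=L_e\bigl(\tfrac{d\mu}{\mu}\bigr)\wedge\sigma+\tfrac{d\mu}{\mu}\wedge L_e\sigma+L_e\tau=0$, so the image lies in $\mathrm{Ker}(L_e)$, and the degree count works because $d\mu/\mu$ has degree $1$. Injectivity is precisely the uniqueness clause of Lemma \ref{l-002}. For surjectivity, take $\omega\in\mathrm{Ker}(L_e)$ and apply Lemma \ref{l-002} to write $\omega=\tfrac{d\mu}{\mu}\wedge\sigma+\tau$ with $\sigma=\tfrac12\iota_e\omega$ and $\iota_e\sigma=\iota_e\tau=0$; then $L_e\sigma=\tfrac12\iota_eL_e\omega=0$, so $\sigma$ is basic, and consequently $L_e\tau=L_e\omega-L_e\bigl(\tfrac{d\mu}{\mu}\wedge\sigma\bigr)=0$, so $\tau$ is basic as well. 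Hence $\omega$ lies in the image.

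Finally I would check compatibility with the differentials. For basic $\sigma,\tau$, using $d(d\mu/\mu)=0$,
\[
  d\Bigl(\frac{d\mu}{\mu}\wedge\sigma+\tau\Bigr)
  =-\frac{d\mu}{\mu}\wedge d\sigma+d\tau ;
\]
since $-d\sigma$ and $d\tau$ are again basic, the right-hand side is the image of $(-d\sigma)\oplus d\tau$, which under the convention $d_{C[-1]}=-d_C$ is exactly the differential of $\sigma\oplus\tau$ in $\mathcal C_{\mu,\mathrm{basic}}(X)[-1]\oplus\mathcal C_{\mu,\mathrm{basic}}(X)$. This completes the verification that the map is an isomorphism of complexes. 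I do not expect a genuine obstacle here; the only point requiring a little care is that both pieces of the Lemma \ref{l-002} decomposition of an $L_e$-invariant form are themselves $L_e$-invariant, which is precisely what the commutation $[L_e,\iota_e]=0$ together with $L_e(d\mu/\mu)=0$ provides.
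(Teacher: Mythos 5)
Your proof is correct and follows essentially the same route as the paper: injectivity from the uniqueness part of Lemma \ref{l-002}, surjectivity by showing both components of the decomposition of an $L_e$-invariant form are again $L_e$-invariant (the paper deduces this from uniqueness applied to $0=\frac{d\mu}{\mu}\wedge L_e\sigma+L_e\tau$, you compute $L_e\sigma=\tfrac12\iota_eL_e\omega=0$ directly via $[L_e,\iota_e]=0$, a negligible variation), and the same differential computation using $d(d\mu/\mu)=0$. No gaps.
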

\begin{proof}
  If $\omega=d\mu/\mu\wedge\sigma+\tau$ with basic forms
  $\sigma,\tau$, then
  \[
    d\omega=d\mu/\mu\wedge (-d\sigma)+d\tau,
  \]
  thus the map is compatible with differentials. By the uniqueness
  part of Lemma \ref{l-002} it is injective. To prove surjectivity,
  suppose $\omega\in\mathrm{Ker}(L_e)$.  Then by Lemma \ref{l-002},
  $\omega=d\mu/\mu\wedge\sigma+\tau$ with
  $\sigma, \tau\in\mathrm{Ker}(\iota_e)$ and applying $L_e$ we see
  that
  \[
    0=\frac{d\mu}{\mu}\wedge L_e\sigma+L_e\tau,
  \]
  and, since $\iota_e$ commutes with $L_e$, also
  $L_e\sigma,L_e\tau\in\mathrm{Ker}(\iota_e)$.  Again by the
  uniqueness part of Lemma \ref{l-002}, it follows that
  $L_e\tau,L_e\sigma$ both vanish.
\end{proof}

The rotation group $\mathrm{SO}(m)$ acts on
$\mathcal C_{\mu,\mathrm{basic}}(X)$ and by averaging we can replace
this complex by the quasi-isomorphic subcomplex of
invariants. Combining Lemma \ref{l-003} with Lemma \ref{l-001}, we
get:
\begin{cor}\label{c-01}
  The map
  \[
    \mathcal
    C_{\mu,\mathrm{basic}}(X)^{\mathrm{SO}(m)}[-1]\oplus\mathcal
    C_{\mu,\mathrm{basic}}(X)^{\mathrm{SO}(m)}\to\mathcal C_\mu(X)
  \]
  sending $\sigma\oplus\tau$ to $d\mu/\mu\wedge\sigma+\tau$ is a
  quasi-isomorphism.
\end{cor}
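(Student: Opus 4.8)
\emph{Proof proposal.}
The statement should follow by combining Lemmas~\ref{l-001} and~\ref{l-003} with a Haar‑averaging argument, the only substantive point being that the rotation group acts trivially on cohomology. First I would record that all the objects and maps constructed so far are $\mathrm{SO}(m)$‑equivariant: the group $\mathrm{SO}(m)$ acts on $X\smallsetminus Y$ by rotating the normal coordinates $x_1,\dots,x_m$, it fixes $\mu$ and $d\mu$, and it commutes with the Euler field $e=\sum_i x_i\partial_{x_i}$ and hence with $L_e$ and $\iota_e$. Consequently it acts on $\mathcal A_\mu(X)$, on $\mathcal B(X)$, and hence on $\mathcal C_\mu(X)$, preserving the subcomplexes $\mathrm{Ker}(L_e)$ and $\mathcal C_{\mu,\mathrm{basic}}(X)$, and both the inclusion of Lemma~\ref{l-001} and the isomorphism $\sigma\oplus\tau\mapsto \frac{d\mu}{\mu}\wedge\sigma+\tau$ of Lemma~\ref{l-003} are $\mathrm{SO}(m)$‑equivariant.

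Composing the quasi‑isomorphism $\mathrm{Ker}(L_e)\hookrightarrow\mathcal C_\mu(X)$ of Lemma~\ref{l-001} with the isomorphism of complexes of Lemma~\ref{l-003} then yields an $\mathrm{SO}(m)$‑equivariant quasi‑isomorphism
\[
  \Psi\colon \mathcal C_{\mu,\mathrm{basic}}(X)[-1]\oplus\mathcal C_{\mu,\mathrm{basic}}(X)\longrightarrow\mathcal C_\mu(X),\qquad \sigma\oplus\tau\longmapsto\frac{d\mu}{\mu}\wedge\sigma+\tau .
\]
The map appearing in the corollary is precisely $\Psi\circ j$, where $j$ is the inclusion of the subcomplex of $\mathrm{SO}(m)$‑invariants of the source. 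Since a composition of quasi‑isomorphisms is one, it therefore suffices to prove that the inclusion $\mathcal C_{\mu,\mathrm{basic}}(X)^{\mathrm{SO}(m)}\hookrightarrow\mathcal C_{\mu,\mathrm{basic}}(X)$ is a quasi‑isomorphism.

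For this I would invoke averaging. The action of $\mathrm{SO}(m)$ on the ``formal power series'' complex $\mathcal C_{\mu,\mathrm{basic}}(X)$ is locally finite — for fixed $k$ and $j$ the span of the $x^\alpha\mu^{-k}dx^{I}$ with $|\alpha|+j=2k$, $|I|=j$, is a finite‑dimensional $\mathrm{SO}(m)$‑submodule, and $d$ is equivariant — so the Haar average $\pi=\int_{\mathrm{SO}(m)}g\cdot(\,\cdot\,)\,dg$ is a well‑defined idempotent chain map with image $\mathcal C_{\mu,\mathrm{basic}}(X)^{\mathrm{SO}(m)}$, producing a decomposition of complexes $\mathcal C_{\mu,\mathrm{basic}}(X)=\mathcal C_{\mu,\mathrm{basic}}(X)^{\mathrm{SO}(m)}\oplus\ker\pi$. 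The inclusion of invariants is thus a quasi‑isomorphism exactly when $H(\ker\pi)=0$, i.e. when $\mathrm{SO}(m)$ acts trivially on $H(\mathcal C_{\mu,\mathrm{basic}}(X))$. To get this, note that for $V\in\mathfrak{so}(m)$ Cartan's formula gives $L_V=d\iota_V+\iota_V d$, and that $\iota_V$ preserves $\mathcal C_{\mu,\mathrm{basic}}(X)$: the rotation generator $V$ is a linear vector field with smooth coefficients, so $\iota_V$ preserves the polar order and the formal‑power‑series condition, it anticommutes with $\iota_e$ and commutes with $L_e$ (because $[e,V]=0$), so it preserves $\mathrm{Ker}(L_e)\cap\mathrm{Ker}(\iota_e)$. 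Hence the infinitesimal $\mathfrak{so}(m)$‑action on $H(\mathcal C_{\mu,\mathrm{basic}}(X))$ vanishes, and since $\mathrm{SO}(m)$ is connected and the action is locally finite this forces the $\mathrm{SO}(m)$‑action on cohomology to be trivial, completing the argument.

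The main obstacle is exactly this last point: showing that rotations act trivially on the cohomology of $\mathcal C_{\mu,\mathrm{basic}}(X)$. Everything else — the equivariance of Lemmas~\ref{l-001} and~\ref{l-003}, the well‑definedness of the averaging projector on the power‑series complex, and the fact that $\iota_V$ preserves the basic subcomplex — is routine, and works precisely because $\mu$, $e$, $\iota_e$ and $L_e$ are all rotation‑invariant.
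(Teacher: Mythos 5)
Your proposal is correct and follows essentially the same route as the paper: combine Lemma~\ref{l-001} and Lemma~\ref{l-003} and then pass to $\mathrm{SO}(m)$-invariants by averaging, which is exactly what the text does. The only difference is that you spell out the averaging step the paper leaves implicit (equivariance of the two lemmas, local finiteness of the action, and triviality of the action on cohomology via the Cartan homotopy $L_V=d\iota_V+\iota_V d$ with $\iota_V$ preserving the basic subcomplex), and those details are accurate.
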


\begin{lemma}\label{l-004} Let $\beta$ be the closed differential form
  defined in Prop.~\ref{p-01}.  As a module over $\mathcal A(Y)$,
  \[
    \mathcal C_{\mu,\mathrm{basic}}(X)^{\mathrm{SO}(m)}=\begin{cases}
      \mathcal A(Y)1,&\text{if $m$ is odd,}\\
      \mathcal A(Y)1\oplus\mathcal A(Y)\beta,&\text{if $m$ is even.}
    \end{cases}
  \]
\end{lemma}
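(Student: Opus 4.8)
The plan is to strip off the factor $\mathcal A(Y)$ and reduce the statement to a purely normal–directional invariant–theory computation. Write
\[
  \mathcal C_\mu(X)=\mathcal A(Y)\,\hat\otimes\,\mathcal N,\qquad
  \mathcal N=\mathbb R[[x_1,\dots,x_m]][\tfrac1\mu]\otimes\bigwedge(dx_1,\dots,dx_m),
\]
where $\mathrm{SO}(m)$ acts on $\mathcal N$ through $x_i,dx_i$ and trivially on $\mathcal A(Y)$, and where the Euler field $e=\sum_{i=1}^m x_i\partial_{x_i}$, hence $\iota_e$ and $L_e$, also act only on $\mathcal N$. Since $\mathrm{SO}(m)$ is compact, Reynolds averaging shows that taking invariants commutes with $\mathcal A(Y)\,\hat\otimes\,(-)$, so
\[
  \mathcal C_{\mu,\mathrm{basic}}(X)^{\mathrm{SO}(m)}
  =\mathcal A(Y)\,\hat\otimes\,\bigl(\mathcal N^{\mathrm{SO}(m)}\cap\ker L_e\cap\ker\iota_e\bigr).
\]
It therefore suffices to prove that $\mathcal N^{\mathrm{SO}(m)}\cap\ker L_e\cap\ker\iota_e$ is spanned by $1$ when $m$ is odd and by $1$ and $\beta$ when $m$ is even; as this space is finite dimensional the completed tensor product is the ordinary one, and since the two generators lie in distinct exterior degrees $0$ and $m-1$ one obtains exactly the free $\mathcal A(Y)$-module claimed.

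To compute $\mathcal N^{\mathrm{SO}(m)}$ I would first note that $\mu$ is invariant and not a zero divisor, so inverting it and completing in the $x_i$ creates no invariants beyond the completion and localization of the polynomial ones; it is thus enough to describe the $\mathrm{SO}(m)$-invariants of $\mathrm{Sym}((\mathbb R^m)^*)\otimes\bigwedge((\mathbb R^m)^*)$, the algebra of polynomial functions on the supervariety with even coordinates $x_1,\dots,x_m$ and odd coordinates $dx_1,\dots,dx_m$. By the first fundamental theorem of invariant theory for the (special) orthogonal group, applied with one even vector $x$ and one odd vector $dx$, this algebra is generated by the bilinear invariants $\langle x,x\rangle=\mu$, $\langle x,dx\rangle=\eta:=\sum_i x_i\,dx_i=\tfrac12\,d\mu$ and $\langle dx,dx\rangle=\sum_i dx_i\wedge dx_i=0$, together with the two $\varepsilon$-contractions that survive because only one copy of the even vector is available, namely $\nu:=\iota_e(dx_1\wedge\cdots\wedge dx_m)$ and $\mathrm{vol}:=dx_1\wedge\cdots\wedge dx_m$. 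Hence every element of $\mathcal N^{\mathrm{SO}(m)}$ is a formal Laurent series in $\mu$ with coefficients in the $\mathbb R$-span of $1,\eta,\nu,\mathrm{vol}$.

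Finally I would impose the two basic conditions. Since $L_e$ acts as the total degree operator with $\deg x_i=\deg dx_i=1$, one has $\deg\mu=\deg\eta=2$ and $\deg\nu=\deg\mathrm{vol}=m$; so the $L_e$-invariant (i.e. degree-zero) elements of $\mathcal N^{\mathrm{SO}(m)}$ are spanned by $1$, by $\mu^{-1}\eta$, and, only if $m$ is even, by $\mu^{-m/2}\nu$ and $\mu^{-m/2}\mathrm{vol}$; here $\mu^{-m/2}\nu=\beta$ in the notation of Prop.~\ref{p-01} (it is precisely the basic representative of the rotation-invariant volume form on $\mathbb{RP}^{m-1}$ described there). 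Among these, $\iota_e1=0$ and $\iota_e\nu=\iota_e^2(dx_1\wedge\cdots\wedge dx_m)=0$, so $1$ and $\beta$ are basic, whereas $\iota_e(\mu^{-1}\eta)=\mu^{-1}\iota_e\eta=1\neq0$ and $\iota_e(\mu^{-m/2}\mathrm{vol})=\mu^{-m/2}\nu\neq0$, so these are not. This leaves exactly $1$ (and, for $m$ even, $\beta$), which is the assertion of the lemma.

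The only genuinely delicate point is the invariant-theoretic input in the second paragraph: one must check that passing to formal power series in the $x_i$ and localizing at $\mu$ does not enlarge the ring of $\mathrm{SO}(m)$-invariants beyond the completion and localization of the classical polynomial answer. This is routine, since the $\mathrm{SO}(m)$-action preserves the grading by total degree, each homogeneous component is a filtered union of finite-dimensional subrepresentations (filter by the order of pole along $\mu$), and $\mathrm{SO}(m)$ is compact, so averaging is available and commutes with both the completion and inverting $\mu$.
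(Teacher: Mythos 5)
Your proof is correct, but it reaches the classification by a different route than the paper. The paper imposes basicness first: homogeneous basic forms are identified with differential forms on $S^{m-1}\times Y$ (quotient by the dilation action), and one then quotes the fact that the only $\mathrm{SO}(m)$-invariant forms on the sphere are the constants and the multiples of the volume form, excluding the would-be class in odd codimension because its extension to a basic form involves the half-integer power $\mu^{m/2}$. You instead impose invariance first, staying on the normal slice: after splitting off $\mathcal A(Y)$ by Reynolds averaging, you invoke the first fundamental theorem for $\mathrm{SO}(m)$ with one even vector $x$ and one odd vector $dx$ to write every invariant as a Laurent series in $\mu$ with coefficients in the span of $1$, $\eta=\tfrac12 d\mu$, $\nu=\iota_e(dx_1\wedge\cdots\wedge dx_m)$ and $dx_1\wedge\cdots\wedge dx_m$, and then cut down by $\ker L_e$ (degree $0$, which encodes the odd/even dichotomy through the parity of the exponent of $\mu$) and by $\ker\iota_e$. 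The two arguments rest on equivalent classical inputs --- by polarization, the invariant-theory statement you use is the same as the classification of invariant forms on the sphere --- so neither is circular; yours is more algebraic and has the merit of making explicit the point the paper glosses over, namely that completing in the normal variables and inverting $\mu$ does not create new invariants, while the paper's version is shorter and more geometric. Two small points: when discarding $\mu^{-1}\eta$ and $\mu^{-m/2}\,dx_1\wedge\cdots\wedge dx_m$ you should argue on a general linear combination, but this is immediate since their $\iota_e$-images $1$ and $\beta$ lie in different exterior degrees (and one should also record closure of the stated module under products, e.g.\ $\eta\wedge\nu=\mu\,dx_1\wedge\cdots\wedge dx_m$); and your $\beta=\mu^{-m/2}\iota_e(dx_1\wedge\cdots\wedge dx_m)$ is indeed the form intended in Prop.~\ref{p-01}, whose displayed formula omits the factor $x_i$ in the numerator.
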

\begin{proof} The complex $\mathcal C_{\mu,\mathrm{basic}}(X)$
  consists of basic homogeneous differential forms in
  $\mathcal C_\mu(X)$.  They are thus homogeneous rational
  differential forms in the normal variables $x_1,\dots,x_m$ with
  coefficients in $\mathcal A(Y)$ and with powers of $\mu$ as
  denominators. They can be viewed as differential forms on
  $(\mathbb R^m\smallsetminus\{0\})\times Y$ that are basic for the
  action of the group $\mathbb R_{>0}$ of dilations. They thus define
  differential forms on the quotient $S^{m-1}\times Y$. The only
  $\mathrm{SO}(m)$-invariant differential forms on the sphere are the
  constants and the multiples of a volume form.

  If $m$ is even, the form $\beta$ restricts to a volume form on
  $S^{m-1}$ and belongs to $\mathcal C_\mu(X)$. If $m$ is odd there is
  still a unique rotation invariant volume form on $S^{m-1}$ up to
  normalization, its extension to a basic invariant form is given by
  the same formula as $\beta$, which is however not in
  $\mathcal C_\mu(X)$ due to the presence of the square root of $\mu$.
\end{proof}

\begin{lemma}\label{l-05}
  $H(\mathcal B(X))=0$.
\end{lemma}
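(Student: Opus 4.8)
The plan is to exhibit an explicit contracting homotopy for $\mathcal B(X)$, taken from the standard proof of the Poincar\'e lemma on a ball. Note first that a differential form vanishing to infinite order along $Y$ is automatically smooth on all of $X$, so $\mathcal B(X)$ is simply the complex of smooth forms on the ball $D^n$ whose full Taylor expansion along $Y$ vanishes identically. Since $X$ is centred at the origin and $0\in Y$, the radial contraction $H\colon X\times[0,1]\to X$, $H(x,t)=tx$, is well defined (by convexity of the ball) and maps $Y\times[0,1]$ into $Y$. Let
\[
  h\omega=\int_0^1\iota_{\partial_t}H^*\omega\,dt
\]
be the associated homotopy operator; it satisfies $dh\omega+h\,d\omega=\omega$ for every form $\omega$ of positive degree and $h\,df=f-f(0)$ for functions $f$. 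The whole statement will follow once we check that $h$ maps $\mathcal B(X)$ into itself.

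So the main step is to verify that $h$ sends $\mathcal B^p(X)$ to $\mathcal B^{p-1}(X)$. Writing $x=(x',x'')$ with $x'=(x_1,\dots,x_m)$, so that $Y=\{x'=0\}$, take $\omega=\sum_I g_I\,dx^I$ with every coefficient $g_I$ vanishing to infinite order on $Y$. Pulling back by $H$ turns $dx_i$ into $t\,dx_i+x_i\,dt$, so the coefficients of $\iota_{\partial_t}H^*\omega$ are finite sums of terms $g_I(tx)$ times a monomial in $x$ and $t$; differentiating any number of times in $(x,t)$ yields sums of the same shape with $g_I$ replaced by one of its partial derivatives $\partial^\gamma g_I$. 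Each $\partial^\gamma g_I$ again vanishes to infinite order on $Y$, and its composition with $H$, restricted to $x'=0$, equals $(\partial^\gamma g_I)(0,tx'')$, which is zero because $(0,tx'')\in Y$. Hence every partial derivative of $\iota_{\partial_t}H^*\omega$ vanishes on $Y\times[0,1]$, and differentiating under the integral sign shows that every partial derivative of $h\omega$ vanishes on $Y$; thus $h\omega\in\mathcal B^{p-1}(X)$, as desired.

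Granting this, I would conclude as follows. If $\omega\in\mathcal B^p(X)$ is closed with $p\geq1$, then $\omega=dh\omega$ with $h\omega\in\mathcal B^{p-1}(X)$, so $H^p(\mathcal B(X))=0$; and a closed $0$-form is a constant, which must be $0$ since it vanishes to infinite order on $Y$ and $0\in Y$, so $H^0(\mathcal B(X))=0$ as well. The one delicate point is the compatibility of the radial contraction $H$ with the subspace $Y$ used in the second paragraph — it is precisely here that one uses that $Y$ is a linear subspace through the centre of the ball; everything else is the classical Poincar\'e-lemma homotopy.
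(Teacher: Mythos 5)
Your proof is correct and follows essentially the same route as the paper: the radial dilation homotopy $x\mapsto tx$ of the Poincar\'e lemma, together with the observation that it preserves the subcomplex of forms vanishing to infinite order on $Y$ (because $Y$ is a linear subspace through the centre of the ball). You merely spell out the preservation step, which the paper dismisses as clear, and treat degree $0$ separately instead of using the uniform identity $dh+hd=\mathrm{id}-\varphi_0^*$; both are immaterial differences.
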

\begin{proof}
  We imitate a standard proof of the Poincar\'e lemma. The dilation
  flow $\varphi_t(x)=tx$ for $t\in[0,1]$ maps balls centered at the
  origin to themselves. Let
  $h\colon \mathcal B(X)\to \mathcal B^{\bullet-1}(X)$ be the linear
  map
  \[
    h\,\omega=\int_0^1\varphi^*_t\iota_e\omega \, \frac{dt}{t},
  \]
  (it is well-defined since $\varphi^*_t\alpha$ vanishes for $t=0$ if
  $\alpha$ vanishes at the origin).  It is clear that $h$ maps forms
  vanishing to infinite order to forms vanishing to infinite
  order. Moreover, as in the proof of the Poincar\'e lemma, we see
  that $d\circ h+h\circ d=id-\varphi_0^*$.  But on forms vanishing at
  0, $\varphi^*_0$ vanishes. Thus the identity is homotopic to the
  zero map and the cohomology vanishes in all degrees.
\end{proof}

The long exact sequence associated with
\[
  0\to \mathcal B(X)\to\mathcal A_\mu(X)\to\mathcal A_\mu(X)/\mathcal
  B(X)\to0
\]
implies:
\begin{lemma}\label{l-06}
  $\mathcal H(A_\mu(X))\cong \mathcal H(\mathcal C_\mu(X)).$
\end{lemma}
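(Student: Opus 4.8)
The plan is to apply the long exact cohomology sequence to the short exact sequence
\[
  0\to \mathcal B(X)\to\mathcal A_\mu(X)\to\mathcal A_\mu(X)/\mathcal B(X)\to0
\]
of complexes of sheaves, and then invoke the vanishing of $H(\mathcal B(X))$ established in Lemma \ref{l-05}. First I would recall that the sequence above is a genuine short exact sequence of cochain complexes: $\mathcal B(X)$ is by construction the subcomplex of $\mathcal A_\mu(X)$ consisting of forms vanishing to infinite order on $Y$ (this is preserved by $d$, since differentiating a form flat on $Y$ gives a form flat on $Y$), the inclusion is injective, and the quotient is $\mathcal C_\mu(X)=\mathcal A_\mu(X)/\mathcal B(X)$ by the very definition used earlier via Borel's lemma. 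Thus the sequence is exact in each degree and compatible with differentials.

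Next I would write down the associated long exact sequence in cohomology,
\[
  \cdots\to H^{k-1}(\mathcal C_\mu(X))\to H^k(\mathcal B(X))\to H^k(\mathcal A_\mu(X))\to H^k(\mathcal C_\mu(X))\to H^{k+1}(\mathcal B(X))\to\cdots,
\]
and then use Lemma \ref{l-05}, which asserts $H^k(\mathcal B(X))=0$ for all $k$. Substituting these zeros into the long exact sequence, the connecting maps force the map $H^k(\mathcal A_\mu(X))\to H^k(\mathcal C_\mu(X))$ to be simultaneously injective (its kernel is the image of $H^k(\mathcal B(X))=0$) and surjective (its cokernel injects into $H^{k+1}(\mathcal B(X))=0$), hence an isomorphism in every degree $k$. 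This gives the claimed isomorphism $\mathcal H(\mathcal A_\mu(X))\cong\mathcal H(\mathcal C_\mu(X))$, induced by the quotient map.

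This argument is essentially formal, so there is no serious obstacle; the only point requiring a little care is the bookkeeping that the quotient complex really is $\mathcal C_\mu(X)$ as described before the statement (an identification already supplied by Borel's lemma in the text), and that the quotient map of complexes is the one inducing the stated isomorphism. Everything else is the standard five-lemma-type consequence of a vanishing outer term in a long exact sequence.
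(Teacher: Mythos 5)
Your argument is correct and is exactly the paper's proof: the short exact sequence $0\to\mathcal B(X)\to\mathcal A_\mu(X)\to\mathcal C_\mu(X)\to0$, the associated long exact sequence, and the vanishing $H(\mathcal B(X))=0$ from Lemma \ref{l-05} force the quotient map to induce an isomorphism in every degree. No gaps; nothing further is needed.
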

\subsection{Filtration and the tame subcomplex}
Let $X$ be as above. The complex $\mathcal A_\mu(X)$ has a filtration
\[
  \cdots\subset F_p\mathcal A_\mu(X)\subset F_{p+1}\mathcal
  A_\mu(X)\subset\cdots\subset A_\mu(X) =\cup_{p\in\mathbb
    Z}F_pA_\mu(X)
\]
by subspaces
\[
  F_p\mathcal A_\mu(X) = \{ \omega\in\mathcal A_\mu(X) :
  \mu^{p}\omega, \mu^{p-1}d\mu\wedge\omega\in\mathcal A(X) \}.
\]
Since $\mathcal B(X)\subset \cap_{p\in\mathbb Z}F_p\mathcal A_\mu(X)$,
the filtration induces a filtration $F_p\mathcal C_\mu(X)$ of
$\mathcal C_\mu(X)=\mathcal A_\mu(X)/\mathcal B(X)$.
\begin{lemma}\label{l-007}
  Each $F_p\mathcal A_\mu(X)$ is a subcomplex preserved by $\iota_e$
  and invariant under $\mathrm{SO}(m)$. The same holds for the
  quotient complexes $F_p\mathcal C_\mu(X)$.
\end{lemma}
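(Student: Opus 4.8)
The plan is to verify the three closure properties ($d$-invariance, $\iota_e$-invariance, $\mathrm{SO}(m)$-invariance) directly from the definition of $F_p\mathcal A_\mu(X)$, and then to observe that each property descends to the quotient $F_p\mathcal C_\mu(X)$ automatically because $\mathcal B(X)$ is contained in every $F_p$.

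\textbf{Subcomplex.} The $d$-invariance is exactly the content of Lemma \ref{l-07}: if $\omega\in F_p\mathcal A_\mu(X)$, the identities
\[
  \mu^pd\omega=d(\mu^p\omega)-p\mu^{p-1}d\mu\wedge\omega,\qquad
  \mu^{p-1}d\mu\wedge d\omega=-d(\mu^{p-1}d\mu\wedge\omega)
\]
exhibit both $\mu^pd\omega$ and $\mu^{p-1}d\mu\wedge d\omega$ as de Rham differentials of smooth forms on $X$, hence smooth. So $d\omega\in F_p\mathcal A_\mu(X)$. (I will simply cite Lemma \ref{l-07} here rather than repeat the computation.)

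\textbf{Invariance under $\iota_e$.} Suppose $\omega\in F_p\mathcal A_\mu(X)$, so $\mu^p\omega$ and $\mu^{p-1}d\mu\wedge\omega$ are smooth. Since $\mu=x_1^2+\cdots+x_m^2$ and $e=\sum_{i=1}^m x_i\partial_{x_i}$ have smooth coefficients, $\iota_e$ carries smooth forms to smooth forms; hence $\mu^p\,\iota_e\omega=\iota_e(\mu^p\omega)$ is smooth. For the second condition, use $\iota_e(d\mu\wedge\omega)=(\iota_ed\mu)\,\omega-d\mu\wedge\iota_e\omega=2\mu\,\omega-d\mu\wedge\iota_e\omega$, so that
\[
  \mu^{p-1}d\mu\wedge\iota_e\omega=2\mu^p\omega-\mu^{p-1}\iota_e(d\mu\wedge\omega)
  =2\mu^p\omega-\iota_e\bigl(\mu^{p-1}d\mu\wedge\omega\bigr),
\]
which is a combination of smooth forms. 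Thus $\iota_e\omega\in F_p\mathcal A_\mu(X)$. The $\mathrm{SO}(m)$-invariance is immediate: the rotation group fixes $\mu$ and acts on $X$ by diffeomorphisms preserving smoothness, so $g^*$ preserves both defining conditions of $F_p\mathcal A_\mu(X)$.

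\textbf{Descent to the quotient.} Since $\mathcal B(X)\subset F_p\mathcal A_\mu(X)$ for every $p\in\mathbb Z$ (a form vanishing to infinite order on $Y$ has all the stated products smooth), and since $\mathcal B(X)$ is itself a subcomplex stable under $\iota_e$ and $\mathrm{SO}(m)$, the image $F_p\mathcal C_\mu(X)$ of $F_p\mathcal A_\mu(X)$ in $\mathcal C_\mu(X)=\mathcal A_\mu(X)/\mathcal B(X)$ inherits all three properties. No step here is a genuine obstacle; the only point requiring a line of care is the $\iota_e$-invariance of the second defining condition, which is handled by the displayed identity above.
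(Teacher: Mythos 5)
Your proposal is correct and follows essentially the same route as the paper: citing Lemma \ref{l-07} for closure under $d$, using the identities $\mu^p\iota_e\omega=\iota_e(\mu^p\omega)$ and $\mu^{p-1}d\mu\wedge\iota_e\omega=2\mu^p\omega-\iota_e(\mu^{p-1}d\mu\wedge\omega)$ for closure under $\iota_e$, invoking rotation invariance of $\mu$ for the $\mathrm{SO}(m)$-action, and passing to the quotient via the stability of $\mathcal B(X)\subset\cap_pF_p\mathcal A_\mu(X)$ under all three operations. No gaps; your only additions are slightly more explicit justifications of points the paper treats as immediate.
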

\begin{proof}
  The fact that $F_p\mathcal A_\mu(X)$ is a subcomplex is a special
  case of Lemma \ref{l-07}. As for the Euler vector field, we have
  \begin{align*}
    \mu^p\iota_e\omega
    &=\iota_e(\mu^p\omega),
    \\
    \mu^{p-1}d\mu\wedge \iota_e\omega
    &=-\iota_e(\mu^{p-1}d\mu\wedge\omega)+2\mu^p\omega.
  \end{align*}
  The right-hand sides are regular. Thus $F_p\mathcal A_\mu(X)$ is
  preserved by $\iota_e$. Since $\mu$ is rotation invariant, the
  action of $\mathrm{SO}(m)$ preserves the subcomplexes. Clearly
  $\mathcal B(X)$ is an $\mathrm{SO}(m)$-invariant subcomplex
  preserved by $\iota_e$, so the same holds for the quotient.
\end{proof}
Thus $F$ induces a filtration on $\mathrm{Ker}(L_e)$ and on
$\mathcal
C_{\mu,\mathrm{basic}}(X)=\mathrm{Ker}(L_e)\cap\mathrm{Ker}(\iota_e)$.
\begin{lemma}\label{l-008}
  The isomorphism of Lemma \ref{l-003} restricts to an isomorphism
  \[
    F_p\mathcal C_{\mu,\mathrm{basic}}(X)[-1]\oplus F_p\mathcal
    C_{\mu,\mathrm{basic}}(X)\to F_p\mathrm{Ker}(L_e)
  \]
  for all $p\in\mathbb Z$.
\end{lemma}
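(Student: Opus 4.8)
The plan is to show that the isomorphism of Lemma \ref{l-003}, call it $\Phi\colon\sigma\oplus\tau\mapsto\frac{d\mu}\mu\wedge\sigma+\tau$, carries the subspace $F_p\mathcal C_{\mu,\mathrm{basic}}(X)[-1]\oplus F_p\mathcal C_{\mu,\mathrm{basic}}(X)$ bijectively onto $F_p\mathrm{Ker}(L_e)$ for every $p\in\mathbb Z$. Since $\Phi$ is already known to be an isomorphism of complexes, it is enough to verify the two inclusions $\Phi\bigl(F_p\oplus F_p\bigr)\subseteq F_p\mathrm{Ker}(L_e)$ and $\Phi^{-1}\bigl(F_p\mathrm{Ker}(L_e)\bigr)\subseteq F_p\oplus F_p$; compatibility with differentials and the shift is inherited from Lemma \ref{l-003}.

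For the first inclusion I would first record that $F_p\mathcal C_\mu(X)$ is stable under left multiplication by $d\mu/\mu$: if $\mu^p\sigma$ and $\mu^{p-1}d\mu\wedge\sigma$ are smooth, then $\mu^p\bigl(\frac{d\mu}\mu\wedge\sigma\bigr)=\mu^{p-1}d\mu\wedge\sigma$ is smooth and $\mu^{p-1}d\mu\wedge\bigl(\frac{d\mu}\mu\wedge\sigma\bigr)=0$, so $\frac{d\mu}\mu\wedge\sigma\in F_p$. Hence for $\sigma,\tau\in F_p\mathcal C_{\mu,\mathrm{basic}}(X)$ the form $\frac{d\mu}\mu\wedge\sigma+\tau$ lies in $F_p\mathcal C_\mu(X)$, and since it also lies in $\mathrm{Ker}(L_e)$ (as in Lemma \ref{l-003}), it lies in $F_p\mathrm{Ker}(L_e)$.

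For the reverse inclusion I would use the explicit formula for $\Phi^{-1}$ from the proof of Lemma \ref{l-002}: if $\omega\in\mathrm{Ker}(L_e)$ then $\Phi^{-1}(\omega)=\sigma\oplus\tau$ with $\sigma=\tfrac12\iota_e\omega$ and $\tau=\omega-\frac{d\mu}\mu\wedge\sigma$. Now if $\omega\in F_p\mathrm{Ker}(L_e)$, then by Lemma \ref{l-007} the subcomplex $F_p$ is preserved by $\iota_e$, so $\sigma=\tfrac12\iota_e\omega\in F_p\mathcal C_{\mu,\mathrm{basic}}(X)$; then $\frac{d\mu}\mu\wedge\sigma\in F_p$ by the computation just made, hence $\tau=\omega-\frac{d\mu}\mu\wedge\sigma\in F_p\mathcal C_{\mu,\mathrm{basic}}(X)$ as well. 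The only input beyond one-line computations is the $\iota_e$-stability of $F_p$, which is exactly Lemma \ref{l-007}, so there is no real obstacle here: the statement is just a matter of tracking the filtration through the explicit formulas for $\Phi$ and $\Phi^{-1}$.
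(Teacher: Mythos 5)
Your proposal is correct and follows essentially the same route as the paper: both reduce the statement to tracking the filtration through the decomposition $\omega=\frac{d\mu}{\mu}\wedge\sigma+\tau$ using contraction with the Euler vector field and the identity $\iota_e\,d\mu=2\mu$. The only cosmetic difference is that you obtain $\sigma=\tfrac12\iota_e\omega\in F_p$ by quoting the $\iota_e$-stability of Lemma \ref{l-007} and then deduce $\tau\in F_p$, whereas the paper redoes the same $\iota_e$-manipulations inline, handling $\tau$ first and then $\sigma$.
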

\begin{proof} It is easy to check that the filtration is preserved.
  By Lemma \ref{l-003} the map is injective. It remains to prove the
  surjectivity. Suppose $\omega\in F_p\mathrm{Ker}(L_e)$. Write
  $\omega=d\mu/\mu\wedge \sigma+\tau$ with
  $\sigma,\tau\in\mathcal C_{\mu,\mathrm{basic}}(X)$. Since
  $\mu^{p-1}d\mu\wedge\omega$ is regular, we deduce that
  $\mu^{p-1}d\mu\wedge\tau$ is regular on $X$. Applying $\iota_e$ and
  using that $\iota_e d\mu/\mu=2$ we see that $\mu^p\tau$ is also
  regular. Thus $\tau\in F_p\mathcal C_{\mu,\mathrm{basic}}(X)$. It
  follows that $\mu^{p-1}d\mu\wedge\sigma=\mu^{p}(\omega-\tau)$ is
  regular. Again applying $\iota_e$ we see that $\mu^p\sigma$ is
  regular, so that also $\sigma$ belongs to
  $F_p\mathcal C_{\mu,\mathrm{basic}}(X)$.
\end{proof}
\begin{cor}\label{c-02}
  Let the codimension $m$ of $Y$ in $X$ be even. Then the inclusion
  \[
    F_{\frac m2}\mathcal A_\mu(X)\hookrightarrow \mathcal A_\mu(X)
  \]
  is a quasi-isomorphism.
\end{cor}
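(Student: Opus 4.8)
The plan is to upgrade the entire computation of $\mathcal H(\mathcal A_\mu(X))$ from Section~A.1 to a statement about the filtered complexes $F_\bullet$, and then to observe that at the bottom of the resulting chain of quasi-isomorphisms the filtration is already exhausted at level $p=m/2$. Concretely, I would first record that each reduction step of Section~A.1 has an $F_p$-version sitting in a commutative square with the unfiltered one and with the inclusions $F_p\subseteq F_{p+1}\subseteq\cdots$. Step (1): since $\mathcal B(X)\subseteq F_p\mathcal A_\mu(X)$ for every $p$ and $H(\mathcal B(X))=0$ (Lemma~\ref{l-05}), the exact sequence $0\to\mathcal B(X)\to F_p\mathcal A_\mu(X)\to F_p\mathcal C_\mu(X)\to0$ shows $F_p\mathcal A_\mu(X)\to F_p\mathcal C_\mu(X)$ is a quasi-isomorphism, compatibly in $p$ and with the quasi-isomorphism of Lemma~\ref{l-06}. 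Step (2): by Lemma~\ref{l-007} each $F_p\mathcal C_\mu(X)$ is stable under $\iota_e$, hence under $L_e=d\iota_e+\iota_e d$, and it is homogeneous for the Euler weight grading (its defining conditions only involve multiplication by the homogeneous elements $\mu^p$ and $\mu^{p-1}d\mu$); therefore the splitting $\mathcal C_\mu(X)=\mathrm{Ker}(L_e)\oplus\mathrm{Im}(L_e)$ of Lemma~\ref{l-001} restricts to $F_p\mathcal C_\mu(X)$ with the summand $F_p\mathcal C_\mu(X)\cap\mathrm{Im}(L_e)$ still acyclic, giving a quasi-isomorphism $F_p\mathrm{Ker}(L_e)\hookrightarrow F_p\mathcal C_\mu(X)$. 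Step (3): by Lemma~\ref{l-008} the isomorphism of Lemma~\ref{l-003} restricts to $F_p\mathcal C_{\mu,\mathrm{basic}}(X)[-1]\oplus F_p\mathcal C_{\mu,\mathrm{basic}}(X)\to F_p\mathrm{Ker}(L_e)$, and, $F_p$ being $\mathrm{SO}(m)$-invariant (Lemma~\ref{l-007}), averaging gives a quasi-isomorphism $F_p\mathcal C_{\mu,\mathrm{basic}}(X)^{\mathrm{SO}(m)}\hookrightarrow F_p\mathcal C_{\mu,\mathrm{basic}}(X)$, exactly as in Corollary~\ref{c-01}.

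Assembling these squares into a ladder, the inclusion $F_{m/2}\mathcal A_\mu(X)\hookrightarrow\mathcal A_\mu(X)$ becomes, up to quasi-isomorphism, the inclusion $F_{m/2}\mathcal C_{\mu,\mathrm{basic}}(X)^{\mathrm{SO}(m)}\hookrightarrow\mathcal C_{\mu,\mathrm{basic}}(X)^{\mathrm{SO}(m)}$ (appearing in each of the two summands of the decomposition of Lemma~\ref{l-003}). Since in a commuting square with both horizontal maps quasi-isomorphisms one vertical map is a quasi-isomorphism iff the other is, iterating along the ladder reduces the corollary to the equality $F_{m/2}\mathcal C_{\mu,\mathrm{basic}}(X)^{\mathrm{SO}(m)}=\mathcal C_{\mu,\mathrm{basic}}(X)^{\mathrm{SO}(m)}$. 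Here I would invoke Lemma~\ref{l-004}: the right-hand side equals $\mathcal A(Y)1\oplus\mathcal A(Y)\beta$, and since $F_{m/2}$ is stable under wedging with forms pulled back from $Y$, it suffices to show $1\in F_{m/2}$ and $\beta\in F_{m/2}$. For this, observe that if $\gamma$ is closed with $\mu^{m/2}\gamma\in\mathcal A(X)$ then $\mu^{m/2-1}d\mu\wedge\gamma=\tfrac{2}{m}\,d(\mu^{m/2}\gamma)\in\mathcal A(X)$, so such a $\gamma$ lies in $F_{m/2}$; this applies to $\gamma=1$ (using $m/2\ge1$) and to $\gamma=\beta$, which is closed and has $\mu^{m/2}\beta\in\mathcal A(X)$ directly from its definition in Prop.~\ref{p-01}. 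This completes the proof.

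The genuinely delicate point is the bookkeeping of the first paragraph: one must verify that every quasi-isomorphism appearing in the appendix admits a refinement compatible with $F_\bullet$ and that these refinements really sit in commutative squares with the original maps, so that the ladder argument is valid. Lemmas~\ref{l-007} and~\ref{l-008} are tailored to exactly this, so I expect it to be routine once stated carefully; the only substantive input is the elementary "stabilization at $p=m/2$" computation at the end, which is precisely where the hypothesis that $m$ is even (so that $m/2\in\mathbb Z_{\ge1}$) and the specific level $m/2$ of the tame subcomplex come into play.
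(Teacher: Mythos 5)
Your proposal is correct and follows essentially the same route as the paper: it runs the appendix's chain of reductions through the filtration (via Lemmas \ref{l-007} and \ref{l-008}), and concludes because $1$, $\beta$, and $d\mu/\mu\wedge(\cdot)$ lie in $F_{m/2}$, which is an $\mathcal A(Y)$-module, so the $F_{m/2}$ subcomplexes compute the same cohomology as in Lemma \ref{l-004} and Corollary \ref{c-01}. The only (harmless) embellishment is your observation that closedness of $\gamma$ together with smoothness of $\mu^{m/2}\gamma$ already forces the second tameness condition.
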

\begin{proof}
  The differential forms $d\mu/\mu$, $\beta$, $d\mu/\mu\wedge\beta$
  are all in $F_{\frac m2}\mathcal A_\mu(X)$, which is preserved by
  multiplication by (pull-backs of forms in) $\mathcal A(Y)$. Thus in
  Lemma \ref{l-004} and Corollary \ref{c-01} we can replace the
  complexes by their $F_{\frac m2}$ subcomplexes.
\end{proof}


\begin{thebibliography}{XX}
\bibitem{FelderKazhdan2016} Giovanni Felder and David Kazhdan, {\it
    Divergent Integrals, Residues of Dolbeault Forms, and Asymptotic
    Riemann Mappings, } \texttt{arXiv:1602.02738 [math.CV]},
  Int. Math. Res. Notices, first published online,
  doi:10.1093/imrn/rnw190.
\bibitem{GelfandShilov} I.M. Gelfand and G.E. Shilov.  {\it
    Generalized functions.  Vol. 1.  Properties and operations, }
  translated from the 1958 Russian original by Eugene Saletan, AMS
  Chelsea Publishing, Providence, RI, 2016.
\bibitem{Hadamard} Jacques Hadamard, {\it Lectures on Cauchy's problem
    in linear partial differential equations, } Dover Publications,
  New York, 1953, originally published by Yale University Press in
  1923.
\bibitem{ItzyksonZuber} Claude Itzykson and Jean-Bernard Zuber, {\it
    Quantum field theory, } McGraw-Hill International Book Co., New
  York, 1980.
\bibitem{Witten} Edward Witten, {\it String perturbation theory
    revisited, } \texttt{arXiv:1209.5461 [hep-th]}.
\end{thebibliography}
\end{document}